

\documentclass{ims9x6}

\usepackage{bookmark,dsfont,cite}

\DeclareMathOperator{\supp}{supp} 
\DeclareMathOperator{\Tr}{Tr} 

\def\bq{\begin{eqnarray}}
\def\eq{\end{eqnarray}}

\def\nn{\nonumber}
\def\eps{\varepsilon}
\newcommand{\norm}[1]{\left\lVert #1 \right\rVert}

\newcommand\1{{\ensuremath {\mathds 1} }}

\def\cF{\mathcal{F}}
\def\cE{\mathcal{E}}
\def\cK{\mathcal{K}}
\def\R{\mathbb{R}}
\def\d{{\rm d}}
\newcommand{\abs}[1]{\lvert#1\rvert}
\newcommand{\I}{\mathrm{i}}  

\newenvironment{Proof}[1][Proof]{%
\par\addvspace{12pt plus3pt minus3pt}\global\logotrue%
\noindent{\bf{#1}:\hskip.5em}\ignorespaces}{%
    \par\iflogo\vskip-\lastskip
    \vskip-\baselineskip\prbox\par
    \addvspace{12pt plus3pt minus3pt}\fi}

\makeatletter
\renewcommand{\ps@plain}{%
  \renewcommand{\@oddhead}{\hfil\footnotesize%
    \raisebox{30pt}[0pt][0pt]{\parbox{300pt}{\centering%
      A contribution to the Proceedings of the\\{}%
      Workshop on Density Functionals for Many-Particle Systems\\{}
      2--29 September 2019, Singapore}}\hfil}%
  \renewcommand{\@evenhead}{\@oddhead}%
  \renewcommand{\@oddfoot}{\hfil\footnotesize%
        \raisebox{-8pt}[0pt][0pt]{\thepage}\hfil}%
  \renewcommand{\@evenfoot}{\@oddfoot}%
}
\makeatother


\begin{document}

\setcounter{page}{1}  

\chapter{\uppercase{Direct methods to\\ Lieb--Thirring kinetic inequalities}}

\markboth{Phan Th\`anh Nam}%
{Direct methods to Lieb--Thirring kinetic inequalities}

\author{Phan Th\`anh Nam}

\address{Department of Mathematics, LMU Munich\\ Theresienstrasse 39, %
         80333 Munich, Germany\\ nam@math.lmu.de }

\begin{abstract}
We review some recent progress on Lieb--Thirring inequalities, focusing on
direct methods to kinetic estimates for orthonormal functions and
applications for many-body quantum systems. 
\end{abstract}

\section{Introduction}
The celebrated Weyl law states that the asymptotic behavior of negative
eigenvalues  
$$E_1(-\Delta+\lambda V)\le E_2(-\Delta+\lambda V) \le\cdots <0$$
of the Schr\"odinger operator $-\Delta +\lambda V(x)$ on $L^2(\R^d)$ can be
determined by the semiclassical approximation in the strong coupling limit
$\lambda\to \infty$, namely  
\begin{eqnarray*}
  \sum_{n\ge 1} \bigl|E_n(-\Delta+\lambda V)\bigr|^\kappa
  &\approx&\int_{\R^d}\int_{\R^d}
            \bigl|\bigl( |2\pi k|^2 + \lambda V(x) \bigr)_- \bigr|^{\kappa}
            \,\d k\, \d x\\
  &=&  L^{\rm cl}_{\kappa,d}
      \int_{\R^d} \bigl|\lambda V(x)_-\bigr|^{\kappa+d/2} \,\d x  
\end{eqnarray*}
for all $\kappa\ge 0$.
Here $t_-=\min\{t,0\}$ is the negative part of $t$ and 
\bq \label{eq:Lcl}
L^{\rm cl}_{\kappa,d} = \int_{\R^d}
\bigl|\bigl( |2\pi k|^2 + 1 \bigr)_- \bigr|^{\kappa}\, \d k =
 (4\pi)^{-\frac d 2} \frac{\Gamma(\kappa+1)}{\Gamma(\kappa+1+ \frac d 2)}. 
\eq
The case ${\kappa=0}$ corresponds to the number of negative eigenvalues.
By formally taking the  potential 
$$
V(x)= \begin{cases}
               -1        &\text{if } x\in \Omega,\\
               +\infty   &\text{if } x\notin \Omega              
            \end{cases} 
$$   
with an open bounded set $\Omega\subset \R^d$, we find that the number of
eigenvalues below $\lambda$ of the Dirichlet Laplacian on $L^2(\Omega)$ is
equal to  
$$
L^{\rm cl}_{0,d} |\Omega|\lambda^{d/2}
+ o\bigl(\lambda^{d/2}\bigr)_{\lambda\to \infty}. 
$$
The latter formula was first proved by Weyl in 1911 \cite{Weyl-11,Weyl-12}.
We refer to \cite[Chapter 12]{LieLos-01} for further discussion on Weyl's law. 

Lieb--Thirring inequalities are non-asymptotic estimates for eigenvalues of
Schr\"odinger operators which agree with the semiclassical approximation,
possibly up to a universal constant factor.  

\begin{theorem}\mbox{\normalfont(Lieb--Thirring inequalities)}\label{thm0:LT}
Let $d \ge 1$ and $\kappa \ge 0$. Let $V:\R^d \to \R$ be a real-valued
potential such that $V_- \in L^{\kappa+d/2}(\R^d)$.
Assume that the Schr\"odinger operator $-\Delta+V(x)$ on $L^2(\R^d)$ has
negative eigenvalues $\bigl\{E_n(-\Delta+V)\bigr\}_{n\ge 1}$.
Then
\bq \label{eq:intro-LT-ineq}
\sum_{n\ge 1} \bigl|E_n(-\Delta+V)\bigr|^\kappa
\le L_{\kappa,d} \int_{\R^d} \bigl|V(x)_-\bigr|^{\kappa+\frac d 2} \,\d x
\eq
for a constant $L_{\kappa,d}\in (0,\infty)$ independent of $V$, provided that
\[
		\begin{cases}
		\kappa \ge 0   &\mathrm{if}\ d\ge 3,\\
		\kappa>0       &\mathrm{if}\ d= 2,\\
		\kappa \ge 1/2 &\mathrm{if}\ d=1.
		\end{cases}	
	\]
\end{theorem}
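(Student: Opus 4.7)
The plan is to pass to the dual \emph{kinetic} formulation of \eqref{eq:intro-LT-ineq}, prove that directly, and recover the eigenvalue bound by a Legendre-type duality, then extend the range of $\kappa$ by a monotonicity trick. A key reduction is that, up to the value of the constant, the case $\kappa = 1$ of \eqref{eq:intro-LT-ineq} is equivalent to the \emph{Lieb--Thirring kinetic inequality}
\[
\sum_{n} \int_{\R^d} |\nabla u_n(x)|^2 \,\d x
\;\ge\; K_d \int_{\R^d} \rho(x)^{1+2/d}\,\d x
\]
for every $L^2(\R^d)$-orthonormal family $\{u_n\}$ with density $\rho(x) = \sum_n |u_n(x)|^2$. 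One direction uses the variational principle combined with the pointwise optimisation of $K_d\rho^{1+2/d} + V\rho$ in $\rho \ge 0$; the other direction follows by testing \eqref{eq:intro-LT-ineq} with $V = -c\,\rho^{2/d}$ for a suitable $c$ and the eigenfunctions of $-\Delta+V$.

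First I would prove the kinetic inequality directly. The simplest route is a momentum-localisation argument: split each $u_n$ into its low- and high-frequency pieces via a Fourier cut-off at scale $R$, use that the low-frequency spectral projection has diagonal kernel of size $\sim R^d$ to control the low-frequency contribution to $\rho(x)$, and optimise $R = R(x)$ locally. A technically cleaner alternative is Rumin's layer-cake argument, which bypasses any Besov machinery. Either yields a dimension-dependent constant $K_d$, and hence \eqref{eq:intro-LT-ineq} for $\kappa = 1$ in all $d \ge 1$.

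To cover $\kappa > 1$ I would apply the Aizenman--Lieb monotonicity identity: for $\kappa > \kappa_0 \ge 0$ and $t \in \R$,
\[
|t_-|^{\kappa}
= C_{\kappa,\kappa_0}\int_0^\infty s^{\kappa - \kappa_0 - 1}\,|(t+s)_-|^{\kappa_0}\,\d s ,
\]
so that substituting $t = E_n(-\Delta+V)$ (equivalently, shifting $V \mapsto V+s$) and integrating \eqref{eq:intro-LT-ineq} at exponent $\kappa_0 = 1$ against $\d s$ promotes it to every $\kappa \ge 1$. The remaining ranges require independent input: for $\kappa = 0$ and $d \ge 3$ (the Cwikel--Lieb--Rozenblum bound) I would use the Birman--Schwinger principle, translating negative eigenvalues of $-\Delta+V$ into eigenvalues $\ge 1$ of $|V_-|^{1/2}(-\Delta)^{-1}|V_-|^{1/2}$ and bounding their number in a weak Schatten class of order $d/2$ via the explicit kernel of $(-\Delta)^{-1}$; Aizenman--Lieb from $\kappa_0 = 0$ then covers $0 < \kappa < 1$ in $d \ge 3$. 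The CLR bound fails in $d = 2$, so the range $\kappa > 0$ there requires a separate Solomyak-type weak-$L^p$ argument. For $d = 1$, $\kappa = 1/2$, I would invoke the Hundertmark--Lieb--Thomas identity, based on a Schatten-norm estimate for a symmetrised Birman--Schwinger-type operator, and bootstrap $\kappa > 1/2$ by Aizenman--Lieb.

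The main obstacle is the direct proof of the kinetic inequality itself: it encodes the Pauli-type effect of $L^2$-orthogonality that distinguishes Lieb--Thirring from ordinary Sobolev estimates, and no truly short proof is known; the sharp constant $K_d$ remains a famous open question. The second-hardest step is the CLR endpoint in $d \ge 3$, whose natural proofs are genuinely operator-theoretic (Schatten-class estimates, heat-kernel bounds, or Rumin's more recent approach) and cannot be reduced to a kinetic statement on the eigenfunctions.
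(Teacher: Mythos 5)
Your outline is a valid strategy, but it diverges from the route this paper actually takes, in an instructive way. The paper does not present Theorem~\ref{thm0:LT} as a consequence of the Birman--Schwinger principle at all; instead it proves the more general fractional version (Theorem~\ref{thm:LT-ev-frac}, which reduces to the statement when $s=1$) entirely from Rumin-type kinetic inequalities plus Aizenman--Lieb monotonicity. Concretely: for $d\ge 3$ the paper obtains the CLR endpoint $\kappa=0$ directly, not via weak Schatten estimates on $|V_-|^{1/2}(-\Delta)^{-1}|V_-|^{1/2}$, but via Lieb's kinetic inequality (the $\kappa=s$ case of \eqref{eq:LT-kinetic-s-d-kappa}, giving $N \ge K \int \rho^{d/(d-2s)}$ for $(-\Delta)^{s/2}u_n$ orthonormal) combined with Frank's duality argument in the eigenspace $W$; Aizenman--Lieb then covers all $\kappa>0$. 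This makes the CLR case a corollary of the ``direct method'' rather than a separate operator-theoretic input. For $d=2$ the paper does not appeal to a Solomyak weak-$L^p$ bound: it proves the low-dimensional kinetic inequality of Theorem~\ref{thm:LT-low}, which in the case $d=2s$ replaces $\int\rho^{1+2/d}$ by $\int \rho\,e^{\alpha\rho}$, and then deduces all $\kappa>0$ by a Legendre-transform argument. Your proposal is also correct to flag that the $d=1$, $\kappa=1/2$ endpoint needs Hundertmark--Lieb--Thomas: the paper likewise does not reprove it and cites it. Two further remarks: bootstrapping from $\kappa=1$ via the standard kinetic inequality is redundant in $d\ge3$ once you have CLR at $\kappa=0$ (Aizenman--Lieb upward from $0$ supersedes it), while in $d\le2$ it leaves exactly the gap $\kappa\in(0,1)$, resp.\ $[1/2,1)$, that you correctly identify as needing separate input; and your ``separate Solomyak-type argument'' for $d=2$ is left at the level of a pointer, whereas the paper gives a self-contained derivation. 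In short, what the paper buys with its approach is uniformity (every non-critical case from the same family of Bessel-inequality momentum decompositions), and what your approach buys is access to the sharp $d=1$ constant and a more classical pedigree for the CLR endpoint.
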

      
This result was first proved  by Lieb and Thirring in 1975 for ${\kappa=1}$
and ${d=3}$ \cite{LieThi-75}.
Then they extended the inequality to all ${\kappa>0}$ when ${d\ge 2}$ and all
${\kappa>1/2}$ when ${d=1}$ \cite{LieThi-76}.
The case ${\kappa=0}$ when ${d\ge 3}$, often referred to as the
Cwikel--Lieb--Rozenblum inequality, was proved independently in
\cite{Cwikel-77,Lieb-76,Rozenblum-76}.
The last critical case ${\kappa=1/2}$ when ${d=1}$ was solved by Weidl in 1996
\cite{Weidl-96}.
The range of $\kappa$ is optimal. 

In general, the constant $L_{\kappa,d}$ in \eqref{eq:intro-LT-ineq} is not
necessarily the same as the semiclassical constant $L_{\kappa,d}^{\rm cl}$
in \eqref{eq:Lcl}.  
Determining the sharp Lieb--Thirring constant is an important topic in
mathematical physics; see e.g. \cite{FraGonLew-20} for a recent study.
So far the sharp value of $L_{\kappa,d}$ is only known in two cases: 
\begin{itemize}
\item $L_{\kappa,d}= L^{\rm cl}_{\kappa,d}$ for $\kappa \ge 3/2$, $d\ge 1$. It
  was proved for $d=1$ by Lieb and Thirring in 1976 \cite{LieThi-76}, and
  extended to all $d\ge 1$ by Laptev and Weidl in 2000 \cite{LapWei-00}.  
\item $L_{1/2,1}=2L_{1/2,1}^{\rm cl}$ for $\kappa=1/2$, $d=1$. It was proved
  by Hundertmark, Lieb and Thomas in 1998 \cite{HunLieTho-98}.  
\end{itemize}
We refer to \cite{Lieb-00,LieLos-01,Simon-05,LieSei-10} for pedagogical
introductions to the subject and \cite{Frank-20} for a recent review  of
current research. 

In this short note we will focus on the case $\kappa=1$ (sum of eigenvalues)
which is particularly interesting due to its application to the study of the
ground state energy of Fermi gases.
By a duality argument \cite{LieThi-75,LieThi-76}, the bound on the sum of
eigenvalues can be translated to a kinetic inequality for orthonormal
functions in $L^2(\R^d)$.

\begin{theorem}\mbox{\normalfont(Lieb--Thirring kinetic inequality)}
\label{thm:LT-kinetic-intro}
Let ${d \ge 1}$. For any ${N\ge 1}$, let $\{u_n\}_{n=1}^N$ be orthonormal
functions in $L^2(\R^d)$ and define
${\rho(x)=\sum_{n=1}^N \bigl|u_n(x)\bigr|^2}$.
Then 
\bq \label{eq:intro-LT-ineq-kin}
\sum_{n=1}^N \int_{\R^d} \bigl|\nabla u_n(x)\bigr|^2\,\d x
\ge K_d \int_{\R^d}  \rho(x)^{1+\frac {2}{d}}\,\d x.
\eq
The constant $K_d>0$ is related to the sharp constant $L_{1,d}$ in
\eqref{eq:intro-LT-ineq} by 
\bq \label{eq:L-K}
K_d \left(1+\frac{2}{d}\right) = \left[ L_{1,d} \left(1+\frac{d}{2}\right)
\right]^{-2/d}. 
\eq
\end{theorem}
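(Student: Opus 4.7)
The plan is to derive \eqref{eq:intro-LT-ineq-kin} from \eqref{eq:intro-LT-ineq} with $\kappa=1$ by a variational duality argument, and then optimize the test potential to obtain the precise relation \eqref{eq:L-K} between $K_d$ and $L_{1,d}$.

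First I would fix any real potential $V$ with $V_- \in L^{1+d/2}(\R^d)$ and consider the operator $H = -\Delta + V$ on $L^2(\R^d)$. The projection $P = \sum_{n=1}^N |u_n\rangle\langle u_n|$ satisfies $0 \le P \le 1$, so the variational principle for Schatten operators gives
\[
  \sum_{n=1}^N \langle u_n, H u_n\rangle = \Tr(HP) \ge \Tr(H\1_{H<0}) = -\sum_{n\ge 1}\bigl|E_n(H)\bigr|.
\]
Expanding the left-hand side as $\sum_n \int|\nabla u_n|^2 + \int V\rho$ and applying Theorem~\ref{thm0:LT} with $\kappa=1$ to the right-hand side yields the universal lower bound
\[
  \sum_{n=1}^N \int_{\R^d} \bigl|\nabla u_n\bigr|^2 \,\d x
  \ge -\int_{\R^d} V(x)\,\rho(x)\,\d x - L_{1,d}\int_{\R^d}\bigl|V(x)_-\bigr|^{1+d/2}\,\d x.
\]

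Next I would specialize to $V(x) = -c\,\rho(x)^{2/d}$ for a positive constant $c$ to be chosen. A direct computation gives $-\int V\rho = c\int \rho^{1+2/d}$ and $\int |V_-|^{1+d/2} = c^{1+d/2} \int \rho^{1+2/d}$, so the bound becomes
\[
  \sum_{n=1}^N \int_{\R^d} \bigl|\nabla u_n\bigr|^2\,\d x
  \ge \bigl(c - L_{1,d}\, c^{1+d/2}\bigr) \int_{\R^d} \rho^{1+2/d}\,\d x.
\]
Maximizing the coefficient in $c>0$ leads to $c^{d/2} = [L_{1,d}(1+d/2)]^{-1}$, and a short calculation shows that the maximal value equals $K_d$ defined by \eqref{eq:L-K}, establishing \eqref{eq:intro-LT-ineq-kin} with exactly the claimed constant.

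The only nontrivial point is to ensure that this choice of $V$ is admissible, i.e.\ $\rho^{2/d} \in L^{1+d/2}(\R^d)$, equivalently $\rho \in L^{1+2/d}(\R^d)$. If $\int \rho^{1+2/d} = \infty$, the desired inequality holds trivially; otherwise we can, for instance, truncate by replacing $\rho$ with $\min\{\rho, M\}\1_{|x|\le R}$ in the definition of $V$, apply the preceding argument, and pass to the limit $M,R\to\infty$ by monotone convergence. This truncation step is the only real technical obstacle; the rest is the duality plus an elementary one-variable optimization.
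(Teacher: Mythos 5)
Your proof is correct, and it is exactly the classical Lieb--Thirring duality argument of \cite{LieThi-75,LieThi-76}, which the paper cites as the source of Theorem~\ref{thm:LT-kinetic-intro} but does not spell out in the introduction. The computation checks out: with $V=-c\rho^{2/d}$ the lower bound is $(c-L_{1,d}c^{1+d/2})\int\rho^{1+2/d}$, the critical $c$ satisfies $c^{d/2}=[L_{1,d}(1+d/2)]^{-1}$, and the maximal coefficient is $c\,\frac{d/2}{1+d/2}=[L_{1,d}(1+d/2)]^{-2/d}\cdot\frac{d}{d+2}$, which matches \eqref{eq:L-K}. The trace inequality $\Tr(HP)\ge\Tr(H\1_{H<0})$ is justified because $P$ is a finite-rank orthogonal projection, and your truncation remark handles admissibility of $V$.

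The contrast with the paper is worth noting. The paper's entire point in Sections \ref{sec:2} and \ref{sec:3} is to avoid this route: Rumin's momentum-space decomposition and the Lundholm--Solovej local exclusion both prove the kinetic inequality \emph{directly}, without invoking the eigenvalue bound \eqref{eq:intro-LT-ineq} as input, and then deduce \eqref{eq:intro-LT-ineq} as a corollary (Section \ref{sec:2.3}). Your argument is shorter and gives the exact constant relation \eqref{eq:L-K} essentially for free, but it requires Theorem~\ref{thm0:LT} as a black box; the direct methods are self-contained and generalize more readily (fractional powers, interacting systems, functions vanishing on diagonals), at the price of non-sharp constants. One minor logical point: your argument establishes the inequality with $K_d$ equal to the right side of \eqref{eq:L-K} (showing the sharp kinetic constant is at least this), but to read \eqref{eq:L-K} as an identity between the two sharp constants one also needs the reverse implication from the kinetic bound back to the eigenvalue bound, by the same test-potential optimization run in the opposite direction; this is standard and is implicit in the theorem's phrasing.
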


Here the orthogonality of $\{u_n\}_{n=1}^N$ is crucial.
Without that assumption, we only have at best
\bq \label{eq:GN-intro}
\sum_{n=1}^N \int_{\R^d} \bigl|\nabla u_n(x)\bigr|^2\,\d x
\ge \frac{C_{\rm GN}(d)}{N^{2/d}}
\int_{\R^d}  \rho(x)^{1+\frac {2}{d}}\,\d x 
\eq
where $C_{\rm GN}(d)$ is the optimal Gagliardo--Nirenberg constant
\begin{align} \label{eq:C-GN}
	C_{\rm GN}(d):= \inf_{\substack{u \in H^1(\R^d)\\ 
  \norm{u}_{L^2(\R^d)} = 1}}
  \frac{\int_{\R^d} \bigl|\nabla u(x)\bigr|^2\,\d x}
       {\int_{\R^d} \bigl|u(x)\bigr|^{2 (1+\frac{2s}{d})}\,\d x}. 
\end{align}
In contrast, the constant $K_d$ in \eqref{eq:intro-LT-ineq-kin} is
independent of $N$, making it very useful to study quantum systems with many
particles.
On the physical point of view, while the Sobolev and the Gagliardo--Nirenberg
inequalities are quantitative versions of the uncertainty principle,
Lieb--Thirring inequalities are deeper as they involve the exclusion
principle as well.  
  
Historically, the kinetic inequality \eqref{eq:intro-LT-ineq-kin} is a key
ingredient in the short proof of the stability of matter in
\cite{LieThi-75}.
In \cite{LieThi-75}, Lieb and Thirring derived \eqref{eq:intro-LT-ineq-kin}
from the eigenvalue bound \eqref{eq:intro-LT-ineq}.
The advantage of studying \eqref{eq:intro-LT-ineq} is that it can be reduced
to counting the number of eigenvalues $\ge 1$ of the Birman--Schwinger operator
$$
K_E=\sqrt{|V_-|} (-\Delta +E )^{-1}\sqrt{|V_-|}, \quad E>0.
$$
This method has been at the heart of most semiclassical eigenvalue estimates since the
1970s.   

Surprisingly, a direct proof of the kinetic bound 
\eqref{eq:intro-LT-ineq-kin} had not been available for a long time until
the work of Eden and Foias in one dimension in 1991 \cite{EdeFoi-91}, which was
extended to all dimensions by Dolbeault, Laptev, and Loss in 2008
\cite{DolLapLos-08}. 
More recently, different proofs were found by Rumin in 2010 
\cite{Rumin-10,Rumin-11}
and  Lundholm and Solovej in 2013 \cite{LunSol-13,LunSol-14}.
These new approaches have immediately led to several interesting developments
in the field.  

\enlargethispage{1.0\baselineskip}

In this note, we will review the latter two direct approaches and explain some
new results originated from them in a rather self-consistent manner.
Here is a summary of the next sections. 
\begin{itemize}
\item
In Section \ref{sec:2} we discuss Rumin's method.
In principle, this approach is based on a suitable decomposition of the
Laplacian in the momentum space.
As a warm-up, we represent a short proof of the standard Sobolev inequality by
Chemin and Xu \cite{CheXu-97} in Section \ref{sec:2.1}.
In some sense, Rumin's idea is to combine this technique with Bessel's
inequality. 
The kinetic inequality \eqref{eq:intro-LT-ineq-kin} and its extensions are
proved in Section \ref{sec:2.2}.
We explain the implication to the eigenvalue bound \eqref{eq:intro-LT-ineq} in
Section \ref{sec:2.3}.
Interestingly, Rumin's method  can be modified to give the currently best
bound for the constant in \eqref{eq:intro-LT-ineq-kin}.
This result from \cite{FraHunJexNam-18}  will be reviewed in Section
\ref{sec:2.4}.
Some further results obtained by related arguments are mentioned in Section
\ref{sec:2.5}.  
\item
In Section \ref{sec:3} we discuss the Lundholm--Solovej method.
This approach is based on a suitable decomposition of the Laplacian in the
position space.
The key observation is that Lieb--Thirring inequalities can be deduced from a
local version of the exclusion principle.
Interestingly, such a local exclusion bound was used also by Dyson and Lenard
in their first proof of the stability of matter \cite{DysLen-67}.
The approach in \cite{LunSol-13} was originally proposed to study anyons
(particles with only fractional statistics), but it has led to several new
Lieb--Thirring inequalities.
We will explain the general strategy in Section \ref{sec:3.1} and review some
new results in the next subsections.  
\end{itemize}
Note that there are other direct approaches which are not covered here, see
e.g. Sabin's work \cite{Sabin-16}.
Moreover, although the methods represented in this note seem fascinating,
there are various extensions of Lieb--Thirring inequalities that have to be
treated differently.
We refer to Frank's review \cite{Frank-20} for further aspects of the subject. 

To end the introduction, let us mention that the methods covered here are
general enough to handle fractional Lieb--Thirring inequalities with rather
little effort.
Therefore, we will in most cases aim at this generalization.
In particular, a more general version of Theorem \ref{thm:LT-kinetic-intro} is

\begin{theorem}\mbox{\normalfont(Fractional Lieb--Thirring kinetic inequality)}
\label{thm:intro-LT-kin-frac}
Let $d\ge 1$ and $s>0$.
For any orthonormal functions $\{u_n\}_{n=1}^N$  in $L^2(\R^d)$ with density
$\rho(x)=\sum_{n=1}^N |u_n(x)|^2$, we have 
\begin{align} \label{eq:LT-ineq-frac} 
  \sum_{n=1}^N \bigl\| (-\Delta)^{s/2} u_n \bigr\|_{L^2(\R^d)}^2
  \ge K_{d,s} \int_{\R^d} \rho(x)^{1+\frac {2s}{d}}\,\d x. 
\end{align}
The constant $K_{d,s}>0$ is independent of $N$ and $\{u_n\}_{n=1}^N$. 
\end{theorem}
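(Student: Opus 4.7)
The plan is to follow Rumin's method, as will be developed in Section \ref{sec:2.2}. Denote by $P_{>t}$ the spectral projection of $-\Delta$ onto $(t,\infty)$ and set $P_{\le t} = 1 - P_{>t}$. I would begin with the layer-cake identity $|k|^{2s} = s\int_0^\infty t^{s-1} \1(|k|^2 > t)\,\d t$, which via Plancherel yields
\begin{align*}
\sum_{n=1}^N \bigl\|(-\Delta)^{s/2} u_n\bigr\|_{L^2}^2
= s\int_0^\infty t^{s-1} \sum_{n=1}^N \|P_{>t} u_n\|_{L^2}^2 \,\d t.
\end{align*}
The task thus reduces to producing a pointwise-in-$x$ lower bound on $\sum_n |P_{>t} u_n(x)|^2$ and then integrating against the weight $s\, t^{s-1}$.

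For the pointwise bound, I would decompose $u_n(x) = P_{\le t} u_n(x) + P_{>t} u_n(x)$ and apply the reverse triangle inequality in the $\ell^2$-norm over the index $n$:
\begin{align*}
\left(\sum_{n=1}^N \bigl|P_{>t} u_n(x)\bigr|^2\right)^{1/2}
\ge \rho(x)^{1/2} - \left(\sum_{n=1}^N \bigl|P_{\le t} u_n(x)\bigr|^2\right)^{1/2}.
\end{align*}
Orthonormality now enters through Bessel's inequality applied to the kernel $P_{\le t}(x,y)$: writing $P_{\le t} u_n(x) = \int_{\R^d} P_{\le t}(x,y) u_n(y)\,\d y$, Bessel on the orthonormal family $\{u_n\}$ gives
\begin{align*}
\sum_{n=1}^N \bigl|P_{\le t} u_n(x)\bigr|^2 \le \int_{\R^d} |P_{\le t}(x,y)|^2\,\d y = P_{\le t}(x,x) = c_d\, t^{d/2},
\end{align*}
where $c_d$ is an explicit dimensional constant (proportional to $|B_1|$). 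Taking positive parts,
\begin{align*}
\sum_{n=1}^N \bigl|P_{>t} u_n(x)\bigr|^2 \ge \bigl(\rho(x)^{1/2} - \sqrt{c_d}\, t^{d/4}\bigr)_+^2.
\end{align*}

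To conclude, I would substitute this bound back, interchange $\d t$ and $\d x$, and for each fixed $x$ perform the change of variable $t = (\rho(x)/c_d)^{2/d}\tau$, which factors the $\rho$-dependence out of the $t$-integration and reduces the latter to a Beta-type integral on $(0,1)$. This delivers
\begin{align*}
\sum_{n=1}^N \bigl\|(-\Delta)^{s/2} u_n\bigr\|_{L^2}^2
\ge K_{d,s} \int_{\R^d} \rho(x)^{1+\frac{2s}{d}} \,\d x,
\qquad K_{d,s} = s\, c_d^{-2s/d} \int_0^1 \tau^{s-1}(1 - \tau^{d/4})^2\,\d\tau,
\end{align*}
which is finite and strictly positive (since $s>0$) and manifestly independent of $N$ and of $\{u_n\}$. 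The genuinely new step, and therefore the main obstacle to locate, is the combination of the reverse triangle inequality with Bessel's inequality: it is exactly this pairing that converts orthonormality of $\{u_n\}$ into an $N$-independent pointwise estimate, whereas dropping it would only recover the Gagliardo--Nirenberg type bound \eqref{eq:GN-intro} with its $N$-dependent constant. Everything else is either the layer-cake representation, Plancherel, or a one-dimensional integration.
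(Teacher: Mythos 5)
Your proposal is correct and is essentially Rumin's argument, which is exactly the route the paper takes in Section \ref{sec:2.2} (the case $\kappa=0$ of \eqref{eq:LT-kinetic-s-d-kappa}): a one-parameter spectral split of $(-\Delta)^s$, Bessel's inequality on the low-momentum kernel to get a uniform $N$-independent pointwise bound, the reverse triangle inequality in $\ell^2_n$, and a scaling/Beta integral in the cutoff parameter. The only cosmetic difference is your parametrization of the cutoff by $t=|k|^2$ with weight $s\,t^{s-1}\,\d t$ versus the paper's direct use of $E=|2\pi k|^{2s}$; this changes the bookkeeping but not the argument.
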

\textbf{Notation.} 
We will often denote by $C$ a general positive constant whose value may change
from line to line.
In some cases, the dependence on key parameters will be included in
the notation.
However, it is important that all constants are always independent of the number of
variables $N$.
Sometimes we write $\int_{\Omega}f$ instead of $\int_{\Omega}f(x)\,\d x$.

\section{Rumin method} \label{sec:2}

\subsection{A simple proof of Sobolev inequality} \label{sec:2.1}
Recall the definition of the Sobolev space  
\[ H^s(\mathbb{R}^d):= \left\{ u \in L^2(\mathbb{R}^d) \,\big| \, |k|^s
    \widehat u(k) \in L^2(\mathbb{R}^d) \right\} \]
with an arbitrary power $s\ge 0$ (not necessarily an integer).
Here we use the following convention of the Fourier transform \cite{LieLos-01} 
$$
\widehat u(k)= \int_{\R^d} e^{-2\pi \I k \cdot x} u(x) \d x
$$
with $\I$ the imaginary unit (${\I^2=-1}$).
Thus $H^s(\mathbb{R}^d)$ is a Hilbert space with the inner product
\[ \langle{u,v}\rangle_{H^s(\R^d)}
  = \int\limits_{\mathbb{R}^d} \overline{\widehat u(k)}\,
  \widehat v(k) \bigl(1+|2\pi k|^{2s}\bigr)\, \d k.  \]
On $H^s(\R^d)$, the weak derivatives can be defined via the Fourier transform
$$
\widehat{D^\alpha u} (k) =  (2\pi \I k)^{\alpha}  \widehat {u} (k) \in L^2(\R^d)
$$
for any ${\alpha=(\alpha_1,\dots,\alpha_d)\in \{0,1,\dots\}^d}$ with
${|\alpha|=\alpha_1+\cdots+\alpha_d \le s}$.
In particular,
\begin{eqnarray*}
  \bigl\langle u, (-\Delta)^{s} u \bigr\rangle_{L^2(\R^d)}
  &=& \bigl\| (-\Delta)^{s/2} u \bigr\|_{L^2(\R^d)}^2 \\
  &=& \int_{\R^d} |2\pi k|^{2s} \bigl|\widehat u(k)\bigr|^2\, \d k,
      \quad \forall u\in H^s(\R^d).   
\end{eqnarray*}

As a warm-up, let us consider 
\begin{theorem}\mbox{\normalfont(Standard Sobolev inequality)}
If $d >2s \ge 0$, then  
$$
\|f\|_{L^{\frac{2d}{d-2s} }(\R^d)}
\le C_{d,s} \bigl\| (-\Delta)^{s/2} f\bigr\|_{L^2(\R^d)},
\quad \forall f\in H^s(\R^d). 
$$
\end{theorem}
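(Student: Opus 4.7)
The plan is to follow the high/low frequency splitting attributed to Chemin--Xu, combined with the layer-cake representation of the $L^p$ norm. Set $p = 2d/(d-2s)$ and write
\[
\|f\|_{L^p(\R^d)}^p = p \int_0^\infty t^{p-1} \bigl|\{|f| > t\}\bigr|\, \d t.
\]
For each $t > 0$, I will split $f = f_{<E} + f_{>E}$ in Fourier space via
$\widehat{f_{<E}}(k) = \widehat f(k) \1_{|2\pi k| \le E}$, where the cutoff $E = E(t)$ is chosen to make the low-frequency part small in $L^\infty$.

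The first building block is the pointwise bound on $f_{<E}$. By Fourier inversion and Cauchy--Schwarz with the weight $|2\pi k|^{-s}\cdot |2\pi k|^{s}$,
\[
\|f_{<E}\|_{L^\infty(\R^d)} \le \int_{|2\pi k|\le E} |\widehat f(k)|\,\d k \le \left(\int_{|2\pi k|\le E} |2\pi k|^{-2s}\,\d k\right)^{1/2} \bigl\|(-\Delta)^{s/2} f\bigr\|_{L^2}.
\]
Here is exactly where the hypothesis $d > 2s$ is used: it guarantees the local integrability of $|2\pi k|^{-2s}$ near the origin. A direct computation yields $\|f_{<E}\|_{L^\infty} \le C_1 E^{(d-2s)/2} A$ with $A := \|(-\Delta)^{s/2} f\|_{L^2}$. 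I then choose $E(t)$ so that $C_1 E(t)^{(d-2s)/2} A = t/2$, i.e.\ $E(t) = \bigl(t/(2C_1 A)\bigr)^{2/(d-2s)}$, so that $\{|f| > t\} \subset \{|f_{>E(t)}| > t/2\}$.

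The second building block is Chebyshev's inequality together with the elementary high-frequency bound
\[
\bigl|\{|f_{>E(t)}| > t/2\}\bigr| \le \frac{4}{t^2}\,\|f_{>E(t)}\|_{L^2}^2 = \frac{4}{t^2} \int_{|2\pi k| > E(t)} |\widehat f(k)|^2\,\d k.
\]
Substituting into the layer-cake formula and swapping the order of integration by Fubini, the inner integral becomes $\int_0^{T(k)} t^{p-3}\,\d t$ where $T(k) = 2C_1 A |2\pi k|^{(d-2s)/2}$ is the value of $t$ for which $|2\pi k| = E(t)$. An exponent check gives $(d-2s)(p-2)/2 = 2s$, so the $k$-integral reconstructs exactly $\int |2\pi k|^{2s}|\widehat f(k)|^2\,\d k = A^2$, producing $\|f\|_{L^p}^p \le C_{d,s}^p A^p$.

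The main obstacle is really bookkeeping: one has to verify that the chosen $E(t)$ makes both the $L^\infty$ bound saturate and the Fubini integral converge. Convergence of the $t$-integral after Fubini requires $p > 2$, which is automatic from $s > 0$ (the edge case $s = 0$ being trivial), and convergence at the origin in the Cauchy--Schwarz estimate is precisely the condition $d > 2s$. Once the exponents are aligned, the constant $C_{d,s}$ emerges explicitly in terms of the volume of the unit ball and $p-2$.
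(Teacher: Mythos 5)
Your proof is correct, but it takes a genuinely different route from the one in the paper. The paper's argument (following Chemin--Xu, as set up for Rumin's generalization) decomposes the \emph{kinetic form} itself, writing $K = \|(-\Delta)^{s/2}u\|_{L^2}^2 = \int_{\R^d}\int_0^\infty |u^{E+}(x)|^2\,\d E\,\d x$, then lower-bounds the inner $E$-integral pointwise in $x$ via the triangle inequality $|u^{E+}(x)| \ge \bigl[\,|u(x)| - C\sqrt{K}\,E^{(d-2s)/(4s)}\bigr]_+$. Your argument instead decomposes the \emph{$L^p$ norm} via layer-cake, bounds the distribution function $|\{|f|>t\}|$ by Chebyshev applied to the high-frequency tail $f_{>E(t)}$, and reassembles $A^2 = \int |2\pi k|^{2s}|\widehat f|^2\,\d k$ after Fubini. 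Both proofs hinge on the same $L^\infty$ estimate for the low-frequency part via Cauchy--Schwarz with weight $|2\pi k|^{\pm s}$, which is where $d>2s$ enters, and both are correct and give the sharp exponents. The trade-off is that the paper's form of the argument is deliberately free of Chebyshev: lower-bounding $|u^{E+}(x)|$ pointwise is the step that survives when $u$ is replaced by an orthonormal family (Section \ref{sec:2.2}), where the scalar triangle inequality upgrades to the triangle inequality in $\mathbb{C}^N$ and the Cauchy--Schwarz bound upgrades to Bessel's inequality. Your Chebyshev step has no analogue for the density $\rho = \sum_n |u_n|^2$, so your proof, while cleaner as a derivation of Sobolev alone, would not serve as the warm-up for what follows.
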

\begin{proof}
The following proof is due to Chemin and Xu \cite{CheXu-97}.
We use the momentum decomposition based on the identity
$$
|2\pi k|^{2s} = \int_0^{\infty} \1\bigl(|2\pi k|^{2s}>E\bigr)\,\d E.  
$$
For every $u\in H^s(\R^d)$, by Plancherel's and Fubini's theorems we can write
\begin{align} \label{eq:K-uE+}
  K &:=\| (-\Delta)^{s/2} u \|_{L^2(\R^d)}^2
      = \int_{\R^3} |2\pi k|^{2s} |\widehat u (k)|^2\,\d k \nn\\
    &=   \int_{\R^3}\left(\int_0^\infty \1\bigl(|2\pi k|^{2s}>E\bigr)
      |\widehat u(k)|^2\,\d E\right)\d k\nn\\
    &=   \int_0^\infty\left(\int_{\R^3} \1\bigl(|2\pi k|^{2s}>E\bigr)
      |\widehat u(k)|^2\,\d k\right)\d E \nn\\
    &=   \int_0^\infty\left(\int_{\R^3} 
      |\widehat u^{E_+}(k)|^2\,\d k\right)\d E 
       = \int_{\R^3} \left(  \int_0^\infty |u^{E+}(x)|^2\,\d E\right)\d x 
\end{align}
where  the function $u^{E+}$ is defined via the Fourier transform
$$\widehat u^{E+}(k) = \1\bigl(|2\pi k|^{2s}>E\bigr) \widehat u(k).$$
When $d>2s$, by H\"older's inequality we have the uniform bound
\begin{align*}
  \bigl|u(x)- u^{E+}(x)\bigr|
   & = \biggl| \int_{\R^d}  e^{2\pi\I k \cdot x}
    \Bigl( \widehat{u}(k) - \widehat{ u}^{E+}(k)\Bigr)\,\d k\biggr|\\
   &=\biggl| \int_{\R^d} e^{2\pi\I k  \cdot x}
     \1\bigl(|2\pi k|^{2s} \le E\bigr) \widehat{u}(k)\,\d k \biggr| \\
& \le \biggl( \int_{\R^d}|2\pi k|^{2s} |\widehat{u}(k)|^2\,\d k\biggr)^{\!1/2}
  \biggl( \int_{\R^d}\frac{\1\bigl(|2\pi k|^{2s} \le E\bigr)}
          {|2\pi k|^{-s}}\,\d k\biggr)^{\!1/2}
     \\&= C \sqrt{K}  E^{\frac{d-2s}{4s}}
\end{align*}
with a finite constant $C=C(d,s)>0$.
By the triangle inequality, 
$$
\bigl|u^{E+}(x)\bigr| \ge \Bigl| \bigl|u(x)\bigr|
- \bigl|u(x)- u^{E+}(x)\bigr| \Bigr|
\ge \Bigl[ \bigl|u(x)\bigr| - C \sqrt{K} E^{\frac{d-2s}{4s}} \Bigr]_+
$$
where $t_+=\max\{t,0\}$ is the positive part of $t$.
Integrating over $E$ we get 
\begin{align*}
  \int_0^\infty \bigl|u^{E+}(x)\bigr|^2\,\d E
  &\ge \int_0^\infty \Bigl[ \bigl|u(x)\bigr|
    - C \sqrt{K} E^{\frac{d}{2s}-1} \Bigr]_+^2\,\d E \\
  &\ge C \bigl|u(x)\bigr|^{\frac{2d}{d-2s}}  K^{- \frac{2s}{d-2s}}.
\end{align*}
Inserting the latter bound in \eqref{eq:K-uE+}, we arrive at 
$$
K \ge C   K^{- \frac{2s}{d-2s}} \int_{\R^d}\bigl|u(x)\bigr|^{\frac{2d}{d-2s}}\,\d x,
$$
which is equivalent to the desired inequality.
\end{proof}

\subsection{Lieb--Thirring kinetic inequality} \label{sec:2.2}
The previous proof can be extended to bound the kinetic energy of a
family of orthonormal functions.  
\begin{theorem}\mbox{\normalfont(Lieb--Thirring kinetic inequality)}
Let ${d>2\kappa\ge 0}$ and ${s\ge 0}$
($\kappa$ and $s$ are not necessarily integers).
For any ${N\ge 1}$, let $\{(-\Delta)^{\kappa/2} u_n\}_{n=1}^N$ be orthonormal
functions in $L^2(\R^d)$ and denote $\rho(x)=\sum_{n=1}^N |u_n(x)|^2$.
Then
\bq \label{eq:LT-kinetic-s-d-kappa}
\sum_{n=1}^N \| (-\Delta)^{s/2} u_n \|_{L^2(\R^d)}^2
\ge K_{d,s,\kappa} \int_{\R^d}  \rho(x)^{1+\frac {2s}{d-2\kappa}} \,\d x.
\eq
The constant $K_{d,s,\kappa}$ is independent of $N$ and $\{u_n\}_{n=1}^N$. 
\end{theorem}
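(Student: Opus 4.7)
The plan is to adapt the Chemin--Xu momentum-decomposition argument of Section~\ref{sec:2.1} to the many-function setting, replacing the single-function Cauchy--Schwarz bound on the low-momentum part by an $\ell^2$-type estimate that exploits orthonormality via Bessel's inequality. Set $v_n := (-\Delta)^{\kappa/2} u_n$, so that $\{v_n\}_{n=1}^N$ is an orthonormal system in $L^2(\R^d)$, and for each $E>0$ decompose each $u_n = u_n^{E+} + u_n^{E-}$ in momentum space exactly as before. The layer-cake identity together with Plancherel and Fubini then gives
\begin{align*}
\sum_{n=1}^N \bigl\|(-\Delta)^{s/2} u_n\bigr\|_{L^2(\R^d)}^2
= \int_{\R^d} \int_0^\infty \sum_{n=1}^N \bigl|u_n^{E+}(x)\bigr|^2\,\d E\,\d x,
\end{align*}
reducing the task to a pointwise lower bound on $\sum_n |u_n^{E+}(x)|^2$ in terms of $\rho(x)$ and $E$.

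The new ingredient is an $x$-uniform upper bound on the complementary sum $\sum_n |u_n^{E-}(x)|^2$. Using $\widehat{u_n}(k) = |2\pi k|^{-\kappa}\widehat{v_n}(k)$ I write
\begin{align*}
u_n^{E-}(x) = \int_{\R^d} \overline{g_{x,E}(k)}\,\widehat{v_n}(k)\,\d k, \quad g_{x,E}(k) := e^{-2\pi \I k \cdot x}\,\1\bigl(|2\pi k|^{2s}\le E\bigr)\,|2\pi k|^{-\kappa},
\end{align*}
so that Bessel's inequality for the orthonormal family $\{\widehat{v_n}\}$ yields
\begin{align*}
\sum_{n=1}^N \bigl|u_n^{E-}(x)\bigr|^2 \le \|g_{x,E}\|_{L^2(\R^d)}^2 = \int_{|2\pi k|^{2s}\le E}|2\pi k|^{-2\kappa}\,\d k = C\,E^{(d-2\kappa)/(2s)}
\end{align*}
uniformly in $x$. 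This is precisely the step where the hypothesis $d>2\kappa$ is needed, ensuring the low-momentum singularity $|2\pi k|^{-2\kappa}$ is integrable near the origin.

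Applying Minkowski's inequality in $\ell^2(n)$ to $u_n = u_n^{E+} + u_n^{E-}$ then gives
\begin{align*}
\Bigl(\sum_{n=1}^N \bigl|u_n^{E+}(x)\bigr|^2\Bigr)^{1/2} \ge \Bigl[\rho(x)^{1/2} - C^{1/2} E^{(d-2\kappa)/(4s)}\Bigr]_+,
\end{align*}
and integrating the square of the right-hand side over $E\in(0,\infty)$ via the substitution $E = E_0 t$ with $C^{1/2}E_0^{(d-2\kappa)/(4s)} = \rho(x)^{1/2}$ produces a positive constant times $\rho(x)^{1+2s/(d-2\kappa)}$; integration over $x$ then yields \eqref{eq:LT-kinetic-s-d-kappa}. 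The main obstacle is conceptual rather than computational: one has to recognize that the correct surrogate for the pointwise Cauchy--Schwarz estimate of the Sobolev proof is a Bessel-type $L^2$-norm bound on the low-momentum cutoff kernel $g_{x,E}$, and that this forces the condition $d>2\kappa$. Once this is in place, the exponent $1 + 2s/(d-2\kappa)$ falls out automatically from the scaling, and the rest is bookkeeping.
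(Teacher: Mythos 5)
Your proposal is correct and follows essentially the same route as the paper: the same Plancherel--Fubini layer-cake representation, the same Bessel-inequality bound on the low-momentum cutoff (your $\|g_{x,E}\|_{L^2}^2$ is exactly the paper's uniform bound $CE^{(d-2\kappa)/(2s)}$), and the same $\ell^2$-triangle (Minkowski) inequality followed by integration in $E$. The only difference is cosmetic notation ($u_n^{E-}$ versus $u_n - u_n^{E+}$, and introducing $v_n=(-\Delta)^{\kappa/2}u_n$ explicitly), so the two proofs are the same argument.
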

The case $\kappa=0$ is Theorem \ref{thm:intro-LT-kin-frac} in the
Introduction.
The case $\kappa=s$ is Lieb's inequality \cite{Lieb-83b}, which implies the 
Cwikel--Lieb--Rozenblum inequality as we will see.
The following proof for ${\kappa=s}$ is due to Rumin \cite{Rumin-10} (see also
Frank \cite{Frank-14}).
The result for ${\kappa\not\in \{0,s\}}$ appears here for the first time.

\begin{proof} By Plancherel's and Fubini's theorems we can write
\begin{align} \label{eq:K-un-E+}
  \sum_{n=1}^N \| (-\Delta)^{s/2} u_n \|_{L^2(\R^d)}^2
  = \int_{\R^3} \left(\int_0^\infty \sum_{n=1}^N \bigl|u_n^{E+}(x)\bigr|^2
  \,\d E\right) \d x
\end{align}
where the function $u_n^{E+}$ is defined via the Fourier transform
$$\widehat u_n^{E+}(k) = \1\bigl(|2\pi k|^{2s}>E\bigr) \widehat u_n(k).$$
Now we use the assumption that $\{(-\Delta)^{\kappa/2} u_n\}_{n=1}^N$ are
orthonormal functions in $L^2(\R^d)$.
This implies that $\{ |2\pi k|^{\kappa} \widehat u_n(k)\}_{n=1}^N$ are
orthonormal functions in $L^2(\R^d, \d k)$.
Hence, by Bessel's inequality we have the uniform bound 
\begin{align*}
  \sum_{n=1}^N \bigl|u_n(x)- u_n^{E+}(x)\bigr|^2
  &= \sum_{n=1}^N \biggl| \int_{\R^d} 
    e^{2\pi\I k  \cdot x} \1\bigl(|2\pi k|^{2s} \le E\bigr)
    \widehat{u}_n(k)\,\d k\biggr|^2 \\
  &= \sum_{n=1}^N \biggl| \int_{\R^d}
    e^{2\pi\I k  \cdot x} \frac{\1\bigl(|2\pi k|^{2s} \le E\bigr)}
    {|2\pi k|^{\kappa} }
    |2\pi k|^{\kappa}  \widehat{u}_n(k)\,\d k \biggr| ^2\\
  & \le \biggl\| e^{2\pi\I k  \cdot x}
    \frac{\1\bigl(|2\pi k|^{2s} \le E\bigr)}{|2\pi k|^{\kappa} }
    \biggr\|_{L^2(\R^d, \d k)}^2 = C E^{\frac{d-2\kappa }{2s}}.
\end{align*}
Here the constant $C=C(d,\kappa)>0$ is finite when $d>2\kappa$. 

Next, by the triangle inequality for vectors in $\mathbb{C}^N$, we have 
\begin{align*}
  \biggl( \sum_{n=1}^N \bigl|u_n^{E+}(x)\bigr|^2 \biggr)^{\frac 1 2}
  &\ge \Biggl| \biggl( \sum_{n=1}^N \bigl|u_n(x)\bigr|^2 \biggr)^{\frac 1 2}
    - \biggl( \sum_{n=1}^N \bigl|u_n(x)-u_n^{E+}(x)\bigr|^2
    \biggr)^{\frac 1 2} \Biggr|
  \\&\ge \Bigl[ \sqrt{\rho(x)} - C E^{\frac{d-2\kappa}{4s}} \Bigr]_+.  
\end{align*}
Consequently,
\begin{align*}
  \int_0^\infty \sum_{n=1}^N \bigl|u_n^{E+}(x)\bigr|^2\,\d E
  &\ge \int_0^\infty  \biggl(\Bigl[ \sqrt{\rho(x)}
    - C E^{\frac{d-2\kappa}{4s}} \Bigr]_+ \biggr)^2\,\d E
  \\&\ge C  \rho(x)^{1+\frac {2s}{d-2\kappa}}.  
\end{align*}
Inserting the latter bound in \eqref{eq:K-un-E+}, we obtain the desired
inequality \eqref{eq:LT-kinetic-s-d-kappa}.\rule{1em}{0.0pt}  
\end{proof}

The extension of the above result to the case $d\le 2\kappa$ requires some
modification.
Here let us focus only on the case $\kappa=s$. 
\begin{theorem}\mbox{\normalfont(Kinetic inequality in low dimensions)}
\label{thm:LT-low}Let $2s\ge d\ge 1$.
For any $N\ge 1$ and $E>0$, let
$\{\sqrt{(-\Delta)^{s}+E} \,u_n\}_{n=1}^N$ be orthonormal functions in
$L^2(\R^d)$ and denote $\rho(x)=\sum_{n=1}^N |u_n(x)|^2$.  
\begin{itemize}
\item If $d=2s$, then there exist constants $C_d,\alpha_d>0$ such that 
$$
N \ge  E \int_{\R^d}  f\bigl(\rho(x)\bigr)\,  \d x,
\quad f(t)=C_{d} te^{t\alpha_d}.
$$
\item If $d<2s$, then $\rho(x) \le C_{d,s} E^{\frac d {2s}-1}$
  for a.e.\ $x\in \R^d$. 
\end{itemize}
\end{theorem}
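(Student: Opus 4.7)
My plan is to adapt the Rumin decomposition to the shifted operator $(-\Delta)^s+E$. Writing $v_n := \sqrt{(-\Delta)^s+E}\,u_n$, the hypothesis is that $\{v_n\}_{n=1}^N$ is orthonormal in $L^2(\R^d)$, so $\{\widehat v_n\}_{n=1}^N$ is orthonormal in $L^2(\R^d,\d k)$. Bessel's inequality applied to the representation
$$u_n(x) = \int_{\R^d} e^{2\pi\I k\cdot x}\,\frac{\widehat v_n(k)}{\sqrt{|2\pi k|^{2s}+E}}\,\d k$$
is the central tool. In the subcritical case $d<2s$ it immediately gives
$$\rho(x) = \sum_{n=1}^N |u_n(x)|^2 \le \int_{\R^d}\frac{\d k}{|2\pi k|^{2s}+E} = C_{d,s}\,E^{\frac{d}{2s}-1},$$
after the rescaling $k \mapsto E^{1/(2s)}k$, which settles the second bullet.

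For the critical case $d=2s$, I would first rewrite $N$ via the layer-cake identity $|2\pi k|^{2s}+E = E + \int_0^\infty\1(|2\pi k|^{2s}>\tau)\,\d\tau$ and Plancherel, obtaining
$$N = E\int_{\R^d}\rho(x)\,\d x + \int_{\R^d}\int_0^\infty \sum_{n=1}^N\bigl|u_n^{\tau+}(x)\bigr|^2\,\d\tau\,\d x,$$
with $\widehat{u_n^{\tau+}}(k):=\1(|2\pi k|^{2s}>\tau)\widehat u_n(k)$. The same Bessel argument applied to the low-frequency complement yields
$$\sum_{n=1}^N\bigl|u_n(x)-u_n^{\tau+}(x)\bigr|^2 \le \int_{|2\pi k|^{2s}\le\tau}\frac{\d k}{|2\pi k|^{2s}+E} = C_d\log\!\bigl(1+\tau/E\bigr),$$
where the logarithmic dependence (computed in polar coordinates) is the defining feature of the critical dimension.

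Combining with the $\ell^2$ triangle inequality $(\sum_n|u_n^{\tau+}(x)|^2)^{1/2} \ge \sqrt{\rho(x)} - (\sum_n|u_n(x)-u_n^{\tau+}(x)|^2)^{1/2}$, I obtain the pointwise lower bound $\sum_n |u_n^{\tau+}(x)|^2 \ge \bigl[\sqrt{\rho(x)}-\sqrt{C_d\log(1+\tau/E)}\bigr]_+^2$. Restricting the $\tau$-integration to the exponentially long window $0\le \tau \le E(e^{\rho(x)/(4C_d)}-1)$, on which $C_d\log(1+\tau/E)\le \rho(x)/4$ and hence the bracket is at least $\sqrt{\rho(x)}/2$, gives
$$\int_0^\infty\sum_{n=1}^N|u_n^{\tau+}(x)|^2\,\d\tau \ge \frac{E\rho(x)}{4}\bigl(e^{\rho(x)/(4C_d)}-1\bigr).$$
Inserted into the identity for $N$, this absorbs the linear term $E\rho$ into the exponential one and produces $N \ge (E/4)\int_{\R^d}\rho(x)\,e^{\alpha_d\rho(x)}\,\d x$ with $\alpha_d=1/(4C_d)$, exactly the required form.

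The main obstacle is really the packaging of the critical case: because the Bessel bound is only logarithmic in $\tau$, one cannot extract a polynomial gain of the form $\rho^{1+2s/(d-2\kappa)}$ as before, and the payoff must come from integrating the small pointwise deficit over an exponentially wide range of $\tau$. The subcritical bullet, by comparison, is a one-line application of Bessel followed by scaling in $E$.
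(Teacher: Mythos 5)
Your proposal is correct and follows essentially the same Rumin-type argument as the paper: the momentum-space layer-cake decomposition of $(-\Delta)^s+E$, the Bessel inequality applied to the orthonormal system $\{\widehat v_n\}$, the logarithmic integral in the critical dimension, and the integration over an exponentially long window of the cutoff parameter are all the same, with your explicit splitting off of the $E\int\rho$ term being just a cosmetic variant of the paper's choice to drop the $L\in[0,E]$ contribution. Your one-line Bessel argument for the subcritical case $d<2s$ is exactly the alternative derivation the paper records at the end of that subsection.
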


\begin{proof} By Plancherel's and Fubini's theorems we can write
\begin{align} \label{eq:K-un-L+}
  N=\sum_{n=1}^N \left\| \sqrt{(-\Delta)^{s}+E}\, u_n \right\|_{L^2(\R^d)}^2
  = \int_{\R^3} \left(\int_0^\infty \sum_{n=1}^N \bigl|u_n^{L+}(x)\bigr|^2
  \,\d L\right)\d x
\end{align}
where  the function $u_n^{L+}$ is defined via the Fourier transform
$$\widehat{u}_n^{L+}(k) = \1\bigl(|2\pi k|^{2s}+E>L\bigr)\widehat{u}_n(k).$$
Now we use the assumption that $\{\sqrt{(-\Delta)^{s}+E}\, u_n\}_{n=1}^N$ are
orthonormal functions in $L^2(\R^d)$.
This implies that $\{ \sqrt{ |2\pi k|^{2s} +E} \,\widehat u_n(k)\}_{n=1}^N$
are orthonormal functions in $L^2(\R^d, \d k)$.
Hence, by Bessel's inequality we have the uniform bound 
\begin{align*}
  \sum_{n=1}^N \bigl|u_n(x)- u_n^{E+}(x)\bigr|^2
  &= \sum_{n=1}^N \biggl| \int_{\R^d} \d k \,
    e^{2\pi\I k \cdot x} \1\bigl(|2\pi k|^{2s}  +E \le L\bigr)
    \widehat{u}_n (k)\,\d k\biggr|^2 \\
  &\le \int_{\R^d} \frac{\1\left(|2\pi k|^{2s} +E \le L\right)}
    { |2\pi k|^{2s} +E}\,\d k
  \\&\le C_d E^{\frac d {2s}-1} \int_{0}^{(\frac L E-1)_
  +^{\frac 1 {2s}}} \frac {r^{d-1} }{ r^{2s} +1}\,\d r.
\end{align*}

{\bf Case $d=2s$.} When $L\ge E$ we get 
\begin{align*}
  \sum_{n=1}^N \bigl|u_n(x)- u_n^{E+}(x)\bigr|^2
  &\le C_d  \int_{0}^{(\frac L E-1)_+^{\frac 1 {d}}}
    \frac {r^{d-1} }{ r^{d} +1}\,\d r \le C_d  \log (L/E).
\end{align*}
Hence, by the triangle inequality 
\begin{align*}
  \int_E^\infty \sum_{n=1}^N \bigl|u_n^{L+}(x)\bigr|^2\,\d L
  &\ge \int_E^\infty \left( \Bigl[ \sqrt{\rho(x)}
    - \sqrt{C_d \log (L/E) } \Bigr]_+ \right) ^2\,\d L\\
  &= E \int_1^\infty \left( \Bigl[ \sqrt{\rho(x)}
    - \sqrt{C_d \log (L) } \Bigr]_+ \right)^2\, \d L
  \\&\ge E  \rho(x) e^{\alpha_d \rho(x)}. 
\end{align*}
Inserting the latter bound in \eqref{eq:K-un-L+}, we obtain the desired
inequality. 

{\bf Case $d<2s$.} When $L\ge E$ we get 
\begin{align*}
  \sum_{n=1}^N \bigl|u_n(x)- u_n^{E+}(x)\bigr|^2
  &\le C_d E^{\frac d {2s}-1} \int_{0}^{\infty}
    \frac {r^{d-1}}{ r^{2s} +1}\,\d r \le C_d E^{\frac d {2s}-1}. 
\end{align*}
Hence, if $\rho(x) > C_d E^{\frac d {2s}-1}$, then by the triangle inequality
\begin{align*}
  \int_E^\infty \sum_{n=1}^N \bigl|u_n^{L+}(x)\bigr|^2\,\d L
  &\ge \int_E^\infty \left( \Bigl[ \sqrt{\rho(x)}
    - \sqrt{C_d E^{\frac d {2s}-1}} \Bigr]_+ \right)^2\,\d L= \infty. 
\end{align*}
Thus from \eqref{eq:K-un-L+} we conclude that
$\rho(x) \le C_d E^{\frac d {2s}-1}$ for a.e.\ $x\in \R^d$. 
\end{proof}

Note that in the above result, the case ${d=2s}$ is related to the question
discussed in \cite[Section 5.8]{Frank-20}.
Moreover, in the case ${d<2s}$, instead of using Rumin's method we can also
proceed as follows.
Denote $v_n=\sqrt{(-\Delta)^s+E}\,u_n$, then 
$$u_n=\bigl((-\Delta)^s+E\bigr)^{-1/2}v_n
= G_E*v_n, \quad  \widehat{G}_E(k)= \bigl(|2\pi k|^{2s}+E\bigr)^{-1/2}.$$
Since $\{v_n\}_{n=1}^N$ are orthonormal functions in $L^2(\R^d)$, by Bessel's
inequality  and Plancherel's theorem we have the uniform bound 
\begin{align*}
  \rho(x)=\sum_{n=1}^N \biggl|u_n(x)\biggr|^2
  = \sum_{n=1}^N \biggl| \int_{\R^d} G(x-y) v_n(y)\, \d y\biggr|^2
  \le \|G\|_{L^2}^2 = C_d E^{\frac d {2s}-1}. 
\end{align*}

\subsection{Eigenvalue bounds for Schr\"odinger operators} \label{sec:2.3}
Now let us consider an extension of Theorem \ref{thm0:LT} for the fractional
Laplacian.  
\begin{theorem}\mbox{\normalfont(Eigenvalue bounds for fractional Laplacian)}
\label{thm:LT-ev-frac}
Let ${d \ge 1}$, ${s>0}$ and ${\kappa \ge 0}$.
Let ${V:\R^d \to \R}$ be a real-valued potential such that
${V_- \in L^{\kappa+\frac d {2s}}(\R^d)}$.
Assume that the Schr\"odinger operator $(-\Delta)^s+V(x)$ on $L^2(\R^d)$ has
negative eigenvalues $\{E_n((-\Delta)^s+V)\}_{n\ge 1}$.
Then 
\bq \label{eq:LT-ineq}
\sum_{n\ge 1} \bigl|E_n\bigl((-\Delta)^s+V\bigr)\bigr|^\kappa
\le L_{\kappa,d,s} \int_{\R^d} |V(x)_-|^{\kappa+\frac d {2s}}\,\d x
\eq
for a finite constant $L_{\kappa,d,s}$ independent of $V$, provided that
\[\begin{cases}
   \kappa \ge 0                  &\mathrm{if}\ d> 2s,\\
   \kappa >0                     &\mathrm{if}\ d=2s,\\
   \kappa \ge 1- \frac {d} {2s}  &\mathrm{if}\ d<2s.\\
  \end{cases}	
\]
\end{theorem}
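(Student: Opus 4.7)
The plan is to deduce \eqref{eq:LT-ineq} from the kinetic inequalities of Section \ref{sec:2.2} by a duality argument, and then to sweep out the admissible range of $\kappa$ via the Aizenman--Lieb trick. For the key endpoint $\kappa=1$ in the regime $d>2s$ I would first upgrade Theorem \ref{thm:intro-LT-kin-frac} to fermionic density matrices: for any self-adjoint operator $\gamma$ on $L^2(\R^d)$ with $0\le \gamma\le 1$ and one-particle density $\rho_\gamma(x)=\gamma(x,x)$, the layer-cake identity $\gamma=\int_0^1 \1(\gamma>t)\,\d t$ together with Jensen's inequality for the convex map $\rho\mapsto \rho^{1+2s/d}$ yields
\[
\Tr\bigl[(-\Delta)^s\gamma\bigr]\ge K_{d,s}\int_{\R^d}\rho_\gamma(x)^{1+\frac{2s}{d}}\,\d x.
\]
Combining this with the variational identity $-\sum_n |E_n|=\inf_{0\le\gamma\le 1}\Tr[((-\Delta)^s+V)\gamma]$ and pointwise minimization in $\rho(x)\ge 0$ of the integrand $K_{d,s}\rho^{1+2s/d}+V(x)\rho$ produces \eqref{eq:LT-ineq} for $\kappa=1$. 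The Cwikel--Lieb--Rozenblum endpoint $\kappa=0$ (still in the range $d>2s$) is obtained by the same duality starting instead from the $\kappa_{\rm kin}=s$ case of \eqref{eq:LT-kinetic-s-d-kappa} (Lieb's inequality), where the pointwise minimization is replaced by a Birman--Schwinger-style trace estimate.

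For $\kappa>\kappa_0\ge 0$ I would then invoke the Beta-function identity
\[
A^\kappa=\frac{1}{B(\kappa-\kappa_0,\kappa_0+1)}\int_0^\infty t^{\kappa-\kappa_0-1}(A-t)_+^{\kappa_0}\,\d t,\qquad A\ge 0,
\]
applied to $A=|E_n((-\Delta)^s+V)|$. Combined with the fact that the negative eigenvalues of the shifted operator $(-\Delta)^s+V+t$ are exactly $E_n+t$, this gives
\[
\sum_n \bigl|E_n((-\Delta)^s+V)\bigr|^\kappa = c\int_0^\infty t^{\kappa-\kappa_0-1}\sum_n \bigl|E_n((-\Delta)^s+V+t)\bigr|^{\kappa_0}\,\d t.
\]
Inserting the endpoint bound at exponent $\kappa_0$ for the potential $V+t$ and reversing the order of integration by Fubini leaves a one-dimensional Beta integral in $t$ that evaluates to a constant times $|V(x)_-|^{\kappa+d/(2s)}$. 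Starting from $\kappa_0=1$ yields $\kappa\ge 1$, and starting from $\kappa_0=0$ fills the remaining gap $0<\kappa<1$, jointly exhausting the full range $\kappa\ge 0$ when $d>2s$.

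In the low-dimensional regime $d\le 2s$ the power-law kinetic estimate fails and Theorem \ref{thm:LT-low} must be substituted. For $d<2s$ the uniform bound $\rho(x)\le C_{d,s}E^{d/(2s)-1}$ translates, via the Birman--Schwinger principle for $((-\Delta)^s+E)^{-1/2}|V_-|((-\Delta)^s+E)^{-1/2}$, into a control of the counting function $N(-E)=\#\{n:E_n\le -E\}$; the identity $\sum_n |E_n|^\kappa=\kappa\int_0^\infty E^{\kappa-1}N(-E)\,\d E$ then produces \eqref{eq:LT-ineq} at the critical threshold $\kappa=1-d/(2s)$, and Aizenman--Lieb extends it to every larger $\kappa$. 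The borderline $d=2s$ is handled analogously from the exponential-weighted bound of Theorem \ref{thm:LT-low}, which yields an admissible estimate for every $\kappa>0$. I expect the main difficulty to lie precisely in this low-dimensional sharp-threshold analysis: the pointwise optimization that trivializes the $\kappa=1$ duality breaks down, so one must work at the level of the counting function, and the correct critical exponent $\kappa=1-d/(2s)$ has to be extracted carefully from the Green's-kernel behaviour $G_E(0)=C_{d,s}E^{d/(2s)-1}$.
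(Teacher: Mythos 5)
Your overall route—duality via the kinetic inequalities, then Aizenman--Lieb layer-cake to sweep out $\kappa$, with Theorem~\ref{thm:LT-low} substituting in the low-dimensional regimes—matches the paper's strategy closely, and the $\kappa_0=1$ density-matrix endpoint, while not what the paper uses (the paper goes directly from $\kappa_0=0$), is a valid alternative. However, the claim that this machinery reaches the \emph{critical} exponent $\kappa = 1-\tfrac{d}{2s}$ when $d<2s$ is a genuine error, and in fact the paper explicitly states that its proof covers all cases except this critical one, citing \cite{Weidl-96,NetWei-96,Frank-18} for it. Concretely: the uniform bound $\rho(x)\le C_{d,s}E^{d/(2s)-1}$ from Theorem~\ref{thm:LT-low} gives $N(-E)\le C\,E^{d/(2s)-1}\int|(V+E/2)_-|\,\d x$, and inserting this into $\sum_n|E_n|^\kappa=\kappa\int_0^\infty E^{\kappa-1}N(-E)\,\d E$ and performing the substitution $E\mapsto 2|V_-(x)|E$ leaves the one-dimensional integral $\int_0^1 E^{\kappa+d/(2s)-2}(1-E)\,\d E$. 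At $\kappa=1-\tfrac{d}{2s}$ the exponent equals $-1$ and this integral diverges logarithmically at $E=0$, so the counting-function argument fails precisely at the threshold. The Green's-kernel heuristic $G_E(0)\sim E^{d/(2s)-1}$ you invoke is the source of the uniform bound, but it cannot rescue the divergence; reaching the endpoint genuinely requires a different technique (e.g.\ the operator-valued potential lifting or the weak-$L^p$ methods of the cited references).

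Two smaller remarks. First, proving $\kappa_0=1$ separately is redundant: once you have the CLR endpoint $\kappa_0=0$ for $d>2s$, Aizenman--Lieb already yields every $\kappa>0$, so the density-matrix/Jensen step and pointwise optimization are extra work. Second, your description of the CLR step ("pointwise minimization replaced by a Birman--Schwinger-style trace estimate") does not match what actually makes the direct method work and is too vague to carry the proof: the paper's Case~1 (Frank's argument) picks $N$ vectors $u_n$ in the negative-eigenspace $W$ with $\{(-\Delta)^{s/2}u_n\}$ orthonormal, applies the $\kappa_{\rm kin}=s$ kinetic inequality to get $N\ge K_{d,s}\int\rho^{d/(d-2s)}$, uses $N\le\int|V_-|\rho$ from negativity of the quadratic form, and closes by H\"older---no Birman--Schwinger operator is involved. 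Your $d=2s$ paragraph is similarly under-specified; the paper carries it out by a Legendre transform of the exponential weight $f(t)=C_dte^{\alpha_d t}$, which is the nontrivial part of that case. With the critical-$\kappa$ claim dropped and these steps made explicit, the proposal would line up with the paper.
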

The Lieb--Thirring inequality for a generalized kinetic operator $f(\I\nabla)$
was studied in \cite{Daubechies-83}.
In the form of \eqref{eq:LT-ineq}, all non-critical cases can be treated using
the method in \cite{LieThi-76}.
The critical case $\kappa= 0$ follows from the proofs in
\cite{Cwikel-77,Rozenblum-76}.
For the other critical case $\kappa= 1- d/(2s)$ see
\cite{Weidl-96,NetWei-96,Frank-18}.

In the following we prove Theorem \ref{thm:LT-ev-frac} using the kinetic
inequalities in Section \ref{sec:2.2}.
The proof covers all cases except the critical case $\kappa= 1-d/(2s)$. 
\begin{proof}
{\bf Case 1:} $d>2s$, $\kappa=0$.
The following duality argument is due to Frank \cite{Frank-14}.
Let $W$ be the space spanned by eigenfunctions of negative eigenvalues of
$(-\Delta)^s+V$.
Assume that $\dim W \ge N$.
Since the operator $(-\Delta)^{s/2}$ is strictly positive on $L^2(\R^d)$, we
get  
$$\dim\bigl((-\Delta)^{s/2}W\bigr)\ge N.$$ 
Thus we can choose $\{u_n\}_{n=1}^N \subset W$ such that
$\{(-\Delta)^{s/2}u_n\}_{n=1}^N$ are orthonormal in $L^2(\R^d)$.
By the kinetic inequality \eqref{eq:LT-kinetic-s-d-kappa} with $\kappa=s$, we
have 
\begin{eqnarray}\label{eq:N-rho-standard}
  &&N = \sum_{n=1}^N \bigl\|(-\Delta)^{s/2}u_n\bigr\|_{L^2(\R^d)}^2
  \ge K_{d,s} \int_{\R^d} \rho^{\frac d {d-2s}}(x)\,\d x\nn\\
&&\mbox{with}\quad\rho(x):=\sum_{n =1}^N \bigl| u_n(x)\bigr|^2.  
\end{eqnarray}
On the other hand, since $\{u_n\}_{n=1}^N \subset W$ we have
\begin{align*}
  0 \ge \sum_{n =1}^N
  \bigl\langle u_n, \bigl((-\Delta)^s+V\bigr)u_n \bigr\rangle_{L^2(\R^d)}
  = N + \int_{\R^d} V(x) \rho(x)\,\d x. 
\end{align*}
Putting together these inequalities, we find that
\begin{eqnarray*}
  N &\le& - \int_{\R^d} V(x)\rho(x)\,\d x
          \le \int_{\R^d} |V_-(x)|\rho(x)\,\d x\\
&\le& \|V_-\|_{L^{\frac d {2s}}} \|\rho\|_{L^{\frac {d}{d-2s}}}
\le \|V_-\|_{L^{\frac d {2s}}} \Big( \frac{N}{K_{d,s}}\Big)^{\frac {d-2s}{d}},  
\end{eqnarray*}
which is equivalent to 
$$
N \le K_{d,s}^{1- \frac d {2s}} \int_{\R^d} |V_-(x)|^{\frac d {2s}} \,\d x.
$$
Then we conclude by taking $N \to \dim W$.
	
\medskip\noindent%
{\bf Case 2:} $d>2s$, $\kappa>0$.
The result for ${\kappa>0}$ follows from the case ${\kappa=0}$, thanks to a
general argument of Aizenman and Lieb \cite{AizLie-78}.
We use the layer-cake representation 
$$
|E_n|^\kappa = \kappa \int_0^\infty
\1\bigl( E_n < - E\bigr) E^{\kappa-1}\, \d E.
$$
Using the bound in Case 1 for the number of negative eigenvalues of
$(-\Delta)^s +V +E$, we have 
$$
\sum_{n\ge 1}\1( E_n < - E)
\le C \int_{\R^d} \bigl|(V(x)+E)_-\bigr|^{\frac d {2s}}\, \d x
$$
for a constant $C=C(d,s)>0$.
Thus 
\begin{align*}
  \sum_{n\ge 1} |E_n|^\kappa
  &=  \kappa \int_0^\infty  \sum_{n\ge 1}\1( E_n < - E) E^{s-1} \,\d E \\
  &\le C \kappa\int_0^\infty \left(\int_{\R^d}
    \bigl|\bigl(V(x)+E\bigr)_-\bigr|^{\frac d {2s}}
    E^{\kappa -1}\,\d x\right)\d E \\
  &\le C \kappa \int_{\R^d} \left(  \int_0^\infty 
    \bigl|\bigl(V(x)+E\bigr)_-\bigr|^{\frac d {2s}}
    E^{\kappa -1}\,\d E\right)\d x \\
  &= C' \int_{\R^d} \bigl|V(x)_-\bigr|^{\kappa + \frac d {2s}}\,\d x,
\end{align*}
where
$$
C' = C \kappa \int_0^\infty  \bigl|(1+E)_-\bigr|^{\frac d {2s}}
E^{\kappa -1}\,\d E <\infty. 
$$

\medskip\noindent%
{\bf Case 3:} $d=2s$, $\kappa>0$.
Let $N_E$ be the number of negative eigenvalues of $(-\Delta)^s +V +E$.
We can bound $N_E$ by arguing as in Case 1, but replacing
\eqref{eq:N-rho-standard} by Theorem \ref{thm:LT-low}, namely 
\begin{align*}
  N_E \ge  E \int\limits_{\mathbb{R}^2} f\bigl(\rho(x)\bigr)\,\d x,
  \quad f(t) =C_d te^{t\alpha_d}  
\end{align*}	
with constants $C_d,\alpha_d>0$.
Therefore,  
\begin{align} \label{eq:Ne-pre}
  N_E \le 2 \int_{\R^d} \bigl|V_-(x)\bigr| \rho(x)\,\d x - N_E
  &\le \int_{\R^d } \Bigl( 2\bigl|V_-(x)\bigr|\rho(x)
  - E f\bigl(\rho(x)\bigr) \Bigr)\,\d x \nn\\
  &\le    \int_{\R^d}  E f^*{\left( \frac{2\bigl|V_-(x)\bigr|}{E}\right)}
  \,\d x
\end{align}
where $f^*:[0,\infty) \to  [0,\infty]$ is the Legendre transform of $f$,
namely 
$$
f^*(y)= \sup_{t\ge 0} \{yt- f(t)\}, \quad \forall y\ge 0.
$$
Since $V+E=(V+E/2) +E/2$, we can also replace $(V,E)$ by $(V+E/2,E/2)$ and
deduce from \eqref{eq:Ne-pre} that  
\begin{align*}
  N_E \le  \frac E 2 \int_{\R^d}
  f^*{\left(\frac{4\bigl|\bigl(V(x) +E\bigr)_-\bigr|}{E}\right)}\,\d x. 
\end{align*}
Then we argue as in case 2 and obtain, by the layer-cake representation, 
\begin{align*}
  \sum_{n\ge 1} |E_n|^\kappa
  &= \kappa \int_0^\infty  N_E E^{\kappa-1} \d E\\
  &\le \kappa   \int_0^\infty  \left(\int_{\R^d}
    f^*{\left( \frac{2\bigl|(V(x)+E/2)_-\bigr|}{(E/2)}\right)}
    \frac{E^{\kappa}}{2}\,\d x\right)\d E \\
  &= C_{d,\kappa}
    \left(\int_{\R^d} \bigl|V(x)_-\bigr|^{\kappa+1}\,\d x\right)
    \left(\int_{0}^{1} f^*{\left(\frac{2}{y} - 2\right)} y^\kappa\,\d y\right).
\end{align*}
In the last equality we have changed the variable $E=2\bigl|V(x)_-\bigr|y$.
It remains to show that: 
$$
\int_{0}^{1} f^*{\left( \frac{2}{y} - 2\right)}
y^{\kappa}\,\d y<\infty.
$$
Note that if $f\ge g$, then $f^*\le g^*$.
Moreover, $(t^p/p)^* = t^q/q$ with $1/p+1/q=1$ by Young's inequality.
Since $f(t)$ grows faster than any polynomial, we find that 
$$f^*(t)\le C_q t^q$$
for any $q\in (1,1+\kappa)$.
Hence
$$
\int_{0}^{1} f^*{\left( \frac{2}{y} - 2\right)} y^{\kappa}\,\d y
\le C_q \int_{0}^{1} {\left( \frac{2}{y} - 2\right)}^q y^{\kappa}\,\d y
\le  C_q \int_{0}^{1} y^{\kappa-q}\,\d y <\infty. 
$$

\medskip\noindent%
{\bf Case 4:} $d<2s$, $\kappa> 1- \frac d {2s}$.
Again we bound $N_E$, the number of negative eigenvalues of $(-\Delta)^s +V
+E$, as in case 1 but replacing \eqref{eq:N-rho-standard} by the uniform bound 
$$
\rho(x) \le C_d E^{\frac d {2s}-1}, \quad \text{ a.e. }x\in \R^d,
$$
from Theorem \ref{thm:LT-low}. Thus 
$$
N_E \le \int_{\R^d} |V_-|\rho
\le C_d E^{\frac d {2s}-1} \int_{\R^d} \bigl|V_-(x)\bigr|\,\d x.
$$
Again, we can write  $V +E= (V +E/2) + E/2$ and obtain
$$
N_E \le C_d (E/2)^{\frac d {2s}-1}
\int_{\R^d} \bigl|\bigl(V(x)+E/2\bigr)_-\bigr|\,\d x.
$$ 
Thus by the layer-cake representation, 
\begin{align*}
  \sum_{n\ge 1} |E_n|^\kappa
  &= \kappa \int_0^\infty  N_E E^{\kappa-1} \,\d E\\
  &\le \kappa C_d  \int_0^\infty  E^{\kappa-1}(E/2)^{\frac d {2s}-1}
    \left(\int_{\R^d} \bigl|(V(x)+E/2)_-\bigr|\,\d x\right)\d E\\
  &= C_{d, \kappa,s}
    \left(\int_{\R^d} \bigl|V_-(x)\bigr|^{\kappa+ \frac d {2s}}\, \d x\right)
    \left(\int_0^\infty  E^{ \kappa + \frac d {2s}-2}
    (1-E)_+\,\d E\right).
\end{align*}
Finally, when $\kappa> 1- d/(2s)$ we have 
$$
\int_0^\infty   E^{\kappa+ \frac d {2s}-2} (1-E)_+\,\d E
=\int_0^1 E^{\kappa+ \frac d {2s}-2} (1-E)\,\d E <\infty. 
$$
This completes the proof of Theorem \ref{thm:LT-ev-frac}
(except for the critical case ${\kappa= 1- d/(2s)}$).  
\end{proof}

\subsection{Best known constant for kinetic inequality} \label{sec:2.4}
In the non-relativistic case, Lieb and Thirring \cite{LieThi-75,LieThi-76}
conjectured that the optimal constant in the kinetic inequality
\eqref{eq:intro-LT-ineq-kin} is
\bq \label{eq:LT-conj-K}
K_{d}=\min\{K_{d}^{\rm cl}, C_{\mathrm{GN}}(d)\}=
\begin{cases}
K_{d}^{\rm cl} & \text{ if } d\ge 3, \\
C_{\rm GN}(d) & \text{ if } d=1,2,  
\end{cases}
\eq
where $C_{\rm GN}(d)$ is defined in \eqref{eq:C-GN}.
Thanks to the relation \eqref{eq:L-K}, the Lieb--Thirring conjecture is
equivalent to  
\bq \label{eq:LT-conj}
L_{1,d}=\max\{L_{1,d}^{\rm cl}, L_{1,d}^{\rm So}\}=
\begin{cases}
L_{1,d}^{\rm cl} & \text{ if } d\ge 3, \\
L_{1,d}^{\rm So} & \text{ if } d=1,2,  
\end{cases}
\eq
where $L_{1,d}^{\rm So}$ is the optimal constant in the one-body bound 
\bq \label{eq:So}
\int_{\R^d}\Bigl( \bigl|\nabla u(x)\bigr|^2
+ V(x)\bigl|u(x)\bigr|^2 \Bigr)\,\d x
\ge - L_{1,d}^{\rm So} \int_{\R^d} \bigl|V(x)_-\bigr|^{1+d/2}\,\d x. 
\eq
The original proof of Lieb and Thirring \cite{LieThi-75} gave
$L_{1,d}/L_{1,d}^{\rm cl}\le 4\pi$ in ${d=3}$.
This bound has been improved further in
\cite{Lieb-84,EdeFoi-91,BlaStu-96,HunLapWei-00,DolLapLos-08}.
The latest improvement in \cite{FraHunJexNam-18} is 
\begin{theorem} \label{thm:best-kinetic}
  For all $d\ge 1$ we have $ (K_{d}^{\rm cl}/K_d)^{d/2}
  =L_{1,d}/L_{1,d}^{\rm cl} \le  1.456.$
\end{theorem}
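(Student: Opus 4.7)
The plan is to tighten Rumin's proof from Section \ref{sec:2.2}. That proof loses efficiency at two places: the Bessel step yields $\sum_n |u_n(x)-u_n^{E+}(x)|^2 \le C_d E^{d/2}$, and the triangle step converts this into $\sum_n |u_n^{E+}(x)|^2 \ge \bigl(\sqrt{\rho(x)}-\sqrt{C_d E^{d/2}}\bigr)_+^2$. The latter is wasteful because the identity $\sum_n |u_n^{E+}|^2 = \rho - 2\Re \sum_n \overline{u_n}(u_n-u_n^{E+}) + \sum_n |u_n-u_n^{E+}|^2$ is replaced by a pointwise lower bound that discards the nonnegative last term. A sharper accounting here is the source of the quantitative gain.

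To recover this slack, I would redo the argument with a parametric family of smoother momentum cutoffs in place of the indicator. Concretely, use a generalized layer-cake identity $|2\pi k|^2 = \int_0^\infty \psi_E(k)\,\d E$ for a nonnegative kernel $\psi$ to be chosen later (for example, a smoothed step function, or a resolvent weight of the form $(|2\pi k|^2+E)^{-\alpha}$ with a free parameter $\alpha$), and run Rumin's pointwise analysis with this kernel. Bessel's inequality still controls a weighted $\ell^2$-norm of the associated $u_n^E$, while a Cauchy--Schwarz version of the triangle step produces a pointwise inequality of the form $\sum_n |u_n^E(x)|^2 \ge \Phi(\rho(x),E,\psi)$ for a function $\Phi$ expressible in terms of one-dimensional integrals of $\psi$. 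Integrating in $E$ and then in $x$ and optimizing over $\psi$ yields a bound of the form $L_{1,d}/L_{1,d}^{\rm cl} \le F^\ast$, where $F^\ast$ is the value of an infimum problem for $\psi$.

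The crucial observation -- and the conceptual heart of Theorem \ref{thm:best-kinetic} -- is that after scaling $k$ and $E$ appropriately, the optimization problem for $\psi$ should become dimension-independent, so that a single numerical value governs the bound in every dimension $d$. Once this is established, choosing an explicit variational family reduces the problem to a concrete low-dimensional numerical minimization, which should yield the bound $1.456$.

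The main obstacle is twofold: first, identifying a family of test kernels $\psi$ rich enough to beat the constant obtained from the sharp cutoff, yet structured enough that the associated variational problem is tractable; second, arranging the scalings so that the final constant is genuinely dimension-free, in harmony with the factor $L_{1,d}^{\rm cl}$. With the right family in hand, the numerical optimization itself is routine; the delicate point is the conceptual one of finding the right refinement of Rumin's argument that brings the constant down to $1.456$ uniformly in $d$.
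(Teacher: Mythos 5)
Your first step is essentially the right idea and matches what the paper does: replace the sharp indicator cutoff in Rumin's momentum decomposition by a general nonnegative profile $f$ with $\int_0^\infty f^2=1$ (so $|2\pi k|^{2s}=\int_0^\infty f(E|2\pi k|^{-2s})^2\,\d E$), run the Bessel and triangle steps with an additional free weight $\varphi$ in a Cauchy--Schwarz averaging, and obtain a bound of the form $K_{d,s}/K_{d,s}^{\rm cl}\ge F(\mathcal{C}_{d,s})$ for a scalar infimum $\mathcal{C}_{d,s}$ over $(f,\varphi)$; this is exactly Theorem \ref{thm:fractional}. Choosing explicit test functions then yields $\mathcal{C}_{1,1}\le 0.373556$ and hence $L_{1,1}/L_{1,1}^{\rm cl}\le 1.456$ in $d=1$.

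However, there is a genuine gap in your ``crucial observation.'' The optimization problem obtained this way is \emph{not} dimension-independent: after scaling it depends on the ratio $d/(2s)$, and for $s=1$ the resulting bound deteriorates as $d$ grows. The paper is explicit on this point: the optimized Rumin argument gives $L_{1,d}/L_{1,d}^{\rm cl}\le 1.456$ only in $d=1$ ``and worse bounds in higher dimensions.'' Your proposal has no mechanism that recovers $1.456$ uniformly in $d$. The missing ingredient is the second step of the paper's proof, the Laptev--Weidl lifting argument of \cite{LapWei-00}: one writes $-\Delta_{\R^d}=-\Delta_{\R^1}\otimes 1 + 1\otimes(-\Delta_{\R^{d-1}})$, applies the sharp one-dimensional Lieb--Thirring bound for $\kappa\ge 3/2$ to the operator-valued Schr\"odinger problem in the remaining variables, and iterates; this converts the $d=1$ bound with $\kappa=1$ into a bound in all dimensions $d\ge 1$ with the same ratio $L_{1,d}/L_{1,d}^{\rm cl}$. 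Without this lifting, a purely Rumin-type variational argument will not produce a constant independent of $d$, so your claim that ``a single numerical value governs the bound in every dimension'' is incorrect as stated and the proof is incomplete.
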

In one dimension, this bound is about $26\%$ bigger than the expected value
$L_{1,1}^{\rm So}/L_{1,1}^{\rm cl} = 2/\sqrt{3}=1.155\dots$ in
\cite{LieThi-76}.
For any $d\ge 1$, by the Aizenman--Lieb monotonicity \cite{AizLie-78}, Theorem
\ref{thm:best-kinetic} implies that $L_{\kappa,d}/L_{\kappa,d}^{\rm cl} \le
1.456$ for all $\kappa\ge 1$, which is also the best known result for all
$1\le \kappa<3/2$ (when $\kappa\ge 3/2$, we know that
$L_{\kappa,d}=L_{\kappa,d}^{\rm cl}$ \cite{LieThi-76,LapWei-00}).  

The proof of Theorem \ref{thm:best-kinetic} uses crucially the technique of
optimizing momentum decompositions.
More precisely, it contains two main steps. 
\begin{itemize}
\item First, we improve the kinetic inequality using a modification of Rumin's
  method.
  This gives $L_{1,d}/L_{1,d}^{\rm cl}  \le 1.456$ in $d=1$ (and worse bounds
  in higher dimensions).  

\item Second, we use the Laptev--Weidl lifting argument \cite{LapWei-00} to
  extend the bound to higher dimensions.  
\end{itemize}
The first step can be extended to fractional Laplacian to bound the constant
in Theorem \ref{thm:intro-LT-kin-frac}.
For every $s>0$, the corresponding semiclassical  constant is 
$$ K_{d,s}^{\rm cl}
=  \frac{d}{d+2s} \left( \frac{(2\pi)^d}{|B_1|} \right)^{\frac{2s}{d}}.$$
We have
\begin{theorem} \label{thm:fractional}
For all $d\ge 1$ and $s>0$, the best constant in \eqref{eq:LT-ineq-frac}
satisfies 
\begin{align*} 
  \frac{K_{d,s}}{K_{d,s}^{\rm cl}} \ge   \frac{d}{d+2s}
  {\left( \frac{2s}{d+2s}\right)}^{\frac{4s}{d}}
  \mathcal{C}_{d,s}^{-\frac{2s}{d}} 
\end{align*}
where
\begin{align} \label{eq:inf-Cfl-frac}
  \mathcal{C}_{d,s}
  &:= \inf \left\{
    {\left( \int_0^\infty\!\!\varphi(r)^2\,\d r\right)}^{\frac{d}{2s}}
    \frac{d}{2s}\int_0^\infty\! 
    {\left| 1-\int_0^\infty \!\varphi(E) f(Et)
    \,\d E\right|}^2 \frac{\d t}{t^{1+\frac{d}{2s}}}  \right\} 
\end{align}
with the infimum taken over all functions $f,\varphi:[0,\infty) \to
[0,\infty)$ satisfying $\int_0^\infty f(r)^2\,\d r=1$.  
\end{theorem}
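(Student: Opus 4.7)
The plan is to refine the Rumin argument from Section \ref{sec:2.2} by replacing the sharp momentum cutoff $\1(|2\pi k|^{2s}>E)$ with a two-parameter family of smooth Fourier multipliers indexed by a profile $f$ (with $\int_0^\infty f(r)^2\,\d r=1$) and a layer weight $\varphi$, and then optimizing the resulting lower bound over all admissible pairs. This is the strategy of \cite{FraHunJexNam-18}, adapted to the fractional Laplacian and pushed until the relevant infimum coincides with $\mathcal{C}_{d,s}$.

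The first step is to establish a flexible integral representation of the symbol $|2\pi k|^{2s}$, tailored to $(f,\varphi)$, of the form
\[
|2\pi k|^{2s}=\int_0^\infty W_{f,\varphi}(|2\pi k|^{2s},t)\,\d t,
\]
where the kernel $W_{f,\varphi}(\kappa,t)$ is engineered as the squared symbol of a smoothed Fourier multiplier $T_t$ assembled from $f$ and $\varphi$. Plancherel then rewrites the kinetic energy as $\int_0^\infty\sum_n\|T_tu_n\|_{L^2}^2\,\d\mu(t)$ for an appropriate scale measure $\d\mu$ determined by the representation.

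Next, the Bessel step proceeds slice by slice. Orthonormality of $\{u_n\}$ in $L^2(\R^d)$ gives, uniformly in $x\in\R^d$ and $t>0$,
\[
\sum_{n=1}^N\bigl|(T_tu_n-u_n)(x)\bigr|^2\le\|1-T_t\|^2_{L^2(\R^d,\d k)},
\]
and a radial change of variables $\tau=1/|2\pi k|^{2s}$ turns this $L^2(\R^d,\d k)$-norm into a constant multiple of the integrand defining $\mathcal{C}_{d,s}$, with an explicit prefactor involving $|B_1|/(2\pi)^d$, $t^{d/(2s)}$, and $\|\varphi\|_{L^2}^2$. The triangle inequality in $\mathbb{C}^N$ then yields the pointwise lower bound $\bigl(\sum_n|T_tu_n(x)|^2\bigr)^{1/2}\ge\bigl[\sqrt{\rho(x)}-M(f,\varphi,t)\bigr]_+$, with $M(f,\varphi,t)$ the square root of the Bessel bound at scale $t$.

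Squaring, integrating against $\d\mu$, and then over $x\in\R^d$ yields $\sum_n\|(-\Delta)^{s/2}u_n\|^2\ge\int_{\R^d}\Phi(\rho(x))\,\d x$; the substitution $t\mapsto\rho(x)^{2s/d}t$ separates out $\rho^{1+2s/d}$ and isolates a one-dimensional optimization of $[a-b\tau^{d/(4s)}]_+^2$ weighted by $\tau^{-1-d/(2s)}$, producing the combinatorial prefactor $\tfrac{d}{d+2s}\bigl(\tfrac{2s}{d+2s}\bigr)^{4s/d}$. Comparing with $K_{d,s}^{\rm cl}=\tfrac{d}{d+2s}((2\pi)^d/|B_1|)^{2s/d}$ and taking the infimum over admissible $(f,\varphi)$ yields the factor $\mathcal{C}_{d,s}^{-2s/d}$ and the stated bound. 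The main technical obstacle is the very first step: unlike Rumin's rigid sharp cutoff, both the profile $f$ of the smoothed momentum cutoff and the layer weight $\varphi$ must be encoded in a single operator identity, so that the scale measure $\d\mu$ and the dimensional factor $|B_1|/(2\pi)^d$ from the radial change of variables conspire to make $\mathcal{C}_{d,s}$ emerge exactly in its stated form.
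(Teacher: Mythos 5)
Your high-level strategy (momentum decomposition via a smooth profile $f$ replacing Rumin's sharp cutoff, Bessel for the error, triangle inequality in $\mathbb{C}^N$, then optimize) is indeed the right flavor, but the structural role you assign to $\varphi$ is incorrect, and this is not a cosmetic issue: it is exactly the step that makes $\mathcal{C}_{d,s}$ come out in the stated form.

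In the paper's argument, $f$ alone enters the representation
$|2\pi k|^{2s}=\int_0^\infty f\bigl(E|2\pi k|^{-2s}\bigr)^2\,\d E$,
which by Plancherel gives $\sum_n\|(-\Delta)^{s/2}u_n\|^2=\int_{\R^d}\sum_n\int_0^\infty|u_n^{E+}(x)|^2\,\d E\,\d x$. The weight $\varphi$ enters \emph{afterwards}, through the Cauchy--Schwarz inequality in the $E$-variable,
$\bigl(\int\varphi^2\bigr)\int_0^\infty|u_n^{E+}|^2\,\d E\ge\bigl|\int_0^\infty\varphi(E)\,u_n^{E+}\,\d E\bigr|^2$,
which is what turns the $L^2(\d E)$-mass into a single Fourier multiplier (the one with symbol $\int_0^\infty\varphi(E)f(E|2\pi k|^{-2s})\,\d E$) to which triangle and Bessel inequalities can be applied. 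This Cauchy--Schwarz step is the source of the $\bigl(\int\varphi^2\bigr)^{-1}$ factor that, after the rescaling $\varphi\mapsto\ell\varphi(\ell\cdot)$, produces the $\bigl(\int\varphi^2\bigr)^{d/(2s)}$ piece of $\mathcal{C}_{d,s}$. You omit this step entirely: in your sketch $\|\varphi\|_{L^2}^2$ is said to fall out of the radial change of variables in the Bessel bound, but the $L^2(\R^d,\d k)$-norm of $1-T_t$ only produces the $\frac{d}{2s}\int|1-\int\varphi f|^2\,t^{-1-d/(2s)}\d t$ part of $\mathcal{C}_{d,s}$, not any power of $\|\varphi\|_{L^2}$. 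Without Cauchy--Schwarz there is no way for $\|\varphi\|_{L^2}$ to appear.

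Consequently, your attempt to bake both $f$ and $\varphi$ into a single representation $|2\pi k|^{2s}=\int W_{f,\varphi}\,\d t$ with $W_{f,\varphi}$ the squared symbol of $T_t$ does not reproduce the theorem. If one does this in the natural way, one ends up with a normalization constraint on the combined profile $g(\tau)=\int_0^\infty\varphi(E)f(E\tau)\,\d E$ rather than on $f$ alone, and a direct $t$-integration of $\bigl[\sqrt{\rho}-\sqrt{B}\,t^{d/(4s)}\bigr]_+^2$ with $\d\mu=\d t$ yields the prefactor $\frac{p^2}{(p+1)(p+2)}$ with $p=d/(2s)$, not the claimed $\frac{p}{p+1}\bigl(\frac{1}{p+1}\bigr)^{2/p}$. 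There is also a concrete computational flaw: the one-dimensional integral you describe, $[a-b\tau^{d/(4s)}]_+^2$ weighted by $\tau^{-1-d/(2s)}\,\d\tau$, diverges at $\tau\to0$ since the integrand is $\sim a^2\tau^{-1-d/(2s)}$ there. The correct mechanism in the paper for producing the factor $\frac{d}{d+2s}\bigl(\frac{2s}{d+2s}\bigr)^{4s/d}$ is the explicit scaling optimization $\varphi\mapsto\ell\varphi(\ell\cdot)$ over $\ell>0$ applied to the pointwise lower bound $\bigl(\int\varphi^2\bigr)^{-1}\bigl[\sqrt\rho-\sqrt B\bigr]_+^2$; this optimization is absent from your proposal. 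You flag the first step as the ``main technical obstacle,'' and indeed it is where your route breaks down: the two parameters $f$ and $\varphi$ cannot be collapsed into a single operator identity, they must enter via a representation and a Cauchy--Schwarz duality, respectively.
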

When $d=s=1$, we have $\mathcal{C}_{1,1}\le 0.373556$ by taking in
\eqref{eq:inf-Cfl-frac}  
$$
f(t)= (1+\mu_0 t^{4.5})^{-0.25},  \quad
\varphi(t)= \frac{(1-t^{0.36})^{2.1}}{1+t}\1(t\le 1)
$$
with $\mu_0$ determined by $\int_0^\infty f^2= 1$.
This implies $L_{1,1}/L_{1,1}^{\rm cl} \le 1.456.$

\begin{proof} Using $\int_0^\infty f(r)^2\,\d r=1$ we can write 
$$
|2\pi k|^{2s}= \int_0^\infty f(E |2\pi k|^{-2s})^2\,\d E.
$$ 
Thus  
\begin{align}\label{eq:kinetic-representationfrac}
  \sum_{n=1}^N \bigl\| (-\Delta)^{s/2} u_n\bigr\|_{L^2}^2
  = \int_{\R^d} \left( \sum_{n=1}^N \int_0^\infty
  \bigl|u_n^{E+}(x)\bigr|^2\, \d E \right) \d x
\end{align}
where
$$
\widehat{u}_n^{E+}(k) = f\bigl(E |2\pi k|^{-2s}\bigr) \widehat{u}_n(k). 
$$
Next, for every function $\varphi:[0,\infty) \to [0,\infty)$, by the
Cauchy--Schwarz inequality and the triangle inequality in $\mathbb{C}^d$, we
have   
\begin{align*} 
  & \sum_{n=1}^N \left( \int_0^\infty\! \varphi(E)^2\, \d E \right)
    \left(  \int_0^\infty \bigl|u_n^{E+}(x)\bigr|^2\, \d E  \right)
    \ge\sum_{n=1}^N \left|\int_0^\infty \varphi(E) u_n^{E+}(x)\, \d E \right|^2\\
  &\ge \left|  \left( \sum_{n=1}^N  \bigl|u_n(x)\bigr|^2 \right)^{\!1/2} -
    \left( \sum_{n=1}^N  \Bigl| u_n(x)
    - \int_0^\infty\! \varphi(Et) u_n^{E+}(x)\, \d E \Bigr|^2
    \right)^{\!1/2} \right|^2.
\end{align*}
Next, using again the fact that $\{u_n\}_{n=1}^N$ are orthonormal functions in
$L^2(\R^d)$ and Bessel's inequality, we have the uniform bound
\begin{align*} 
  &\;\sum_{n=1}^N  \left| u_n(x)
    - \int_0^\infty \varphi(E) u_n^{E+}(x)\, \d E \right|^2 \\
  =&\;\sum_{n=1}^N \left| \int_{\R^d} e^{2\pi\I k \cdot x}
    \widehat u_n(k) \left( 1- \int_0^\infty \varphi(E)
    f(E |2\pi k|^{-2s})\, \d E \right)\d k\right|^2\\
  \le&\; \int_{\R^d} \left| 1- \int_0^\infty \varphi(E)
    f(E |2\pi k|^{-2s})\, \d E \right|^2\d k \\
  =&\; \frac{d|B_1|}{2s (2\pi)^d} \int_0^\infty
     \left| 1 -\int_0^\infty \varphi(E) f(E t) \,\d E \right|^2
     \frac{\d t}{t^{1+\frac d {2s}}}.
\end{align*}
Thus 
\begin{align*} 
&\sum_{n=1}^N  \int_0^\infty \bigl|u_n^{E+}(x)\bigr|^2 \,\d E  
  \ge \biggl(\int_0^1 \varphi(r)^2\,\d r\biggr)^{-1}\\
  &\rule{2em}{0pt}\times
    \left[  \sqrt{\rho(x)} - \left( \frac{d|B_1|}{2s (2\pi)^d}
        \int_0^\infty 
        \left| 1 -\int_0^\infty \!\varphi(E) f(E t)\, \d E \right|^2
        \frac{\d t}{t^{1+\frac d {2s}}}\right)^{1/2} \right]_+^2.
\end{align*}
Replacing $\varphi(E)\mapsto \ell \varphi (\ell E)$ and optimizing over
$\ell>0$ we get 
\begin{align*} 
  & \sum_{n=1}^N  \int_0^\infty \bigl|u_n^{E+}(x)\bigr|^2 \,\d E
    \ge \rho(x)^{1+\frac {2 s}{d}}
    {\left( \frac{d}{d+2s}\right)}^2
    {\left( \frac{2s}{d+2s}\right)}^{\frac{4s}{d}} \\
  &\times  \biggl(\int_0^1 \varphi(r)^2\,\d r\biggr)^{-1}
    {\left( \frac{d|B_1|}{2s (2\pi)^d}
    \int_0^\infty 
    \left| 1 -\int_0^\infty\! \varphi(E) f(E t) \,\d E \right|^2
    \!\frac{\d t}{t^{1+\frac d {2s}}}\right)}^{-\frac{2s}{d}}.
\end{align*}
Finally, optimizing over $f$ and $\varphi$ we conclude that
\begin{align*} 
  &\sum_{n=1}^N  \int_0^\infty \bigl|u_n^{E+}(x)\bigr|^2 \,\d E\\
  &\ge \rho(x)^{1+\frac {2 s}{d}}   {\left( \frac{d}{d+2s}\right)}^2
    \left( \frac{2s}{d+2s}\right)^{\frac{4s}{d}}
    \left( \frac{|B_1|}{ (2\pi)^d}\right)^{-\frac{2s}{d}}
    \mathcal{C}_{d,s}^{-\frac{2s}{d}}\\
  &= \rho(x)^{1+\frac {2 s}{d}}  K_{d,s}^{\rm cl} \frac{d}{d+2s}
    \left( \frac{2s}{d+2s}\right)^{\frac{4s}{d}}
    \mathcal{C}_{d,s}^{-\frac{2s}{d}} .
\end{align*}
Inserting this bound in \eqref{eq:kinetic-representationfrac} we get the
desired inequality.  
\end{proof}

\subsection{Further results} \label{sec:2.5}
The idea of optimizing momentum decompositions is also useful to improve the
Lieb--Thirring kinetic constant on the sphere and on the torus in
\cite{IlyLapZel-20}, and to improve the constant in the
Cwikel--Lieb--Rozenblum inequality in \cite{HunKunRieVug-18}.  
This technique can be developed to derive new semiclassical inequalities;  see
\cite{FraLewLieSei-11,FraLewLieSei-13} for a positive density analogue of the
Lieb--Thirring inequality.

\section{Lundholm-Solovej method} \label{sec:3}
\subsection{Kinetic inequality via local exclusion principle} \label{sec:3.1}
The Lieb--Thirring inequality \cite{LieThi-75,LieThi-76} was originally
invented to give an energy lower bound for fermionic particles.
From first principles of quantum mechanics, a system of $N$ identical
(spinless) fermions in $\R^d$ can be described by a normalized wave function
$\Psi_N \in L^2((\R^d)^N)$ satisfying    
\begin{equation} \label{eq:Pauli}
  \Psi_N(x_1,\ldots,x_i,\ldots,x_j,\ldots,x_N)
  = - \Psi_N(x_1,\ldots,x_j,\ldots,x_i,\ldots,x_N), \ \forall i\ne j.
\end{equation}
Here ${x_i\in \R^d}$ is the position of the $i$-th particle (we ignore the
spin for simplicity) and $|\Psi_N|^2$ is interpreted as the probability
density of $N$ particles.

The anti-symmetry condition \eqref{eq:Pauli}, also called Pauli's exclusion
principle, implies that two fermionic particles cannot occupy the same
position.
Clearly, $\Psi_N=0$ if $x_i=x_j$ for $i\ne j$. Moreover,  if we define the
one-body density matrix $\gamma_{\Psi_N}$ as an operator on $L^2(\R^d)$ with
kernel 
\begin{align*}
  &\gamma_{\Psi_N}(x,y )\\
  &:= \sum_{j=1}^N \int_{\R^{d(N-1)}}\!\!
    \Psi(\dots,x_{j-1},x,x_{j+1},\dots)
    \overline{\Psi(\dots,x_{j-1},x,x_{j+1},\dots)}  \prod\limits_{i\ne j} \d x_i,
\end{align*}
then we have the operator inequality \cite{LieLos-01,LieSei-10}
\begin{equation} \label{eq:Pauli-gamma}
  0\le \gamma_{\Psi_N} \le 1 \quad \text{ on }L^2(\R^d).
\end{equation}
This is nontrivial since $\Tr \gamma_{\Psi_N}=N$.
Without the anti-symmetry condition \eqref{eq:Pauli}, $\gamma_{\Psi_N}$ may
have an eigenvalue as large as $N$ (in fact,
$\gamma_{\Psi_N}=N |u\rangle \langle u|$ if $\Psi_N=u^{\otimes N}$).
The Lieb--Thirring inequality allows us to bound the kinetic energy of
$\Psi_N$ in terms of its one-body density  
\begin{align*}
  \rho_\Psi(x) &:= \gamma_{\Psi_N}(x,x)\\
  &= \sum_{j=1}^N \int_{\R^{d(N-1)}}
    \bigl|\Psi(x_1,\dots,x_{j-1},x,x_{j+1},\dots,x_N)\bigr|^2
    \prod\limits_{i\ne j} \d x_i.
\end{align*}
\begin{theorem}
\mbox{\normalfont(Lieb--Thirring kinetic inequality for fermions)}
\label{thm:LT-kinetic-fer}
Let $d \ge 1$ and $s>0$.  Let $\Psi_N \in L^2\bigl((\R^d)^N\bigr)$ be a
normalized wave function satisfying the anti-symmetry \eqref{eq:Pauli}.
Then  
\begin{equation} \label{eq:LT-kinetic-fer}
  \biggl\langle \Psi_N,  \sum_{i=1}^N (-\Delta_{x_i})^s \Psi_N\biggr\rangle 
  \ge K_{d,s}  \int_{\R^d} \rho_{\Psi_N}^{1+2/d}(x)\,\d x.  
\end{equation}
with the same constant $K_{d,s}$ in \eqref{eq:LT-ineq-frac}. 
\end{theorem}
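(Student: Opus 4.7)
The plan is to deduce the fermionic inequality \eqref{eq:LT-kinetic-fer} from the orthonormal kinetic inequality of Theorem \ref{thm:intro-LT-kin-frac} by passing through the one-body density matrix $\gamma_{\Psi_N}$. The identities
\[
\biggl\langle \Psi_N, \sum_{i=1}^N (-\Delta_{x_i})^s \Psi_N\biggr\rangle = \Tr\bigl((-\Delta)^s \gamma_{\Psi_N}\bigr), \qquad \rho_{\Psi_N}(x) = \gamma_{\Psi_N}(x,x),
\]
hold for any $\Psi_N$, irrespective of symmetry. The antisymmetry condition \eqref{eq:Pauli} enters only through the operator bound $0\le \gamma_{\Psi_N}\le 1$ from \eqref{eq:Pauli-gamma}. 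It therefore suffices to show that every self-adjoint trace-class $\gamma$ on $L^2(\R^d)$ with $0\le\gamma\le 1$ satisfies
\[
\Tr\bigl((-\Delta)^s \gamma\bigr) \ge K_{d,s} \int_{\R^d} \rho_\gamma(x)^{1+\frac{2s}{d}}\,\d x, \qquad \rho_\gamma(x):=\gamma(x,x),
\]
with the \emph{same} constant as in Theorem \ref{thm:intro-LT-kin-frac}.

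To reach orthonormal families, I would spectrally write $\gamma = \sum_n \lambda_n |u_n\rangle\langle u_n|$ with $\{u_n\}$ orthonormal and $0\le \lambda_n\le 1$, and use the layer-cake identity $\lambda_n = \int_0^1 \1(\lambda_n > t)\,\d t$ to decompose
\[
\gamma = \int_0^1 P_t\,\d t, \qquad P_t := \sum_{n:\lambda_n > t} |u_n\rangle\langle u_n|.
\]
Each $P_t$ is an orthogonal projection onto the span of an orthonormal family, so Theorem \ref{thm:intro-LT-kin-frac} applied to that family gives $\Tr((-\Delta)^s P_t) \ge K_{d,s}\int \rho_{P_t}(x)^{1+\frac{2s}{d}}\,\d x$. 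Since $\Tr((-\Delta)^s \gamma) = \int_0^1 \Tr((-\Delta)^s P_t)\,\d t$ and $\rho_\gamma(x) = \int_0^1 \rho_{P_t}(x)\,\d t$, integrating in $t$ yields
\[
\Tr\bigl((-\Delta)^s \gamma\bigr) \ge K_{d,s} \int_0^1 \int_{\R^d} \rho_{P_t}(x)^{1+\frac{2s}{d}}\,\d x\,\d t.
\]

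The main subtle point is to recover $\int \rho_\gamma^{1+\frac{2s}{d}}$ from the iterated integral on the right, where a careless bound could easily go the wrong way. The observation that saves us is that $\d t$ on $[0,1]$ is a probability measure: by Jensen's inequality applied pointwise in $x$ to the convex function $t\mapsto t^p$ with $p := 1+\frac{2s}{d}>1$,
\[
\rho_\gamma(x)^p = \Bigl(\int_0^1 \rho_{P_t}(x)\,\d t\Bigr)^p \le \int_0^1 \rho_{P_t}(x)^p\,\d t,
\]
so integrating in $x$ closes the argument with no loss in the constant (the inequality is in fact saturated on projections, where $\lambda_n\in\{0,1\}$). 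The hardest part is conceptual rather than computational: identifying that the Pauli principle is fully encoded in the spectral upper bound on $\gamma_{\Psi_N}$, and that this upper bound is precisely the ingredient needed for the layer-cake reduction to orthonormal projections to succeed without degrading $K_{d,s}$.
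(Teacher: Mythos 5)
Your proof is correct, but it takes a genuinely different route from the one the paper actually presents. The paper proves \eqref{eq:LT-kinetic-fer} via the Lundholm--Solovej localization scheme: decomposing the energy over a covering of cubes, invoking a local exclusion bound (Lemma \ref{lem:local-exclusion-fermions}, which encodes the Pauli principle through the operator bound on $\gamma_{\Psi_N}$ together with a finite-rank kernel estimate for the Neumann Laplacian), a local uncertainty/Gagliardo--Nirenberg bound (Lemma \ref{lem:local-uncertainty}), and then combining them by the covering lemma (Lemma \ref{lem:covering}). That proof is entirely self-contained and does not rely on Theorem \ref{thm:intro-LT-kin-frac} as an input. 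Your argument is the ``convexity argument'' the paper mentions in passing with a citation to \cite[Lemma 3]{Frank-20} but does not detail: you spectrally decompose $\gamma_{\Psi_N}$, write it as a layer-cake average $\gamma_{\Psi_N}=\int_0^1 P_t\,\d t$ of finite-rank projections (finite rank for $t>0$ because $\Tr\gamma_{\Psi_N}=N<\infty$), apply the orthonormal-family inequality to each $P_t$, and close with Jensen's inequality on the probability measure $\d t$ applied to the convex map $r\mapsto r^{1+2s/d}$. This is a clean, lossless reduction: it establishes that the optimal constant for Theorem \ref{thm:LT-kinetic-fer} is identical to the one in Theorem \ref{thm:intro-LT-kin-frac}, which the Lundholm--Solovej route does not directly give (it produces \emph{some} constant and proves both theorems at once). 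The trade-off is that your argument is a deduction of one theorem from the other and uses the full strength of $\gamma_{\Psi_N}\le 1$, whereas the paper's route uses only a weaker local consequence of the exclusion principle and therefore generalizes to anyons, to wave functions merely vanishing on the diagonal set, and to interacting systems, which is the paper's stated reason for presenting it. Minor remark: the exponent in the paper's display \eqref{eq:LT-kinetic-fer} reads $1+2/d$ but should be $1+2s/d$ to match the constant $K_{d,s}$; you use the correct exponent throughout.
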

Note that  the left side of \eqref{eq:LT-kinetic-fer} is nothing but 
$$\Tr\bigl((-\Delta)^s\gamma_{\Psi_N}\bigr)=\Tr\bigl((-\Delta)^{s/2}
   \gamma_{\Psi_N}(-\Delta)^{s/2}\bigr).$$
Hence, applying \eqref{eq:LT-kinetic-fer} to the Slater determinant
$\Psi_N= u_1 \wedge u_2  \wedge\dots\wedge u_N$ we recover the kinetic
inequality for orthonormal functions in \eqref{eq:LT-ineq-frac}.
On the other hand, we can deduce \eqref{eq:LT-kinetic-fer} from
\eqref{eq:LT-ineq-frac} and the operator bound \eqref{eq:Pauli-gamma} by a
convexity argument (see \cite[Lemma 3]{Frank-20}).
Moreover, the operator bound \eqref{eq:Pauli-gamma} also allow us to deduce
\eqref{eq:LT-kinetic-fer} from the eigenvalue bound in Theorem
\ref{thm:LT-ev-frac}.
Nevertheless, we will discuss an alternative approach to
\eqref{eq:LT-kinetic-fer} below as it will open the way to further developments.  

In 2013, Lundholm and Solovej \cite{LunSol-13,LunSol-14} realized that one can
deduce the Lieb--Thirring inequality \eqref{eq:LT-kinetic-fer} using only a
weaker version of Pauli's exclusion principle \eqref{eq:Pauli}.
More precisely, they need only a rather simple consequence of the operator
inequality \eqref{eq:Pauli-gamma}, which holds for a larger class of quantum
systems than just Fermi gases.
We refer to Lundholm's lecture notes \cite{Lundholm-18} for a pedagogical
introduction to the theory.
In the following we will follow the simplified representation in
\cite{LunNamPor-16}.  

\medskip\noindent%
{\bf Decomposition in position space.}
If $\R^d$ is covered by disjoint domains $\{\Omega\}$, then 
$$
(-\Delta)^s
= \sum_{\Omega} (-\Delta)^s_{|\Omega} \quad \text{ on }L^2(\R^d)
$$
where the Neumann Laplacian $(-\Delta)^s_{|\Omega}$ is  defined via the
quadratic form  
$$
\bigl\langle u, (-\Delta)^s_{|\Omega} u\bigr\rangle_{L^2(\R^d)}
= \|u\|_{\dot{H}^s(\Omega)}^2.
$$
Here the seminorm $\norm{u}_{\dot{H}^s(\Omega)}^2$ is defined as follows, 
$$\norm{u}_{\dot{H}^s(\Omega)}^2
=\left\{
  \begin{array}{l@{\quad\mathrm{if}\ }l}
    \displaystyle\sum_{\abs{\alpha}=s}\frac{s!}{\alpha!}
    \int_\Omega \abs{D^\alpha u(x)}^2\,\d x & s\in \mathbb{N},\\
    \displaystyle
    c_{d,\sigma}\!\sum_{\abs{\alpha}=m}\!\frac{m!}{\alpha!}
    \int_{\Omega}\int_{\Omega}\!
    \frac{\abs{D^\alpha u(x)-D^\alpha u(y)}^2}
         {\abs{x-y}^{d+2\sigma}}\,\d x\,\d y
   & s\not\in\mathbb{N}.
  \end{array}\right.
$$
In the case ${s\not\in\mathbb{N}}$, we have used the notation
${s=m+\sigma}$ with ${m\in \mathbb{N}}$, ${\sigma \in (0,1)}$ and
\begin{displaymath}
  c_{d,\sigma}:=\frac{2^{2\sigma-1}}{\pi^{d/2}}
    \frac{\Gamma(d/2+\sigma)}{|\Gamma(-\sigma)|}.
\end{displaymath}
The coefficient $c_{d,\sigma}$ comes from the well-known formula
(see e.g. \cite[Lemma 3.1]{FraLieSei-07})
\begin{displaymath}
  \bigl\langle u,(-\Delta)^\sigma u\bigr\rangle
    = c_{d,\sigma} \int_{\R^d}\int_{\R^d} \frac{\abs{u(x)-u(y)}^2}
    {\abs{x-y}^{d+2\sigma}}\,\d x\,\d y.  
\end{displaymath}
Consequently, for any $N$-body wave function $\Psi_N \in H^s(\R^{dN})$ we can
decompose  
\begin{equation} \label{eq:localization}
  \cE_{\R^d}[\Psi_N]
  := \biggl\langle \Psi_N,  \sum_{i=1}^N (-\Delta_{x_i})^s
  \Psi_N\biggr\rangle  =  \sum_\Omega {\mathcal E}_\Omega[\Psi_N]
\end{equation}
where the local energy on $\Omega$ is defined by
\begin{equation} \label{eq:def-EQ}
  \cE_\Omega[\Psi_N]
  := \biggl\langle \Psi_N,
  \sum_{j=1}^N (-\Delta_{x_j})^s_{|\Omega} \Psi_N \biggr\rangle.
\end{equation}

So far we have defined the Neumann Laplacian $(-\Delta)^s_{|\Omega}$ on
$L^2(\R^d)$ for convenience, but it can be restricted naturally to an operator
on $L^2(\Omega)$.
It is important that the kernel of this restriction has only finite dimensions. 
\begin{lemma}\mbox{\normalfont(Lower bound on Neumann Laplacian)}
\label{lem:lower-bound-vN-Laplacian} 
Let $d\ge 1$ and $s>0$.
Then for any cube $Q\subset \R^d$ we have the operator inequality on $L^2(Q)$:
\bq \label{eq:Neumann-bound}
  (-\Delta)^s_{|Q} \ge \frac{C_{d,s}}{|Q|^{2s/d}}(1 - P_q)
\eq
for a constant $C_{d,s}>0$ independent of $Q$ and a rank-$q$ projection $P_q$,
where   
$$q := \# \{ \textup{multi-indices}\ \alpha \in \{0,1,\dots\}^d : 0
\le |\alpha|<s \}.$$
\end{lemma}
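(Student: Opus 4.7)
The plan is to reduce the statement to a Poincaré-type inequality on a fixed cube via scaling, identify the kernel of the associated quadratic form with the polynomials of degree strictly less than $s$, and then invoke compactness to produce the spectral gap.

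First I would scale out the side length. If $Q$ has side length $L$, then $|Q|=L^d$, and under the dilation $u(x)\mapsto L^{d/2}u(Lx)$ the quadratic form $\|u\|_{\dot H^s(Q)}^2$ rescales as $L^{-2s}\|u\|_{\dot H^s(Q_1)}^2$ (where $Q_1$ is the unit cube obtained after translation), while the $L^2$-norm is preserved. Both for integer $s$ (where one differentiates $|\alpha|=s$ times) and for fractional $s=m+\sigma$ (where the double integral carries the weight $|x-y|^{-d-2\sigma}$) this scaling is immediate from changing variables. So it suffices to prove the inequality on the unit cube with some absolute constant $C_{d,s}>0$, and the factor $|Q|^{-2s/d}$ will appear automatically after scaling back.

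Next I would study the form $q_s[u]:=\|u\|_{\dot H^s(Q_1)}^2$ on $L^2(Q_1)$. This is a closed non-negative quadratic form, and so defines a self-adjoint operator $A_s\ge 0$. The kernel of $A_s$ is exactly $\{u:q_s[u]=0\}$. For integer $s$, the vanishing of all derivatives $D^\alpha u$ with $|\alpha|=s$ forces $u$ to be a polynomial of degree $<s$. For $s=m+\sigma$ with $\sigma\in(0,1)$, the vanishing of the double-integral seminorm forces each $D^\alpha u$ with $|\alpha|=m$ to be constant, so $u$ is again a polynomial of degree $\le m$, which is precisely degree $<s$. In either case, the kernel is the space of polynomials of degree strictly less than $s$ restricted to $Q_1$; its dimension is exactly the number $q$ of multi-indices $\alpha\in\{0,1,\dots\}^d$ with $|\alpha|<s$. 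Let $P_q$ denote the orthogonal projection onto this kernel.

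It remains to produce a gap: I would show that $A_s$ has compact resolvent on $L^2(Q_1)$, so its spectrum is discrete and the restriction of $A_s$ to $(\mathrm{Ran}\,P_q)^\perp$ has a strictly positive lowest eigenvalue $\lambda_1=:C_{d,s}$. Compactness follows from the Rellich–Kondrachov-type embedding $H^s(Q_1)\hookrightarrow L^2(Q_1)$ together with the standard equivalence between $\|u\|_{H^s(Q_1)}^2$ and $\|u\|_{L^2(Q_1)}^2+q_s[u]$ on the unit cube. Combined with the identification of the kernel, the min-max principle then yields
\begin{equation*}
q_s[u]\;\ge\;C_{d,s}\,\bigl\langle u,(1-P_q)u\bigr\rangle_{L^2(Q_1)}
\quad\text{for all } u\in H^s(Q_1),
\end{equation*}
which is exactly \eqref{eq:Neumann-bound} on the unit cube. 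Scaling restores the factor $|Q|^{-2s/d}$.

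The main obstacle will be verifying compactness and the precise identification of the zero-modes in the non-integer case, since the Gagliardo-type seminorm is nonlocal and does not come with obvious boundary conditions. The equivalence of $q_s+\|\cdot\|_{L^2}^2$ with the full $H^s(Q_1)$-norm on a Lipschitz domain is the standard but nontrivial input; once it is in hand, the Rellich compactness and the discrete-spectrum argument are routine, and the constant $C_{d,s}$ is simply the first nonzero eigenvalue of the Neumann fractional Laplacian on the unit cube, depending only on $d$ and $s$.
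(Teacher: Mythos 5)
Your argument is correct, and it gives a complete proof for all $s>0$, whereas the paper itself only spells out the case $s=1$ (via the explicit cosine eigenfunctions of the Neumann Laplacian on a cube, whose zero eigenvalue is simple) and defers the general case to \cite[Lemma~11]{LunNamPor-16}. Your route is the abstract one: scale to the unit cube, show that the kernel of the quadratic form $q_s[u]=\|u\|_{\dot H^s(Q_1)}^2$ is exactly the polynomials of degree $<s$ (hence of dimension $q$), establish that the embedding of the form domain into $L^2(Q_1)$ is compact, and conclude a spectral gap by a compactness/contradiction argument with $P_q$ the projection onto the kernel. This is essentially the route of the cited Lemma~11, so it buys generality and conceptual clarity; the explicit $s=1$ argument in the paper buys nothing extra here except that it avoids invoking Rellich--Kondrachov and the broken-norm equivalence, which is why the paper presents it as the ``obvious'' case. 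You have correctly isolated the one nontrivial technical input, namely that on a Lipschitz domain (certainly on a cube) the norm $\|u\|_{L^2(Q_1)}^2+q_s[u]$ is equivalent to $\|u\|_{H^s(Q_1)}^2$ — equivalently, a Poincar\'e inequality with mean-value (or low-order-Taylor) normalization for the homogeneous seminorm; once that is in hand, compactness, discreteness of the spectrum, and the min--max bound are routine. Two small remarks: first, in the fractional case $s=m+\sigma$ the identification of the kernel uses that vanishing of the Gagliardo seminorm for each $D^\alpha u$, $|\alpha|=m$, forces $D^\alpha u$ to be a.e.\ constant on the connected set $Q_1$, so $u$ is a polynomial of degree $\le m$, and since $m<s$ this is exactly ``degree $<s$'' — you state this correctly; second, the compactness/contradiction argument can be run directly without first proving closedness of the form as a separate step, since you only need that $L^2$-bounded sequences with $q_s\to 0$ subconverge strongly in $L^2$, which again follows from the norm equivalence plus Rellich.
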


\begin{proof} If ${s=1}$, then the result is obvious since the eigenvalues of
the Neumann Laplacian on $L^2(Q)$ are given explicitly by
$\bigl\{|Q|^{-2/d}|2\pi k|^2 \ \big| \ k\in  \{0,1,\dots\}^d\bigr\}$;
in particular the  eigenvalue $0$ is single. We refer to \cite[Lemma
11]{LunNamPor-16} for the general case.  
\end{proof}

A consequence of \eqref{eq:Neumann-bound} and \eqref{eq:Pauli-gamma} is  
\begin{lemma}\mbox{\normalfont(Local exclusion for fermions)}
\label{lem:local-exclusion-fermions} 
Let $d\ge 1$ and $s>0$.
Let $\Psi_N$ be a normalized fermionic wave function in $H^s(\R^{dN})$
satisfying 
\eqref{eq:Pauli}.
Then for any cube $Q\subset \R^d$, we have 
\begin{equation} \label{eq:local-exclusion}
  \cE_Q[\Psi_N] \ge  C_{d,s} |Q|^{-2s/d}
  \biggl[ \int_Q \rho_{\Psi_N}^{\ }(x)\, \d x - q \biggr]_+,
\end{equation}
where $\cE_Q[\Psi_N]$ is defined in \eqref{eq:def-EQ} and
$q$ is given in Lemma \ref{lem:lower-bound-vN-Laplacian}.
\end{lemma}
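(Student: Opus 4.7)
The plan is to translate the local energy into a trace against the one-body density matrix, then combine the operator inequality from Lemma \ref{lem:lower-bound-vN-Laplacian} with the Pauli bound \eqref{eq:Pauli-gamma} to extract the claimed bound in terms of the local mass $\int_Q \rho_{\Psi_N}$.

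First I would rewrite the local energy as a one-body trace. By the standard reduction $\sum_{j=1}^N \langle \Psi_N, A_{x_j}\Psi_N\rangle = \Tr(A\gamma_{\Psi_N})$ applied to the one-body form $A=(-\Delta)^s_{|Q}$, together with the fact that $\langle u,(-\Delta)^s_{|Q} u\rangle = \|u\|_{\dot H^s(Q)}^2$ depends only on the values of $u$ on $Q$ (for $s\in\mathbb{N}$) or on $Q\times Q$ (for $s\notin\mathbb{N}$), we obtain
\begin{equation*}
  \cE_Q[\Psi_N] = \Tr_{L^2(Q)}\bigl((-\Delta)^s_{|Q}\, \gamma_Q\bigr),
\end{equation*}
where $\gamma_Q$ denotes the operator on $L^2(Q)$ with integral kernel $\gamma_{\Psi_N}(x,y)\,\1_Q(x)\1_Q(y)$. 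This is the step I would spend the most care on: in the non-integer case the seminorm $\|\cdot\|_{\dot H^s(Q)}$ involves a double integral over $Q\times Q$, so one must check that the standard kernel-level identity still identifies the local form as a trace against the restricted one-body density matrix.

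Next I would transfer the Pauli principle to the restriction. Since $0\le \gamma_{\Psi_N}\le 1$ on $L^2(\R^d)$, sandwiching by $\1_Q$ preserves the inequalities (using $\1_Q^2 = \1_Q \le 1$), so $0\le \gamma_Q\le 1$ as operators on $L^2(Q)$. In particular $\Tr_{L^2(Q)}(\gamma_Q) = \int_Q \rho_{\Psi_N}(x)\,\d x$.

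Now I would combine the two ingredients. Applying Lemma \ref{lem:lower-bound-vN-Laplacian} and taking the trace against $\gamma_Q\ge 0$,
\begin{equation*}
  \cE_Q[\Psi_N] \ge \frac{C_{d,s}}{|Q|^{2s/d}}\Bigl(\Tr(\gamma_Q) - \Tr(P_q\gamma_Q)\Bigr).
\end{equation*}
To control the projected term I use the Pauli bound: $\Tr(P_q\gamma_Q) = \Tr(P_q\gamma_Q P_q) \le \Tr(P_q\cdot 1\cdot P_q) = q$. Substituting gives the inequality with $\int_Q \rho_{\Psi_N} - q$ on the right, and since $\cE_Q[\Psi_N]\ge 0$ we may freely replace this by its positive part, yielding \eqref{eq:local-exclusion}. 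The only genuine technicality I anticipate is the first step above, namely making rigorous the identification of $\cE_Q[\Psi_N]$ with a trace against the localized density matrix in the non-local, fractional regime; everything else is a clean operator-theoretic calculation.
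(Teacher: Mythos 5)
Your proposal is correct and follows essentially the same route as the paper's proof: both reduce the local energy to a one-body trace $\Tr\bigl((-\Delta)^s_{|Q}\gamma\bigr)$, invoke the Neumann lower bound of Lemma \ref{lem:lower-bound-vN-Laplacian}, and use the Pauli bound $0\le\gamma\le 1$ to control the finite-rank projection term by $\Tr(P_q)=q$. The only cosmetic difference is that the paper expands $\gamma_{\Psi_N}$ via its spectral decomposition and applies $\lambda_n\le 1$ together with Bessel, whereas you argue at the level of the restricted operator $\gamma_Q=\1_Q\gamma_{\Psi_N}\1_Q$ directly with $\Tr(P_q\gamma_Q P_q)\le\Tr(P_q)$; the two are equivalent, and the trace identity you flag as delicate holds for any one-body operator defined through its quadratic form regardless of whether $s$ is an integer, so no extra care is needed there.
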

In the non-relativistic case ${s=1}$, this weak formulation of the exclusion
principle was used by Dyson and Lenard in their first proof of the stability
of matter~\cite{DysLen-67}.
As realized in \cite{LunSol-13,LunSol-14}, this can be used as a key tool to
derive Lieb--Thirring inequalities.   

\begin{Proof}[Proof of Lemma \ref{lem:local-exclusion-fermions}]
We use the spectral decomposition 
\bq \label{eq:spec-gamma}
\gamma_{\Psi_N}= \sum_{n\ge 1} \lambda_n |u_n\rangle \langle u_n|
\quad \text { on } L^2(\R^d)
\eq
where $\{u_n\}_{n\ge 1} \subset H^s(\R^d)$ are orthonormal in $L^2(\R^d)$ and
$0\le \lambda_n \le 1$ due to \eqref{eq:Pauli-gamma}.
Then using \eqref{eq:Neumann-bound} we have
\begin{align*}
  \cE_Q[\Psi_N]
  &= \Tr \bigl( (-\Delta)^s_{|Q} \gamma_{\Psi_N} \bigr)
    =  \sum_{n\ge 1} \lambda_n \Bigl\langle u_n, (-\Delta_{x_j})^s_{|Q}
    u_n \Bigr\rangle_{L^2(\R^d)}\\
  &\ge \frac{C_{d,s}}{|Q|^{2s/d}} \sum_{n\ge 1} \lambda_n \bigl\langle
    u_n, \1_{Q}(1- P_q)  \1_{Q} u_n \bigr\rangle_{L^2(\R^d)} \\
  &\ge \frac{C_{d,s}}{|Q|^{2s/d}}  \Biggl[ \sum_{n\ge 1} \lambda_n
    \int_{Q} \bigl|u_n(x)\bigr|^2 \,\d x
    - \Tr (P_q) \Biggr]\\
  &= \frac{C_{d,s}}{|Q|^{2s/d}}
    \biggl[ \int_{Q}\rho_{\Psi_N}^{\ }(x)\,\d x - q \biggr]. 
\end{align*}
Also obviously the left side of \eqref{eq:local-exclusion} is
nonnegative. This completes the proof of \eqref{eq:local-exclusion}.  
\end{Proof}

The second key ingredient to prove the Lieb--Thirring inequality
\eqref{eq:LT-kinetic-fer} is a Gagliardo--Niren\-berg inequality on bounded
domains (see \cite[Lemma 8]{LunNamPor-16}).
This part requires no symmetry condition on the wave functions.  
\begin{lemma}\mbox{\normalfont(Local uncertainty)}
\label{lem:local-uncertainty}
Let $d\ge 1$ and $s>0$. Let $\Psi_N$ be a wave function in $H^s(\R^{dN})$ 
for arbitrary $N \ge 1$ and let $Q$ be an arbitrary cube in $\R^d$.
Then 
\begin{equation} \label{eq:local-uncertainty}
  \cE_Q[\Psi_N] \ge  C_{d,s}
  \frac{\int_Q \rho_{\Psi_N}^{\ }(x)^{1+2s/d}\,\d x}
  {\Bigl(\int_Q \rho_{\Psi_N}^{\ }(x)\,\d x \Bigr)^{2s/d}} 
		- \frac{1}{|Q|^{2s/d}} \int_Q \rho_{\Psi_N}^{\ }(x)\,\d x. 
\end{equation}
\end{lemma}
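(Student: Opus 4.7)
The plan is to reduce the local kinetic energy $\cE_Q[\Psi_N]$ to a functional of the one-body density $\rho_{\Psi_N}$, and then close the estimate with a one-body local Gagliardo--Nirenberg--Sobolev inequality on the cube $Q$. In contrast with Lemma~\ref{lem:local-exclusion-fermions}, I would use only the positivity $\lambda_n \ge 0$ from the spectral decomposition \eqref{eq:spec-gamma}, never the upper bound $\lambda_n \le 1$; this is precisely what allows the antisymmetry hypothesis on $\Psi_N$ to be dropped.

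First, expand $\gamma_{\Psi_N} = \sum_n \lambda_n |u_n\rangle\langle u_n|$ as in \eqref{eq:spec-gamma}, so that $\rho_{\Psi_N}(x) = \sum_n \lambda_n |u_n(x)|^2$ and $\cE_Q[\Psi_N] = \sum_n \lambda_n \|u_n\|_{\dot H^s(Q)}^2$. The key reduction is a fractional Hoffmann--Ostenhof-type inequality on $Q$,
\begin{equation*}
\sum_{n\ge 1} \lambda_n\, \|u_n\|_{\dot H^s(Q)}^2 \;\ge\; \bigl\|\sqrt{\rho_{\Psi_N}}\bigr\|_{\dot H^s(Q)}^2.
\end{equation*}
For $s\in(0,1)$ this follows from the double-integral representation of the seminorm together with the pointwise Cauchy--Schwarz bound $|\gamma_{\Psi_N}(x,y)| \le \sqrt{\rho_{\Psi_N}(x)\,\rho_{\Psi_N}(y)}$, which yields $\sum_n \lambda_n|u_n(x)-u_n(y)|^2 \ge \bigl(\sqrt{\rho_{\Psi_N}(x)} - \sqrt{\rho_{\Psi_N}(y)}\bigr)^2$ pointwise in $(x,y)$. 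The integer case follows from the analogous pointwise bound applied to $D^\alpha u_n$, and the general non-integer $s$ combines the two.

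Second, I would establish a local Gagliardo--Nirenberg--Sobolev inequality on $Q$: for every $f \in H^s(Q)$,
\begin{equation*}
\int_Q |f(x)|^{2 + \frac{4s}{d}}\,\d x \;\le\; C_{d,s} \Bigl(\|f\|_{\dot H^s(Q)}^2 + |Q|^{-2s/d}\|f\|_{L^2(Q)}^2\Bigr)\|f\|_{L^2(Q)}^{4s/d}.
\end{equation*}
By scaling, it suffices to treat the unit cube, on which this is the standard fractional Sobolev embedding $H^s \hookrightarrow L^{2 + 4s/d}$ pulled back via a bounded extension operator $H^s(Q) \to H^s(\R^d)$; the additive term involving $|Q|^{-2s/d}$ is unavoidable because constants lie in the kernel of $\|\cdot\|_{\dot H^s(Q)}$.

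Applying this inequality with $f = \sqrt{\rho_{\Psi_N}}$ and substituting the Hoffmann--Ostenhof bound gives
\begin{equation*}
\int_Q \rho_{\Psi_N}^{1 + \frac{2s}{d}}\,\d x \;\le\; C_{d,s}\Bigl(\cE_Q[\Psi_N] + |Q|^{-2s/d}\!\!\int_Q \rho_{\Psi_N}\,\d x\Bigr) \Bigl(\int_Q \rho_{\Psi_N}\,\d x\Bigr)^{2s/d},
\end{equation*}
which rearranges directly into \eqref{eq:local-uncertainty} after dividing by $\bigl(\int_Q \rho_{\Psi_N}\bigr)^{2s/d}$ and absorbing the constant. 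The main obstacle is the Hoffmann--Ostenhof step for non-integer $s > 1$, where the Cauchy--Schwarz argument must be applied at the level of the mixed derivatives $D^\alpha u_n$ inside the nonlocal double integral defining $\|\cdot\|_{\dot H^s(Q)}^2$; all other ingredients are standard.
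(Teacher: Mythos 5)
Your approach — reduce $\cE_Q[\Psi_N]$ to $\|\sqrt{\rho}\|_{\dot H^s(Q)}^2$ via a Hoffmann--Ostenhof inequality and then invoke a one-body Gagliardo--Nirenberg inequality for $\sqrt{\rho}$ — is genuinely different from the paper's, and it works cleanly for $s\in(0,1]$. For $s\in(0,1)$ the pointwise Cauchy--Schwarz argument inside the double integral is exactly right, and for $s=1$ the classical Hoffmann--Ostenhof inequality does the job. You also correctly identify that only $\lambda_n\ge 0$ is used, so the estimate holds without antisymmetry.

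However, there is a genuine gap for $s>1$, and the lemma is stated for all $s>0$. The step you dismiss as routine — ``the integer case follows from the analogous pointwise bound applied to $D^\alpha u_n$'' — is in fact false once $|\alpha|\ge 2$. The Hoffmann--Ostenhof inequality relies crucially on the chain-rule structure of a \emph{single} derivative: $\nabla\sqrt{\rho}=\Re\bigl(\sum_n\lambda_n\overline{u_n}\nabla u_n\bigr)/\sqrt{\rho}$ is a linear combination that Cauchy--Schwarz controls, whereas $D^\alpha\sqrt{\rho}$ for $|\alpha|\ge 2$ involves uncontrollable negative powers of $\rho$. A concrete failure: with $N=1$, $\lambda_1=1$, $u(x)=x_1$ on a cube containing $\{x_1=0\}$, one has $D^\alpha u\equiv 0$ for all $|\alpha|=2$, yet $\sqrt{\rho}=|x_1|$ has a distributional second derivative that is not in $L^2$, so $\|\sqrt{\rho}\|_{\dot H^2(Q)}=\infty>0=\|u\|_{\dot H^2(Q)}$. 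Thus the bound $\cE_Q[\Psi_N]\ge\|\sqrt{\rho}\|_{\dot H^s(Q)}^2$ simply does not hold for $s\ge 2$, and the same obstruction afflicts non-integer $s>1$ through the $D^\alpha$ terms with $|\alpha|=m\ge 1$. The paper's proof avoids this trap: rather than pushing the many-body estimate through $\sqrt{\rho}$, it applies the one-body Gagliardo--Nirenberg inequality \eqref{eq:local-uncertainty-Hs-u} to each $v_n=\lambda_n^{1/2}u_n$ \emph{individually}, then recombines via H\"older's inequality for sums on the left and the triangle inequality in $L^{1+2s/d}$ on the right. That route never differentiates a nonlinear function of $\rho$ and hence works uniformly in $s>0$. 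Your argument can be retained as an alternative proof for $s\le 1$, but to cover the full range you need the Hölder-plus-triangle recombination (or an entirely different replacement for the Hoffmann--Ostenhof step).
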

\begin{Proof}[Sketch of proof]
By translation and dilation, it suffices to prove \eqref{eq:local-uncertainty}
for $Q=[0,1]^d$.  
When $N=1$,  \eqref{eq:local-uncertainty} is equivalent to the
Gagliardo--Niren\-berg inequality
\bq\label{eq:local-uncertainty-Hs-u}
\|u\|_{H^s(Q)}^{\theta} \|u\|_{L^2(Q)}^{1-\theta}
\ge C_{d,s} \|u\|_{L^{q}(Q)},
\quad q=2+ \frac{4s}{d},\quad  \theta=\frac{d}{d+2s}.\qquad
\eq 
By the extension theorem (see \cite[Theorem~7.41]{Adams-75}), it suffices to
prove that  
$$ \|U\|_{\dot H^s(\R^d)}^{\theta} \|U\|_{L^2(\R^d)}^{1-\theta} 
\ge C_{d,s} \|U\|_{L^{q}(\R^d)},
\quad q=2+\frac{4s}{d},\quad\theta=\frac{d}{d+2s}
$$
which follows from Sobolev's embedding theorem. 
	
For $N\ge 1$, we can use the spectral decomposition \eqref{eq:spec-gamma} and
write 
\begin{displaymath}
  \cE_Q[\Psi_N] = \Tr \bigl( (-\Delta)^s_{|Q} \gamma_{\Psi_N}  \bigr)
  =  \sum_{n\ge 1} \| v_n\|_{\dot{H}^s (Q)}^2
\end{displaymath}
with $v_n= \lambda^{1/2} u_n$.
Using H\"older's inequality (for sums), the one-body inequality
\eqref{eq:local-uncertainty-Hs-u} and the triangle inequality we get 
\begin{eqnarray*}
  &&\left( \int_Q \rho_{\Psi}^{\ }(x)\,\d x\right)^{\frac{2s}{d+2s}}
     \left( \cE_Q[\Psi_N]
    +\int_Q \rho_{\Psi}^{\ }(x)\,\d x  \right)^{\frac{d}{d+2s}}\\
  &=& \left(\sum_{n\ge 1}\int_Q\bigl|v_n(x)\bigr|^2\,\d x\right)^{\frac{2s}{d+2s}}
       \left(\sum_{n\ge 1} \|v_n\|_{{H}^s(Q)}^2 \right)^{\frac d {d+2s}} \\
  &\ge& \sum_{n\ge 1}  \|v_n\|_{L^2(Q)}^{\frac{4s}{d+2s}}\,
       \|v_n\|_{{H}^s(Q)}^{\frac{2d}{d+2s}}  \ge C_{d,s}
       \sum_{n\ge 1}  \bigl\| |v_n|^2 \bigr\|_{L^{1+2s/d}(Q)} \\
  &\ge&  C_{d,s} \Bigl\| \sum_{n\ge 1} |v_n|^2 \Bigr\|_{L^{1+2s/d}(Q)}
       = C_{d,s}\bigl\| \rho_\Psi^{\ } \bigr\|_{L^{1+2s/d}(Q)}.
\end{eqnarray*}
This is equivalent to  \eqref{eq:local-uncertainty}.
\end{Proof}

The third key ingredient to prove the Lieb--Thirring inequality
\eqref{eq:LT-kinetic-fer} is a covering lemma, which allows to combine the
local exclusion and local uncertainty principles in an efficient way.  
The following is a simplified version of \cite[Lemma 12]{LunNamPor-16}.
\begin{lemma}\mbox{\normalfont(Covering lemma)}
\label{lem:covering}
Let $0\le f\in L^1(\R^d)$ be a function with compact support.
Take $0<\Lambda< \int_{\R^d} f(x)\,\d x$.
Then we can cover $\R^d$ by a collection of disjoint cubes $\{Q\}$ such that 
\begin{equation} \label{eq:covering-0}
  \int_{Q} f(x)\,\d x \le \Lambda, \quad \forall Q
\end{equation}
and there exists $C_{\alpha,q}>0$ for every $\alpha>0$ and $0<q \le (1-\eps)
\Lambda 2^{-d}$ such that 
\begin{equation} \label{eq:covering}
  \sum_{Q} \frac{1}{|Q|^{\alpha}}
  \Biggl( \biggl[\int_{Q} f(x)\,\d x - q \biggr]_+ 
- \eps (1- 2^{-\alpha d} ) 4^{-d} \int_{Q} f(x)\,\d x \Biggr) \ge 0.
\end{equation}
\end{lemma}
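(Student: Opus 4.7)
The plan is to build the partition by a dyadic Calder\'on--Zygmund stopping time and then verify \eqref{eq:covering} by a combinatorial estimate on the resulting tree. Since $f\in L^1(\R^d)$ has compact support with $\int f>\Lambda$, I would first fix a closed cube $Q_0\supset\supp(f)$ with $\int_{Q_0}f>\Lambda$, then subdivide dyadically: any subcube $Q$ with $\int_Q f>\Lambda$ gets bisected into $2^d$ congruent subcubes, while any $Q$ with $\int_Q f\le\Lambda$ is kept as a leaf. Since $f\in L^1$, the recursion terminates a.e.\ and produces a partition $\mathcal P$ of $Q_0$ with $\int_Q f\le\Lambda$ for every $Q\in\mathcal P$. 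Complete the partition to a cover of $\R^d$ by tiling the complement $\R^d\setminus Q_0$ arbitrarily with cubes; these carry no mass and so contribute $0$ to both terms in \eqref{eq:covering}. Property \eqref{eq:covering-0} is then immediate.

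For \eqref{eq:covering}, I would work with the tree $\mathcal T$ of all cubes produced by the construction, with internal nodes $\mathcal I:=\mathcal T\setminus\mathcal P$ characterized by $\int_{\widetilde Q}f>\Lambda$. Fix an internal $\widetilde Q$; partition its $2^d$ children into internal, full-leaf ($q<\int_Q f\le\Lambda$) and empty-leaf ($\int_Q f\le q$) classes, with total masses $I_I,I_F,I_E$ and full-leaf count $n_F$. The hypothesis $q\le(1-\eps)\Lambda 2^{-d}$ gives $I_E+n_F q\le 2^d q\le(1-\eps)\Lambda\le(1-\eps)\int_{\widetilde Q}f$, which combined with $I_F+I_E+I_I=\int_{\widetilde Q}f$ yields the essential local estimate
\[I_F-n_F q\ge \eps\int_{\widetilde Q}f-I_I.\]

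Next I would carry out a dyadic telescoping. Because the $2^d$ children of a parent $\widetilde Q$ all have $|Q_i|=|\widetilde Q|/2^d$, one has $\sum_i \int_{Q_i}f/|Q_i|^\alpha=2^{\alpha d}\int_{\widetilde Q}f/|\widetilde Q|^\alpha$, which telescopes over the tree to give $\sum_{\widetilde Q\in\mathcal I}I_I^{(\widetilde Q)}/|\widetilde Q|^\alpha=2^{-\alpha d}\bigl(\sum_{\widetilde Q\in\mathcal I}W(\widetilde Q)-W(Q_0)\bigr)$ with $W(Q):=\int_Q f/|Q|^\alpha$. Grouping the sum in \eqref{eq:covering} by parent, each internal node contributes $\frac{2^{\alpha d}}{|\widetilde Q|^\alpha}\bigl[(I_F-n_F q)-\mu(I_F+I_E)\bigr]$ with $\mu:=\eps(1-2^{-\alpha d})4^{-d}$. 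Inserting the local estimate, using $I_F+I_E=\int_{\widetilde Q}f-I_I$, and applying the telescoping identity collapses the full sum in \eqref{eq:covering} to
\[\bigl[2^{\alpha d}(\eps-\mu)-(1-\mu)\bigr]\sum_{\widetilde Q\in\mathcal I}W(\widetilde Q)+(1-\mu)W(Q_0),\]
which is nonnegative once the bracket is.

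The main obstacle is the final algebraic check: verifying $2^{\alpha d}(\eps-\mu)\ge 1-\mu$ for $\mu=\eps(1-2^{-\alpha d})4^{-d}$. The $(1-2^{-\alpha d})$ prefactor in the statement records the dyadic telescoping gap between a node and its children, while the $4^{-d}$ factor provides the buffer needed for the bracket to be nonnegative in the range of $\eps$ consistent with $q\le(1-\eps)\Lambda 2^{-d}$. The combinatorial bookkeeping itself is routine; the real work is arranging the stopping time so that internal children, which carry no positive charge directly, are compensated by the positive contributions of their descendants through the telescoping.
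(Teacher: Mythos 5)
Your proposal uses the same Calder\'on--Zygmund stopping time as the paper, but attempts to verify \eqref{eq:covering} by a per-parent telescoping over the internal nodes of the dyadic tree rather than by the paper's grouping of leaves into disjoint chains. Unfortunately the final reduction does not close. Writing $\mu := \eps(1-2^{-\alpha d})4^{-d} = c\eps$ with $c := (1-2^{-\alpha d})4^{-d}$, your bracket is
\[
2^{\alpha d}(\eps-\mu)-(1-\mu)=\bigl[2^{\alpha d}(1-c)+c\bigr]\eps - 1,
\]
which tends to $-1$ as $\eps\to 0$ and, for instance, equals $\frac{15}{8}\eps-1<0$ for all $\eps<8/15$ when $d=1$, $\alpha=1$. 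Since the lemma must hold for every admissible $\eps\in(0,1)$, the telescoping alone cannot produce it, and you correctly flagged this step as the obstacle; the point is that it is not a routine check but a genuine failure of the approach.

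The gap is structural, not merely algebraic. Your per-node inequality $I_F-n_Fq\ge\eps\int_{\widetilde Q}f-I_I$ lets the negative $I_I$ from an internal child be absorbed at the next level up, but the compensating gain per level is only $\eps\int_{\widetilde Q}f$; along a chain of internal cubes whose leaf siblings carry small positive mass (so they pay the $-\mu$ penalty while contributing nothing to $[\,\cdot\,]_+$), the deferred debt accumulates up the tree faster than the per-level $\eps$-gains repay it. The paper avoids averaging over all internal nodes: it distributes the \emph{leaves} into disjoint families $\cF$, each containing a single distinguished smallest pivot cube $Q_m$ with $\int_{Q_m}f\ge 2^{-d}\Lambda$ (which exists by pigeonhole on the $2^d$ children of the internal parent where mass $\ge\Lambda$ concentrates, iterating until the heavy child is a leaf), and with at most $2^d$ cubes of each volume level. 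Then the pivot alone yields
\[
\sum_{Q\in\cF}\frac{1}{|Q|^\alpha}\biggl[\int_Q f-q\biggr]_+\ \ge\ \frac{2^{-d}\Lambda-q}{m^\alpha}\ \ge\ \frac{\eps\, 2^{-d}\Lambda}{m^\alpha},
\]
while the penalty over the whole family is bounded by the geometric sum $\sum_{Q\in\cF}\frac{1}{|Q|^\alpha}\int_Q f\le\frac{2^d}{1-2^{-\alpha d}}\frac{\Lambda}{m^\alpha}$; these are comparable precisely when $\mu\le\eps(1-2^{-\alpha d})4^{-d}$, which is exactly the constant appearing in \eqref{eq:covering}. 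In effect, the paper tracks the surplus along a single heavy descent all the way to its terminating leaf, whereas your telescoping tries to amortize it one level at a time; repairing your argument would require carrying the surplus of the heavy internal child downward instead of cancelling $I_I$ locally, which is precisely what the grouping accomplishes.
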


\begin{proof} Frist, we cover $\supp f$ by a big cube $Q_0$.
Then we divide $Q_0$ into $2^{d}$ disjoint sub-cubes of half-length side. For
every sub-cube $Q$, 
\begin{itemize}
\item If $\int_Q f<\Lambda$, then we stop dividing $Q$.
\item If $\int_Q f\ge \Lambda$, then we continue dividing $Q$ into $2^{d}$
  disjoint sub-cubes and iterate.  
\end{itemize}
This procedure stops after finitely many steps (since $f$ is
integrable) and we obtain a division of $Q_0$ into finitely  any cubes
$Q$'s. We can distribute all these cubes into disjoint groups $\{\cF\}$ such
that in each group~$\cF$: 
\begin{itemize}
\item There exists a smallest cube in $\cF$ such that $\int_Q f \ge
  2^{-d}\Lambda$. 
\item There are at most  $2^d$ cubes of every volume level. 
\end{itemize}
Now we consider each group $\cF$. By the first property of $\cF$, we can find
a smallest cube $Q_m\in \cF$ with $|Q_m|=m$ and  
\begin{align*}
  \sum_{Q \in \cF} \frac{1}{|Q|^{\alpha}}
  \biggl[\int_{Q} f(x)\,\d x - q \biggr]_+
&\ge  \frac{1}{|Q_m|^{\alpha}} \biggl[\int_{Q_m}\! f(x)\,\d x - q \biggr]_+\\
&\ge \frac{1}{m^{\alpha}} (2^{-d}\Lambda-q)
\ge \frac{\eps}{2^d} \frac{\Lambda}{m^{\alpha}}.
\end{align*}
On the other hand, by the second property of $\cF$, there are at most $2^d$
cubes in $\cF$ of volume $2^{kd}m$ for each $k=0,1,\dots$
Moreover, $\int _Q f<\Lambda$ for every cube $Q$.
Hence, 
\begin{align*}
  \sum_{Q\in \cF} \frac{1}{|Q|^\alpha}  \int_Q f(x)\,\d x
  \le \sum_{k\ge 0} \frac{2^d}{(2^{kd}m)^\alpha} \Lambda
  =  \frac{2^d}{1- 2^{-\alpha d}}  \frac{\Lambda}{m^\alpha}. 
\end{align*}
Thus
$$
\sum_{Q \in \cF} \frac{1}{|Q|^{\alpha}}
\Biggl( \biggl[\int_{Q} f(x)\,\d x - q \biggr]_+ 
- \eps (1- 2^{-\alpha d} ) 4^{-d}
\int_{Q} f(x)\,\d x \Biggr) \ge 0.
$$
Summing over all groups $\cF$'s we get the desired inequality. 
\end{proof}

Now we are ready to give an alternative proof of the Lieb--Thirring inequality
\eqref{eq:LT-kinetic-fer}. 
\begin{proof} Let $q$ be as in Lemma \ref{lem:local-exclusion-fermions} and
let $\Lambda=2^{d+1} q$.  
If $N\le \Lambda$, then the desired bound follows immediately from
\eqref{lem:local-uncertainty} by taking $Q\to \R^d$.
If $N> \Lambda$, then by a standard density argument we can assume that
$\Psi_N\in C_c^\infty(\R^d)$.
Then we apply Lemma \ref{lem:covering} with $f=\rho_{\Psi_N}$, $\alpha=2s/d$
and obtain a collection of disjoint cubes $\{Q\}$ covering $\supp
\rho_{\Psi_N}$.  

From the local uncertainty \eqref{eq:local-uncertainty} and
\eqref{eq:covering-0} we have 
\begin{displaymath}
  \cE_{\R^d}[\Psi_N] =\sum_{Q} \cE_{Q}[\Psi_N] \\
  \ge \frac{C_{d,s}}
{ \Lambda^{2s/d}} \int_{\R^d} \rho_{\Psi_N}^{1+2s/d}
- \sum_{Q} \frac{1}{|Q|^{2s/d}} \int_Q \rho_{\Psi_N}^{\ } 
\end{displaymath}
On the other hand, by the local exclusion \eqref{eq:local-exclusion}, for all
$L>0$ we have 
$$
L \cE_{\R^d}[\Psi_N] 
=L \sum_{Q} \cE_{Q}[\Psi_N] \ge \sum_{Q}\frac { L C_{d,s} } { |Q|^{2s/d} }
\biggl[ \int_Q \rho_{\Psi_N}^{\ } - q \biggr]_+.
$$
If we choose $L=L_{d,s}$ such that
$$
\frac{1}{L_{d,s}C_{d,s}}= \frac{1}{2} (1- 2^{-\alpha d} ) 4^{-d},
$$
then \eqref{eq:covering-0} gives us 
$$
 \sum_{Q} \frac 1 { |Q|^{2s/d} }  \Big( L_{d,s} C_{d,s} 
\biggl[ \int_Q \rho_{\Psi_N}^{\ } - q \biggr]_+  
-  \int_Q \rho_{\Psi_N}^{\ } \Big) \ge 0. 
$$
Thus we conclude that 
$$
(1+L_{d,s}) \cE_{\R^d}[\Psi_N]  
\ge \frac{C_{d,s}}{ \Lambda^{2s/d}} \int_{\R^d} \rho_{\Psi_N}^{1+2s/d}.
$$
This completes the proof of \eqref{eq:LT-kinetic-fer}.
\end{proof}

\subsection{Kinetic inequality with semiclassical constant and error term}
\label{sec:3.2}
A natural impression from the proof in Section \ref{sec:3.2} is that it 
involves several non-optimal estimates and potentially gives a rather bad
control on the constant. 
However, it turns out that this proof can be modified to achieve the
semiclassical constant, up to an error which is normally small in
applications. 
Here we focus only on the non-relativistic case $s=1$. 
The following result is taken from \cite{Nam-18}. 

\begin{theorem}  
\label{thm:LT-kin-semi}
Let ${d \ge 1}$.
For any ${N\ge 1}$, let ${\{u_n\}_{n=1}^N \subset H^1(\R^d)}$ be orthonormal
functions in $L^2(\R^d)$ and define the density 
$\rho(x)=\sum_{n=1}^N\bigl|u_n(x)\bigr|^2$. 
Then for all $\eps>0$
\begin{eqnarray}
\label{eq:LT-kin-semi}
\sum_{n=1}^N \int_{\R^d} \bigl|\nabla u_n(x)\bigr|^2\,\d x 
&\ge& (1-\eps) K_d^{\rm cl} 
\int_{\R^d}  \rho(x)^{1+\frac {2}{d}}\,\d x\nonumber\\
&&\mbox{}- \frac{C_d}{\eps^{3+4/d} }  
\int_{\R^d} \left|\nabla \sqrt{\rho(x)}\right|^2\,\d x .
\end{eqnarray}
\end{theorem}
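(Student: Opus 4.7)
My plan follows the Lundholm--Solovej framework of Section~\ref{sec:3.1} but aims at the sharp semiclassical constant $K_d^{\rm cl}$ in the leading term, paying the cost in an error controlled by $\int|\nabla\sqrt{\rho}|^2$. The starting point is to replace the sharp cube decomposition of $-\Delta$ used in Section~\ref{sec:3.1} by an IMS-type smooth localization. Given an adaptive grid of cubes $\{Q\}$ and a subordinate smooth partition of unity $\{\phi_Q\}$ with $\sum_Q\phi_Q^2\equiv 1$, the IMS identity
\begin{equation*}
  \sum_{n=1}^N\int_{\R^d}|\nabla u_n|^2\,\d x
    =\sum_Q\sum_{n=1}^N\int_{\R^d}|\nabla(\phi_Q u_n)|^2\,\d x
      -\sum_Q\int_{\R^d}|\nabla\phi_Q|^2\rho\,\d x
\end{equation*}
holds, and the rank-$N$ projection $\gamma:=\sum_n|u_n\rangle\langle u_n|$ remains bounded by $1$ after conjugation by $\phi_Q$, since $\phi_Q\gamma\phi_Q\le\phi_Q^2\le 1$ on $L^2(\R^d)$. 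The smooth cut-offs, in contrast to the sharp ones of Section~\ref{sec:3.1}, are compatible with a sharp semiclassical bound on each piece, at the price of the gradient correction on the right.

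The second ingredient is a cube-wise Berezin--Li--Yau type bound, dual (via \eqref{eq:L-K}) to the semiclassical eigenvalue bound of Theorem~\ref{thm:LT-ev-frac}. Selecting the cubes by a Calder\'on--Zygmund stopping-time argument in the spirit of Lemma~\ref{lem:covering} so that $\ell_Q^d\bar\rho_Q\sim\Lambda=\Lambda(\eps)$, where $\bar\rho_Q$ is the average of $\rho$ on $Q$ and $\ell_Q$ the side of $Q$, I expect a local estimate of the form
\begin{equation*}
  \sum_{n=1}^N\int|\nabla(\phi_Q u_n)|^2\,\d x
    \ge(1-\eps)K_d^{\rm cl}\int(\phi_Q^2\bar\rho_Q)^{1+2/d}\,\d x
      -\frac{C_d}{\eps^{p}}\,\ell_Q^{-2}\!\int\phi_Q^2\rho\,\d x.
\end{equation*}
To promote $\bar\rho_Q$ to the pointwise $\rho$ in the main term, I would invoke a Poincar\'e inequality for $\sqrt{\rho}$ on $Q$: the oscillation $\int_Q(\sqrt{\rho}-\overline{\sqrt{\rho}}_Q)^2$ is controlled by $\ell_Q^2\int_Q|\nabla\sqrt{\rho}|^2$, which converts the density fluctuation into an error proportional to $\int|\nabla\sqrt{\rho}|^2$. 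Summing over $Q$ and using $\ell_Q^{-2}\sim(\bar\rho_Q/\Lambda)^{2/d}$ produces two error channels: one of size $\Lambda^{-2/d}\int\rho^{1+2/d}$, absorbed into the main term by taking $\Lambda$ large in $\eps$, and one of size $\eps^{-3-4/d}\int|\nabla\sqrt{\rho}|^2$, which is exactly the error of the theorem.

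The main obstacle, I expect, is making the Poincar\'e/oscillation step genuinely quantitative and scale-invariant across the full Calder\'on--Zygmund hierarchy. The key quantitative claim is that whenever a cube is refined because $\rho$ oscillates on its scale, the local contribution of $\int_Q|\nabla\sqrt{\rho}|^2$ is large enough to pay simultaneously for the IMS gradient error $\int_Q|\nabla\phi_Q|^2\rho\sim\ell_Q^{-2}\int_Q\rho$ and the sub-leading Berezin--Li--Yau correction; on cubes where $\rho$ is nearly constant the semiclassical bound is already essentially sharp and no correction is needed. Tracking the precise trade-off between $\Lambda$, the Poincar\'e constant, and the semiclassical loss parameter $\eps$ through the optimization is what generates the unpleasant power $\eps^{-3-4/d}$, and the book-keeping that makes this rigorous is the principal technical difficulty.
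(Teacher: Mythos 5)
Your overall architecture (adaptive cube covering, a local semiclassical bound on each cube, a Poincar\'e step to pass from cube averages to pointwise density) is the right one and matches the structure of the paper's proof. However, there is a genuine gap in the way you propose to localize, and it is precisely the point where the paper makes a different and cleaner choice.

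You replace the sharp decomposition by an IMS partition $\{\phi_Q\}$ with $\sum\phi_Q^2\equiv 1$, and you incur the localization error $\sum_Q\int|\nabla\phi_Q|^2\rho\sim\sum_Q\ell_Q^{-2}\int_Q\rho$. Your claim is that this error is paid for by $\int_Q|\nabla\sqrt{\rho}|^2$, on the grounds that a cube is ``refined because $\rho$ oscillates on its scale.'' But the stopping rule you also propose, $\ell_Q^d\bar\rho_Q\sim\Lambda$, selects cubes by \emph{mass}, not by oscillation: on a small cube of side $\ell_Q$ on which $\rho$ is constant and large, the IMS error $\ell_Q^{-2}\int_Q\rho$ is large while $\int_Q|\nabla\sqrt{\rho}|^2=0$. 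So the gradient term cannot absorb the IMS error, and your trade-off breaks down exactly in the regime where the bound must be sharp. The paper avoids the issue entirely by not using smooth cutoffs. It writes the kinetic energy as an \emph{exact} sum over the cubes with the Neumann form,
\begin{equation*}
\sum_{n}\int_{\R^d}|\nabla u_n|^2\,\d x
=\sum_Q\sum_{n}\int_Q|\nabla u_n|^2\,\d x
=\sum_Q\Tr_{L^2(Q)}\bigl((-\Delta_Q)\gamma_Q\bigr),
\end{equation*}
with $\gamma_Q=\sum_n|\1_Q u_n\rangle\langle\1_Q u_n|\le 1$. There is no localization error at all. The only sub-leading term then comes from the explicit Neumann eigenvalue count
\begin{equation*}
\Tr_{L^2(Q)}(-\Delta_Q-\mu)_-\ge|Q|^{-2/d}\Bigl(-L_{1,d}^{\rm cl}\mu^{1+2/d}-C_d(\mu^{1+1/d}+1)\Bigr),
\end{equation*}
which after optimizing $\mu$ yields a correction of the form $|Q|^{-2/d}\bigl(\int_Q\rho\bigr)^{1+1/d}$, and this is absorbed into a small fraction of the main term by the covering lemma (Lemma~\ref{lem:covering}), not by the gradient term. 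The gradient term $\int|\nabla\sqrt{\rho}|^2$ appears only at the Poincar\'e stage (Step~3), when $|Q|^{-2/d}\bigl(\int_Q\rho\bigr)^{1+2/d}$ is upgraded to $\int_Q\rho^{1+2/d}$. So the two error channels you mention are real, but your bookkeeping assigns the IMS error to the wrong channel; and since IMS is not used, there is in fact nothing in that channel. To repair your argument you would either have to abandon IMS in favor of the Neumann decomposition (as the paper does), or show that the IMS error $\ell_Q^{-2}\int_Q\rho$ can itself be absorbed by the covering-lemma mechanism like the sub-leading Berezin term, which is a different argument from the oscillation heuristic you state.
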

Note that our bound \eqref{eq:LT-kin-semi} implies the  Lieb--Thirring
inequality \eqref{eq:intro-LT-ineq-kin} with a non-sharp constant, thanks to
the Hoffmann--Ostenhof inequality \cite{HO-77} (or the diamagnetic inequality) 
\bq \label{eq:HO}
\sum_{n=1}^N \int_{\R^d} \bigl|\nabla u_n(x)\bigr|^2  
\ge \int_{\R^d} \left|\nabla \sqrt{\rho(x)}\right|^2 \d x.
\eq
Moreover, in many applications, the gradient term is much smaller than the
kinetic energy. 
For example, at the ground state of $N$ ideal (i.e. non-interacting) fermions
in a fixed volume, the gradient term  is proportional to $N$ while the kinetic
energy grows as $N^{1+2/d}$. 
The gradient  terms have also appeared in recent improvements
\cite{BenBleLos-12,LewLie-15} of the Lieb--Oxford estimate on Coulomb exchange
energy \cite{LieOxf-80}. 

Recall that when $d\le 2$, we know that the optimal value of $K_{d}$ is
strictly smaller than $K_d^{\rm cl}$. 
On the other hand,  our bound \eqref{eq:LT-kin-semi} holds for all $d\ge 1$,
so the additional error term is unavoidable.  

\begin{Proof}[Sketch of proof]  \\
{\bf Step 1.} For every cube $Q\subset \R^d$ and every $\mu>0$ we can write
\begin{align*}
\sum_{n\ge 1} \int_Q |\nabla u_n|^2 
&=  \Tr_{L^2(Q)} \bigl( (-\Delta_Q -\mu ) \gamma_Q\bigr) + \mu \int_{Q} \rho\\ 
&\ge \Tr_{L^2(Q)}  (-\Delta_Q -\mu )_- + \mu \int_{Q} \rho  
\end{align*}
with $-\Delta_Q$ the Neumann Laplacian on $L^2(Q)$ 
and $\gamma_Q=\sum\limits_{n\ge 1} |\1_Q u_n\rangle \langle \1_Q u_n| $ 
satisfying $0\le \gamma_Q\le 1$ on $L^2(\R^d)$. Using the explicit eigenvalues
of the Neumann Laplacian, we have 
\begin{align*}
\Tr_{L^2(Q)} (-\Delta_Q -\mu )_-
  &= |Q|^{-2/d}  \hspace*{-0.8em}\sum_{p\in \{0,1,\dots\}^d} 
\hspace*{-1em}\bigl[\pi^2 |p|^2-\mu\bigr]_- \\
  &\ge  |Q|^{-2/d} \Bigl(-L_{1,d}^{\rm cl} \mu^{1+2/d}
                        - C_d (\mu^{1+1/d}+1)\Bigr). 
\end{align*}
Optimizing over $\mu$ we obtain 
\begin{equation} \label{eq:local-LT}
\sum_n \int_Q \bigl|\nabla u_n(x)\bigr|^2\, \d x 
\ge K_{d}^{\rm cl} |Q|^{-2/d}\left[\left(\int_Q \rho  \right)^{1+2/d} 
-C  \left(\int_Q \rho  \right)^{1+1/d}  \right].
\end{equation}
{\bf Step 2.} By a density argument, we can assume $\{u_n\}\subset
C_c^\infty(\R^d)$ and hence $\rho$ has compact support. 
As in the covering lemma, we can cover $\supp \rho$ by disjoint cubes $\{Q\}$
such that  
\bq \label{eq:cv-1}
\int_Q \rho_\gamma\le \Lambda, \quad \forall Q,
\eq 
and
\bq \label{eq:cv-2}
 \sum_Q  |Q|^{-2/d}\left[C\Lambda^{-1/d}
\left(\int_Q \rho_\gamma \right)^{1+2/d} 
-  \left(\int_Q \rho_\gamma \right)^{1+1/d}  \right] \ge 0.
\eq 
{\bf Step 3.} By Poincar\'e's inequality we can show that for every $\eps>0$, 
\begin{eqnarray*}
\frac 1 {|Q|^{2/d}}  \left( \int_{Q} \rho_\gamma \right)^{1+2/d} 
&\ge& \frac 1 {(1+\eps)^{(1+4/d)}} 
\int_Q \rho_\gamma^{1+2/d}\\&&\mbox{} - \frac{C}{\eps^{(1+4/d)} }
\left( \int_Q |\nabla \sqrt{\rho_\gamma}|^2 \right) 
\left(\int_Q \rho_\gamma \right)^{2/d}.  
\end{eqnarray*}
Combining the latter bound with \eqref{eq:cv-1}, \eqref{eq:cv-2} and
\eqref{eq:local-LT}, we obtain 
\begin{eqnarray*}
\sum_Q \int_Q |\nabla u_n|^2 
&\ge& K_{d}^{\rm cl} (1-C\Lambda^{-1/d}) 
\sum_{Q} |Q|^{-2/d} \left(\int_Q \rho_\gamma \right)^{1+2/d}\\
&\ge& K_{d}^{\rm cl} (1-C\Lambda^{-1/d})  
\biggl[ \frac 1 {(1+\eps)^{(1+4/d)}}\int_{\R^d} \rho_\gamma^{1+2/d}\\ 
&&\rule{8em}{0pt}\mbox{} -\frac{ C \Lambda^{2/d} }{\eps^{(1+4/d)}} 
\int_{\R^d} |\nabla \sqrt{\rho_\gamma}|^2 \biggr].
\end{eqnarray*}
Taking $\eps=\Lambda^{-1/d}$ and optimizing over $\Lambda$ we get the desired
result.  
\end{Proof}

Theorem \ref{thm:LT-kin-semi} can be seen as a first step towards the local
density approximation for many-body quantum systems. 
More precisely, for any $N\ge 1$ and $\rho\ge0$ with $\int_{\R^d}\rho=N$, we
can define the Levy--Lieb energy functional \cite{Levy-79,Lieb-83} for the
kinetic operator 
$$
\cK(\rho) = \inf_{\rho_{\Psi_N}^{\ }=\rho } \biggl\langle \Psi_N , 
\sum_{i=1}^N(-\Delta_{x_i}) \Psi_N \biggr\rangle_{L^2(\R^{dN})}.
$$
Here the infimum is taken over all (normalized) fermionic wave functions in
$L^2(\R^{dN})$ whose one-body density is exactly equal to $\rho$. 
Then \eqref{eq:LT-kin-semi} is equivalent to the lower bound
\begin{equation} \label{eq:Kp-lower}
\cK(\rho) \ge (1-\eps) K_d^{\rm cl} \int_{\R^d}  
\rho(x)^{1+\frac {2}{d}}\,\d x - \frac{C_d}{\eps^{3+4/d} }  
\int_{\R^d} \left|\nabla \sqrt{\rho(x)}\right|^2\,\d x, \quad \forall \eps>0. 
\end{equation}
Of course, by the Lieb--Thirring conjecture one expects that both the gradient
term and the $\eps$ dependence can be removed when $d\ge 3$. 

On the other hand, it is conjectured \cite{MarYou-58,Lieb-83} that $\cK(\rho)$
satisfies the upper bound  
\bq \label{eq:Kp-upper-conj}
\cK(\rho) \le  K_d^{\rm cl} \int_{\R^d}  \rho(x)^{1+\frac {2}{d}} \,\d x
+  \int_{\R^d} \left|\nabla \sqrt{\rho(x)}\right|^2\,\d x.
\eq
The appearance of the gradient term on the right side of
\eqref{eq:Kp-upper-conj} is reasonable since the kinetic energy cannot be
controlled by an integral of $\rho$ alone due to the Hoffmann--Ostenhof
inequality \eqref{eq:HO}. 
Recently, Lewin, Lieb and Seiringer proved in \cite{LewLieSei-20} that, for a
grand-canonical analogue $\widetilde{\cK} (\rho)$ of $\cK(\rho)$, 
\begin{equation} \label{eq:Kp-upper}
\widetilde{\cK} (\rho) \le (1+\eps)K_{d}^{\rm cl}
 \int_{\R^d} \rho(x)^{1+\frac 2 d} \,\d x
+ \frac{C}{\eps}\int_{\R^d} \left|\nabla \sqrt{\rho(x)}\right|^2\,\d x, 
\quad \forall \eps>0.
\end{equation}
A result weaker than \eqref{eq:Kp-upper} was used in \cite{GotNam-18} in
the context of proving Gamma-convergence of the Levy--Lieb model
to Thomas--Fermi theory. 
Removing $\eps$ in both \eqref{eq:Kp-lower} and \eqref{eq:Kp-upper} is
interesting and difficult. 
Nevertheless, in the current form, they are already useful to justify the
local density approximation in certain regimes; see
\cite{LewLieSei-20,LewLieSei-19} for further details.

\subsection{Kinetic inequality for functions vanishing on diagonal set} \label{sec:3.3}

Recall that Pauli's exclusion principle \eqref{eq:Pauli} implies that the wave
function $\Psi_N$ vanishes on the diagonal set of $(\R^d)^N$, namely 
\bq \label{eq:dia-vanish}
\Psi(x_1,\dots,x_N)=0\quad\text{ if $x_i=x_j$ for some $i\ne j$}. 
\eq
Thus a natural question is whether the Lieb--Thirring inequality
\eqref{eq:LT-kinetic-fer} remains valid if \eqref{eq:Pauli} is replaced by the
weaker condition \eqref{eq:dia-vanish}. 
This question is nontrivial since \eqref{eq:dia-vanish} is not sufficient to
ensure the operator inequality \eqref{eq:Pauli-gamma}. 
The following answer is taken from \cite{LarLunNam-19}. 
\begin{theorem}{\normalfont(Lieb--Thirring inequality for wave functions
    vanishing on diagonals)} \label{thm:kin-vanish}
Let $d\ge 1$ and $s>0$. 
Let $N\ge 1$ and let $\Psi_N\in C_c^\infty(\R^{dN})$ be a normalized wave
function satisfying \eqref{eq:dia-vanish}. 
Then we have the Lieb--Thirring inequality
\begin{equation} \label{eq:LT-diagonal}
  \biggl\langle \Psi_N,  \sum_{i=1}^N (-\Delta_{x_i})^s\Psi_N \biggr\rangle
  \ge C_{d,s}\int_{\R^d} \rho_{\Psi_N}^{\ }(x)^{1+2s/d}\,\d x 
\end{equation}
with a constant $C_{d,s}>0$ independent of $N$ and $\Psi_N$ if and only if
$2s>d$.  
\end{theorem}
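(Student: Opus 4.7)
The plan is to follow the Lundholm--Solovej framework of Section~\ref{sec:3.1}. The local uncertainty principle (Lemma~\ref{lem:local-uncertainty}) requires no symmetry on $\Psi_N$ and applies verbatim, so the task reduces to replacing the local exclusion principle (Lemma~\ref{lem:local-exclusion-fermions})---whose fermionic proof leans on $0\le\gamma_{\Psi_N}\le 1$---by one adapted to the weaker vanishing condition \eqref{eq:dia-vanish}.

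\textbf{Sufficiency when $2s>d$.} I would aim for a local exclusion of the form
\begin{equation*}
\cE_Q[\Psi_N] \ge c_{d,s}|Q|^{-2s/d}\Bigl[\int_Q \rho_{\Psi_N}^{\ } -1\Bigr]_+,
\end{equation*}
where the constant $-1$ replaces the rank $-q$ of the fermionic case: intuitively, one particle in $Q$ sits freely, but any additional particle must avoid the first and so pays exclusion energy. The analytic input I would establish is a two-body Poincar\'e-type inequality on $Q\times Q$: for every $g\in H^s(Q\times Q)$ with $g(y,y)=0$ for $y\in Q$,
\begin{equation*}
\|g\|_{\dot H^s(Q\times Q)}^2 \ge c_{d,s}|Q|^{-2s/d}\|g\|_{L^2(Q\times Q)}^2.
\end{equation*}
This holds \emph{precisely} when $2s>d$: the trace theorem then ensures the restriction of $H^s(Q\times Q)$ to the codimension-$d$ diagonal $\{y_1=y_2\}$ is well-defined in $L^2$, so the vanishing constraint is closed in $H^s$ and excludes nonzero constants, and the compact embedding $H^s(Q\times Q)\hookrightarrow L^2(Q\times Q)$ then forces a positive spectral gap. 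Applying this pair by pair inside $\Psi_N$, treating the isolated-particle contribution by the one-body Neumann bound of Lemma~\ref{lem:lower-bound-vN-Laplacian}, and combining by a convexity argument in the spirit of the proof of Lemma~\ref{lem:local-exclusion-fermions}, yields the local exclusion. Feeding the result into the covering argument of Section~\ref{sec:3.1} produces \eqref{eq:LT-diagonal}.

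\textbf{Necessity when $2s\le d$.} I would exhibit counterexamples by excising a neighborhood of the diagonal from a bosonic trial state. Fix $u\in C_c^\infty(\R^d)$ with $\|u\|_{L^2}=1$ and a radial cutoff $\chi_\eps\in C^\infty(\R^d)$ that vanishes on $\{|z|\le\eps/2\}$ and equals $1$ on $\{|z|\ge\eps\}$. Set
\begin{equation*}
\Psi_N(x_1,\dots,x_N) = Z_{\eps,N}^{-1}\prod_{i<j}\chi_\eps(x_i-x_j)\prod_{k=1}^N u(x_k),
\end{equation*}
which satisfies \eqref{eq:dia-vanish}. Since a point in $\R^d$ has zero $(-\Delta)^s$-capacity precisely when $2s\le d$, one has $\|\chi_\eps-1\|_{\dot H^s(\R^d)}^2\to 0$ as $\eps\to 0$ (at rate $\eps^{d-2s}$ if $2s<d$ and $|\log\eps|^{-1}$ if $2s=d$). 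Choosing $\eps=\eps(N)$ small enough makes $Z_{\eps,N}\to 1$, $\rho_{\Psi_N}=(1+o(1))N|u|^2$, and
\begin{equation*}
\biggl\langle \Psi_N, \sum_{i=1}^N (-\Delta_{x_i})^s \Psi_N \biggr\rangle \le N\|u\|_{H^s}^2+C N^2\|\chi_\eps-1\|_{\dot H^s}^2 = o\bigl(N^{1+2s/d}\bigr).
\end{equation*}
Since $\int\rho_{\Psi_N}^{1+2s/d}\asymp N^{1+2s/d}$, letting $N\to\infty$ rules out any positive constant $C_{d,s}$ in \eqref{eq:LT-diagonal}.

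\textbf{Main obstacle.} The crux lies in the sufficiency direction: passing from the two-body Poincar\'e bound to a many-body local exclusion with the sharp rank-one correction requires a careful splitting of the kinetic energy across pairs without double counting, together with a convexity or duality argument to absorb the single-particle contribution into the Neumann bound. A secondary delicate point is the borderline case $2s=d$ of the necessity construction, where only logarithmic capacity decay is available and $\eps(N)$ must be chosen with quantitative care so that both the normalization $Z_{\eps,N}$ and the density profile remain close to their bosonic values as $N\to\infty$.
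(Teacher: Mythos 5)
Your sufficiency argument has a genuine gap, and it differs substantially from the paper's route. The paper does not use a two-body Poincar\'e inequality: it proves a strict positivity statement $E_n(Q)>0$ for the \emph{completely localized} $n$-body energy once $n\ge q$ (via a polynomial argument: zero-energy Neumann states are polynomials, and a polynomial vanishing on sufficiently many diagonals must be identically zero), and then upgrades this via the abstract covariant-energy bound of Lemma~\ref{lem:cov-energy} to the quantitative estimate $E_n(Q)\ge C|Q|^{-2s/d}n^{1+2s/d}$, from which the local exclusion follows by many-body localization.

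The specific step in your proposal that fails is the two-body Poincar\'e inequality
\[
\|g\|_{\dot H^s(Q\times Q)}^2 \ge c_{d,s}|Q|^{-2s/d}\|g\|_{L^2(Q\times Q)}^2
\quad\text{for } g \text{ vanishing on the diagonal}.
\]
This is simply false once $s>1$, which covers all of the admissible range $2s>d$ for $d\ge 2$ and part of it for $d=1$. Take $d=1$, $s=3/2$, and $g(x_1,x_2)=x_1-x_2$: it vanishes on the diagonal, it is not identically zero, and $\|g\|_{\dot H^{3/2}(Q\times Q)}=0$ because $g$ is affine. Your justification via the trace theorem and compactness establishes that the constraint $\{g|_{\rm diag}=0\}$ is closed and excludes nonzero constants, but it does not exclude the higher-degree polynomial null vectors of the Neumann operator, some of which do vanish on the diagonal. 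For the same reason, the local exclusion $\cE_Q[\Psi_N]\ge c_{d,s}|Q|^{-2s/d}[\int_Q\rho-1]_+$ with threshold $1$ cannot be correct: $q$ must grow with $s$ (and $d$) so as to exhaust all polynomial null modes, which is precisely why the paper works with a general $q$ from Lemma~\ref{lem:cov-energy} and proves positivity only for $n$ sufficiently large. Passing from a two-body statement to the many-body local exclusion is also more delicate than a pair-by-pair split and convexity: the kinetic energy is a sum of one-body operators, whereas your two-body Poincar\'e requires joint $H^s$ control in two of the $N$ slots, and there is no clean way to distribute $(-\Delta_{x_i})^s$ over the $N-1$ pairs containing $i$ without loss.

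On the necessity side, the idea of excising a diagonal neighborhood from a bosonic trial state is the right one and matches the construction in the reference, but the borderline case $2s=d$ needs a different cutoff: the scaling of your indicator-type cutoff gives $\|\chi_\eps-1\|_{\dot H^s}^2\sim\eps^{d-2s}$, which is of order one (not $|\log\eps|^{-1}$) when $2s=d$. To get the logarithmic smallness you should use a cutoff interpolating logarithmically between two scales, e.g.\ proportional to $\log(|z|/\eps^2)/\log(1/\eps)$ truncated to $[0,1]$. With this modification, the necessity direction is in the same spirit as the paper's.
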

Actually, the condition ${2s>d}$ is related to the Sobolev embedding
$H^s(\R^d)\subset C(\R^d)$. 
Heuristically, this is the minimum condition for \eqref{eq:dia-vanish} to be
``nontrivial,'' otherwise \eqref{eq:LT-diagonal} must fail. 
We refer to  \cite{LarLunNam-19} for a detailed discussion on this negative
direction, and below let us focus only on the derivation of
\eqref{eq:LT-diagonal} when $2s>d$.  

From the general strategy in Section \ref{sec:3.1}, it suffices to derive a
local exclusion bound similar to \eqref{eq:local-exclusion}.  

\begin{Proof}[Sketch of the local exclusion bound]\\  
{\bf Step 1.} First, note that \eqref{eq:local-exclusion} can be reduced to a
simpler estimate:  for scale-covariant systems, such a local exclusion simply
boils down to the strict positivity of the local energy. 
This idea is inspired by Lundholm and Seiringer \cite{LunSei-18} and seems
very helpful for future applications. 
The following abstract formulation is taken from \cite[Lemma 4.1]{LarLunNam-19}. 
\begin{lemma}\mbox{\normalfont(Covariant energy bound)}
\label{lem:cov-energy} 
Assume that to any $n\in \mathbb{N}_0$ and any cube $Q\subset \R^d$
there is associated a non-negative number (`energy') $E_n(Q)$
satisfying the following properties, for some constant $s>0$: 
\begin{itemize}
\item[{\rm (i)}] (scale-covariance)
	$E_n(\lambda Q) = \lambda^{-2s} E_n(Q)$ for all $\lambda > 0$;
\item[{\rm (ii)}] (translation-invariance)
	$E_n(Q+x) = E_n(Q)$ for all $x \in \R^d$;
\item[{\rm (iii)}] (superadditivity)
     For any collection of disjoint cubes $\{Q_j\}_{j=1}^J$ such that their
     union is a cube,  
 $$E_n\Bigl(\bigcup_{j=1}^J Q_j\Bigr) 
  \ge \min_{\{n_j\} \in \mathbb{N}_0^J \, s.t.\, \sum_j n_j = n}\ 
  \sum_{j=1}^J E_{n_j}(Q_j) ;$$
\item[{\rm (iv)}] (a priori positivity) 
There exists $q \ge 0$ such that $E_n(Q) > 0$ for all $n \ge q$.
\end{itemize}
Then there exists a constant $C>0$ independent of $n$ and $Q$ such that
\bq \label{eq:cov-energy}
   E_n(Q) \ge C |Q|^{-2s/d} n^{1+2s/d}, \quad \forall n\ge q.
\eq
\end{lemma}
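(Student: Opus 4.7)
The plan is to reduce the inequality to a scalar statement about $n$ alone and then close it by induction on $n$, using dyadic subdivision as the engine of superadditivity. By scale-covariance (i) and translation-invariance (ii), the product $e(n) := E_n(Q)\,|Q|^{2s/d}$ depends only on $n$, and the target inequality becomes $e(n) \ge C\,n^{1+2s/d}$ for all $n \ge q$. Applying the superadditivity (iii) to the canonical partition of a cube $Q$ into $2^d$ congruent sub-cubes of side half, combined with the scaling $E_{n_j}(Q_j) = 2^{2s}\,|Q|^{-2s/d}\,e(n_j)$, yields the recursive bound
\begin{equation*}
e(n) \ge 2^{2s}\min_{\substack{n_1+\cdots+n_{2^d}=n\\ n_j \in \mathbb{N}_0}} \sum_{j=1}^{2^d} e(n_j).
\end{equation*}

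With this recursion in hand, I would prove $e(n) \ge Cn^{1+2s/d}$ by induction on $n$. The base case covers $q \le n \le n^\ast$ for an explicit threshold $n^\ast$ of order $q$; here condition (iv) gives a strictly positive lower bound on the finitely many values $e(n)$, and one chooses $C$ small enough to absorb them. For the inductive step $n > n^\ast$, I would combine the recursion with the inductive hypothesis $e(k) \ge Ck^{1+2s/d}$ for $q \le k < n$ (and the trivial $e(k) \ge 0$ for $k < q$) to obtain, for any admissible partition,
\begin{equation*}
\sum_{j=1}^{2^d} e(n_j) \ge C \sum_{j \,:\, n_j \ge q} n_j^{1+2s/d}.
\end{equation*}
The adversarial minimum on the right is attained by the equal split $n_j = n/2^d$ (admissible once $n \ge 2^d q$); by Jensen's inequality for the convex function $x \mapsto x^{1+2s/d}$, it equals $C\cdot 2^d \cdot (n/2^d)^{1+2s/d} = 2^{-2s} Cn^{1+2s/d}$. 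Multiplying by the prefactor $2^{2s}$ recovers exactly $Cn^{1+2s/d}$, closing the induction with no loss in the constant.

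The main delicacy I expect is ruling out that the adversary beats the equal split by ``hiding'' particles in sub-cubes where $n_j < q$, which contribute $0$ under the inductive hypothesis. The dangerous strategy is to place $q-1$ particles in each of $2^d-m$ sub-cubes and spread the remaining $S := n-(2^d-m)(q-1)$ evenly over the $m$ big sub-cubes, yielding $CS^{1+2s/d}/m^{2s/d}$. A short derivative calculation shows that, once $n$ exceeds an explicit constant multiple of $q$ depending only on $d$ and $s$, this expression is monotonically decreasing in $m$ on $\{1,\dots,2^d\}$, so the true minimum is indeed $m=2^d$ with $S=n$. This fixes the threshold $n^\ast$ and completes the argument.
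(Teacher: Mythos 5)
The paper itself does not prove Lemma~\ref{lem:cov-energy}; it cites \cite[Lemma 4.1]{LarLunNam-19}. So I evaluate your argument on its own terms. Your overall scheme — normalize to $e(n):=E_n(Q)|Q|^{2s/d}$ via (i)--(ii), derive $e(n)\ge 2^{2s}\min_{\sum n_j=n}\sum_j e(n_j)$ from the dyadic split and (iii), then induct on $n$ with Jensen for the equal split and a separate analysis of the ``hiding'' strategy — is the right one, and your threshold computation for ruling out hiding is correct.

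There is one genuine gap in the inductive step, which is \emph{not} the hiding issue you flag but the concentration issue. You assert that for \emph{any} partition $\sum_j e(n_j)\ge C\sum_{j:\,n_j\ge q}n_j^{1+2s/d}$, using the inductive hypothesis $e(k)\ge Ck^{1+2s/d}$ for $q\le k<n$. But the recursion's minimum is taken over all partitions of $n$, including $(n,0,\dots,0)$, where one $n_j$ equals $n$ itself; there the inductive hypothesis does not apply, and the claimed per-partition bound would require exactly the statement $e(n)\ge Cn^{1+2s/d}$ you are trying to prove. This partition is also not covered by your $m$-parametrized hiding family (which assumes each small cube receives $q-1$ particles, so that for $q\ge 2$ the remaining big cube gets $S=n-(2^d-1)(q-1)<n$); the concentrated partition has no hiding at all. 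The fix is short but needs to be said: if the minimizer were $(n,0,\dots,0)$ (or a permutation), then the recursion would give $e(n)\ge 2^{2s}\bigl(e(n)+(2^d-1)e(0)\bigr)\ge 2^{2s}e(n)$, forcing $e(n)\le 0$ and contradicting property~(iv), which gives $e(n)>0$ for $n\ge q$. Hence the minimizer must have all $n_j<n$, at which point your inductive hypothesis applies and the Jensen/hiding analysis closes the induction as you describe. (As a side remark, this same computation shows the hypotheses (i), (iii), (iv) are mutually inconsistent for $q\le 1$, so the interesting regime is $q\ge 2$; this is not a flaw in your argument, just worth keeping in mind when choosing the threshold $n^\ast$.)
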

{\bf Step 2.} The above abstract result applies to the  local energy 
\bq \label{eq:def-EnQ}
  E_N(Q) := \inf  \biggl\langle \1_{\Omega^N}\Psi_N, 
  \sum_{j=1}^N (-\Delta_{x_j})^s_{|\Omega} \1_{\Omega^N} \Psi_N 
  \biggr\rangle_{L^2(\Omega^N)}  
\eq
where the infimum is taken over all wave functions $\Psi_N \in
C_c^\infty(\R^d)$ satisfying \eqref{eq:dia-vanish} and normalized
$\|\Psi_N\|_{L^2(Q^N)}=1$. 
Note that the right side of \eqref{eq:def-EnQ} is in different from
$\cE_Q[\Psi_N]$ in \eqref{eq:def-EQ} because only the ``completely localized
energy'' in \eqref{eq:def-EnQ} satisfies the superadditivity in Lemma
\ref{lem:cov-energy} (iii). 
Moreover, the conditions in (i) and (ii) obviously hold. 

The key assumption \eqref{eq:dia-vanish} is used to derive the strict
positivity in Lemma \ref{lem:cov-energy} (iv). 
This is nontrivial.
The central facts used in the proof is that the kernel of the associated
Neumann Laplacian must be a polynomial (of many variables), and that if a
polynomial vanishes on too many diagonals then it must be zero. We refer to
\cite[Theorem 5.1]{LarLunNam-19} for details. 

\noindent {\bf Step 3.} Finally, using  \eqref{eq:cov-energy} and a many-body
localization technique, we can deduce the desired local exclusion bound 
$$\cE_Q[\Psi_N] \ge C |Q|^{-2s/d} \Big[ \int_Q \rho_{\Psi_N} - q\Big]_+$$
with the same constants $C,q$ in \eqref{eq:def-EnQ} and with $\cE_Q[\Psi_N]$
defined in \eqref{eq:def-EQ}.
See \cite[Lemma 4.4]{LarLunNam-19} for details. 
\end{Proof}

\subsection{Lieb--Thirring inequality for interacting systems} \label{sec:3.4}
In order to obtain an exclusion bound, instead of putting a condition on wave
functions, one can also add a repulsive term to the Hamiltonian.
Given the kinetic operator $(-\Delta)^s$, it is natural to consider the interaction
potential $w(x)=|x|^{-2s}$ which has the same scaling property.
This leads to the following 
\begin{theorem}\mbox{\normalfont(Lieb--Thirring inequality for interacting systems)}
  \label{thm: LT-int-1}\\
Let ${d\ge 1}$, ${s>0}$ and ${\lambda>0}$.
For any $\Psi_N\in H^s(\R^{dN})$ which is normalized in $L^2(\R^{dN})$, we have
\begin{eqnarray} \label{eq-LT-potential-s}
  &&\left\langle \Psi_N, \left(\sum_{i=1}^N (-\Delta_{x_i})^s
     + \sum_{1 \leq i<j \leq N} \frac{\lambda}{\abs{x_i-x_j}^{2s}} \right)
     \Psi_N \right\rangle\nonumber\\ 
  &\geq& C_{\rm LT}(d,s,\lambda)
         \int_{\R^d} \rho_{\Psi_N}^{\ }(x)^{1+\frac{2s}{d}}\,\d x.
\end{eqnarray}
The constant $C_{\rm LT}(d,s,\lambda)>0$ is independent of $N$ and $\Psi_N$. 
\end{theorem}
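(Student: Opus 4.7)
The plan is to implement the Lundholm--Solovej scheme from Section~\ref{sec:3.1}, with the crucial twist that the repulsive pair interaction, rather than Pauli's exclusion principle, supplies the local exclusion bound. No symmetry assumption on $\Psi_N$ is required, and both the local uncertainty (Lemma~\ref{lem:local-uncertainty}) and the covering lemma (Lemma~\ref{lem:covering}) can be used off the shelf. The scaling $w(x)=\lambda|x|^{-2s}$ is chosen precisely so that the interaction has the same scale-covariance as $(-\Delta)^s$, matching the weight $|Q|^{-2s/d}$ that appears throughout the argument.

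The new ingredient to establish is a local exclusion estimate for the interaction. Fix a cube $Q\subset \R^d$ of side $\ell=|Q|^{1/d}$, so that $|x_i-x_j|\le \sqrt d\,\ell$ whenever $x_i,x_j\in Q$, and hence $|x_i-x_j|^{-2s}\ge d^{-s}|Q|^{-2s/d}$. Writing $n_Q(x_1,\dots,x_N):=\sum_{i=1}^N \1_Q(x_i)$ and discarding the pairs that are not both in $Q$ (which contribute nonnegatively), one obtains the pointwise lower bound
\[
\sum_{1\le i<j\le N}\frac{\lambda}{|x_i-x_j|^{2s}}\;\ge\;\frac{\lambda\,d^{-s}}{2|Q|^{2s/d}}\,n_Q(n_Q-1).
\]
Taking the expectation in $|\Psi_N|^2$, using $\langle n_Q\rangle_{\Psi_N}=\int_Q \rho_{\Psi_N}$, and invoking Jensen's inequality for the convex function $t\mapsto t(t-1)$ on $[0,\infty)$ together with the elementary estimate $t(t-1)_+\ge (t-1)_+$ for $t\ge 0$ yields
\[
\Bigl\langle \Psi_N,\sum_{i<j}\frac{\lambda}{|x_i-x_j|^{2s}}\Psi_N\Bigr\rangle\;\ge\;\frac{C_{d,s}\,\lambda}{|Q|^{2s/d}}\Bigl[\int_Q \rho_{\Psi_N}-1\Bigr]_+.
\]
This is the interacting analogue of Lemma~\ref{lem:local-exclusion-fermions} with the effective null-mode count $q=1$.

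With the local exclusion in hand, the argument now parallels the fermionic one. By a standard density argument reduce to $\Psi_N\in C_c^\infty(\R^{dN})$, and apply the covering lemma (Lemma~\ref{lem:covering}) to $f=\rho_{\Psi_N}$ with $\alpha=2s/d$, $q=1$, and some $\Lambda>2^{d+1}$, producing a disjoint cover $\{Q\}$ of $\supp\rho_{\Psi_N}$ with $\int_Q \rho_{\Psi_N}\le \Lambda$ and satisfying \eqref{eq:covering}. Summing the local uncertainty bound \eqref{eq:local-uncertainty} over the cover and using $\int_Q \rho_{\Psi_N}\le \Lambda$ gives
\[
\cE_{\R^d}[\Psi_N]\;\ge\;\frac{C_{d,s}}{\Lambda^{2s/d}}\int_{\R^d}\rho_{\Psi_N}^{1+2s/d}-\sum_Q\frac{1}{|Q|^{2s/d}}\int_Q\rho_{\Psi_N},
\]
while a constant multiple of the summed local exclusion bound, combined with the covering inequality \eqref{eq:covering} with $\eps$ chosen so that $C_{d,s}\lambda\cdot \eps(1-2^{-\alpha d})4^{-d}\ge 1$, dominates the last (negative) sum. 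Adding the kinetic and interaction contributions with the appropriate weight then produces the desired bound with an explicit $N$-independent constant $C_{\rm LT}(d,s,\lambda)>0$.

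The main obstacle in this plan is a purely arithmetic one: juggling the three free parameters $\Lambda$, $\eps$, and the splitting coefficient between the kinetic and the interaction contributions, so that the positive term from the local uncertainty is not destroyed by the negative error that is absorbed through the local exclusion. A subtler point is the $\lambda$-dependence of $C_{\rm LT}$: since the inequality must fail as $\lambda\to 0$ (e.g.\ for bosonic $\Psi_N=u^{\otimes N}$ with $N$ large, by \eqref{eq:GN-intro}), one expects and obtains a behaviour $C_{\rm LT}(d,s,\lambda)\sim \min(\lambda,1)$, reflecting that the repulsion must be genuinely present to prevent bosonic collapse.
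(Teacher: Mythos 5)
Your proposal is correct and follows essentially the same route as the paper: the Lundholm--Solovej scheme with the local uncertainty and covering lemmas used unchanged, and the local exclusion bound extracted from the repulsive interaction via the pointwise estimate $|x_i-x_j|^{-2s}\ge d^{-s}|Q|^{-2s/d}$ on a cube together with Cauchy--Schwarz/Jensen, which is precisely Lemma~\ref{lem:local-ex-int}. The only cosmetic difference is that you immediately weaken the quadratic bound $\bigl[(\int_Q\rho)^2-\int_Q\rho\bigr]_+$ to $\bigl[\int_Q\rho-1\bigr]_+$ (which is all that the covering lemma needs with $q=1$), whereas the paper records the stronger quadratic form in the lemma because it is reused later for the strong-coupling limit.
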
	
This result was first proved by Lundholm and Solovej for ${s=1}$ and ${d=1}$
in \cite{LunSol-13a}.
The extension to  ${s=1}$, ${d>1}$ was done by Lundholm, Portmann, and
Solovej in  \cite{LunPorSol-15}.
The general power ${s>0}$ was treated in \cite{LunNamPor-16}.
The proof in \cite{LunNamPor-16} is based on the strategy in Section
\ref{sec:3.1}, but now the local exclusion bound is derived from the
interaction.  
\begin{lemma}\mbox{\normalfont(Local exclusion by interaction)}
  \label{lem:local-ex-int} 
For all $d\geq 1$, $s > 0$, for every normalized function $\Psi\in
L^2(\R^{dN})$ and for an arbitrary collection of disjoint cubes $Q$'s in
$\R^d$, one has 
$$
\left\langle \Psi , \sum_{1\le i<j \le N}
  \frac{1}{|x_i-x_j|^{2s}} \Psi \right \rangle
\ge \sum_Q \frac{1}{2d^s|Q|^{2s/d}} \left[ \biggl( \int_Q \rho_\Psi \biggr)^2
  - \int_Q \rho_\Psi \right]_+.
$$
\end{lemma}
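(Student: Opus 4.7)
The plan is to localize the interaction pair by pair to a single cube and exploit the elementary geometric fact that the diameter of a cube equals $\sqrt{d}$ times its side length.

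First, I would fix the collection of disjoint cubes $\{Q\}$ and observe that if both $x_i$ and $x_j$ lie in the same cube $Q$, then $|x_i-x_j| \le \sqrt{d}\,|Q|^{1/d}$, hence
\begin{equation*}
  \frac{1}{|x_i-x_j|^{2s}} \ge \frac{1}{d^{s}|Q|^{2s/d}}.
\end{equation*}
Since the cubes are disjoint, for any pair $i<j$ at most one cube contains both particles, so summing the indicator $\1_Q(x_i)\1_Q(x_j)$ over $Q$ yields at most $1$. Dropping the contribution of pairs lying in distinct cubes (which is nonnegative), one obtains pointwise
\begin{equation*}
  \sum_{1\le i<j\le N}\frac{1}{|x_i-x_j|^{2s}} \ge \sum_Q \frac{1}{d^s|Q|^{2s/d}} \sum_{1\le i<j\le N}\1_Q(x_i)\1_Q(x_j).
\end{equation*}

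Next, I would rewrite the pair counting function using the particle number $n_Q(x_1,\dots,x_N):=\sum_{i=1}^N \1_Q(x_i)$, exploiting $\1_Q(x_i)^2=\1_Q(x_i)$, to get
\begin{equation*}
  \sum_{1\le i<j\le N}\1_Q(x_i)\1_Q(x_j) = \frac{1}{2}\bigl(n_Q^2 - n_Q\bigr).
\end{equation*}
Taking the expectation in the state $\Psi$ and applying the Cauchy--Schwarz (or Jensen's) inequality $\langle \Psi,n_Q^2\Psi\rangle \ge \langle \Psi,n_Q\Psi\rangle^2$, together with the identity $\langle \Psi,n_Q\Psi\rangle = \int_Q \rho_\Psi$, gives
\begin{equation*}
  \left\langle \Psi, \sum_{1\le i<j\le N}\1_Q(x_i)\1_Q(x_j)\,\Psi\right\rangle
  \ge \frac{1}{2}\left[\Bigl(\int_Q \rho_\Psi\Bigr)^{\!2} - \int_Q \rho_\Psi\right].
\end{equation*}
Since the left side is manifestly nonnegative, the right side can be replaced by its positive part $[\,\cdot\,]_+$. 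Combining with the pointwise bound above and summing over $Q$ yields the claimed inequality.

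The argument is essentially a two-line calculation once one has the right geometric bound and the moment inequality for $n_Q$. The only step deserving care is bookkeeping: making sure that pairs across different cubes are simply discarded rather than double counted, and that the identity $n_Q^2 - n_Q = 2\sum_{i<j}\1_Q(x_i)\1_Q(x_j)$ is used before taking expectation so that the Cauchy--Schwarz-type improvement $\langle n_Q^2\rangle \ge \langle n_Q\rangle^2$ can be applied. No symmetry or statistics assumption on $\Psi$ is needed, which is consistent with the fact that the exclusion is generated entirely by the repulsive potential.
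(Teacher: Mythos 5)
Your proof is correct and follows essentially the same route as the paper: the pointwise geometric bound $|x_i-x_j|\le\sqrt{d}\,|Q|^{1/d}$ for $x_i,x_j$ in the same cube, the identity $2\sum_{i<j}\1_Q(x_i)\1_Q(x_j)=n_Q^2-n_Q$, and the Cauchy--Schwarz (variance) inequality $\langle n_Q^2\rangle\ge\langle n_Q\rangle^2$ for the normalized state. The paper presents this compactly as an operator estimate plus Cauchy--Schwarz; your unpacking of the bookkeeping is just a fuller version of the same argument.
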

\begin{proof} This result follows from the operator estimate
\begin{align*}  
  \sum_{1\le i<j \le N} \frac{1}{|x_i-x_j|^{2s}}
  &\ge \sum_Q \sum_{1\le i<j \le N}
    \frac{\1_Q(x_i)\1_Q(x_j) }{d^s|Q|^{2s/d}} \\
  &= \sum_Q \frac{1}{2d^s|Q|^{2s/d}}
    \left[\left(\sum_{i=1}^N \1_Q(x_i) \right)^2
    - \sum_{i=1}^N \1_Q(x_i) \right] 
\end{align*}
and the Cauchy--Schwarz inequality.
\end{proof}
	
As explained in \cite{LunNamPor-16}, if $2s<d$, then the Lieb--Thirring
inequality \eqref{eq-LT-potential-s} can be also  derived from the one-body
interpolation inequality  
\begin{align} \label{eq:LT-inter-one-body}
&\;\bigl\langle u, (-\Delta)^s u \bigr\rangle_{L^2(\R^d)}^{1-\frac{2s}{d} } 
  \biggl( \iint_{\R^d \times \R^d}\! \frac{|u(x)|^2\,|u(y)|^2}{|x-y|^{2s}}
  \,\d x\,\d y \biggr)^{\frac{2s}{d}}\nonumber\\
  \ge&\;  C \int_{\R^d} |u(x)|^{2(1+\frac{2s}{d})}\,\d x.
\end{align} 
The inequality \eqref{eq:LT-inter-one-body} was first proved in
\cite{BelOzaVis-11} for ${s=1/2,d=3}$, and extended to all ${0<s<d/2}$ in
\cite{BelFraVis-14} (see \cite{BelGhiMerMorSch-18} for further results in
this direction).
The existence of optimizers for \eqref{eq:LT-inter-one-body} is an interesting
open problem; see \cite{BelFraVis-14} for related discussions.  

It was conjectured in \cite{LunNamPor-16} that in the strong coupling limit
$\lambda \to \infty$, the optimal constant $C_{\rm LT}(s,d,\lambda)$ converges
to the Gagliardo--Nirenberg constant 
\begin{align*}
	C_{\rm GN} (d,s):= \inf_{\substack{u \in H^s(\R^d)\\ 
  \norm{u}_{L^2} = 1}}
  \frac{\bigl\langle u, (-\Delta)^s u \bigr\rangle}
       {\int_{\R^d} |u|^{2 (1+\frac{2s}{d})}}.
\end{align*}
This was proved recently in \cite{KogNam-20}. 
  
\begin{theorem}
{\normalfont(Lieb--Thirring constant in the strong--coupling limit)}
  \label{thm:LT-strong}\\
For any $d\ge 1$ and $s>0$, we have 
 $$
\lim_{\lambda\to \infty} C_{\rm LT}(s,d,\lambda) = C_{\rm GN}(d,s). 
$$	
\end{theorem}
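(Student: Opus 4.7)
Proof Proposal:

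The theorem requires the two matching bounds $\limsup_{\lambda\to\infty} C_{\rm LT}(s,d,\lambda) \le C_{\rm GN}(d,s)$ and $\liminf_{\lambda\to\infty} C_{\rm LT}(s,d,\lambda) \ge C_{\rm GN}(d,s)$. The first is in fact $\lambda$-independent: the plan is to test against product wave functions $\Psi_N(x_1,\ldots,x_N) = \prod_{i=1}^N u(x_i - R_i)$, with $u \in H^s(\R^d)$ compactly supported and $L^2$-normalized, and the centers $R_1,\ldots,R_N$ pairwise $D$-separated with $D$ much larger than the support of $u$. Then the density $\rho_{\Psi_N} = \sum_i |u(\cdot - R_i)|^2$ is a sum of essentially disjoint translates, so that $\int \rho_{\Psi_N}^{1+2s/d} = N \int |u|^{2(1+2s/d)}$, the kinetic energy is exactly $N\langle u,(-\Delta)^s u\rangle$, and the interaction contribution is bounded by $\lambda\binom{N}{2} C D^{-2s}$. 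Sending $D\to\infty$ and then optimizing $u$ over the Gagliardo--Nirenberg variational problem gives $C_{\rm LT}(s,d,\lambda) \le C_{\rm GN}(d,s)$ for every $\lambda > 0$.

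For the lower bound I would adapt the Lundholm--Solovej scheme of Section \ref{sec:3.1}, carefully tracking all constants. Given a normalized $\Psi_N\in H^s(\R^{dN})$ with density $\rho$, apply the covering lemma (Lemma \ref{lem:covering}) with a mass threshold $\Lambda = \Lambda(\lambda)$ chosen so that $\Lambda \to 1^+$ as $\lambda \to \infty$, producing disjoint cubes $\{Q\}$ with $\int_Q \rho \le \Lambda$. In each cube $Q$ the target local estimate is
\[
  \cE_Q[\Psi_N] + \lambda\,\langle \Psi_N, \mathcal{V}_Q \Psi_N\rangle \ge \bigl(C_{\rm GN}(d,s)-\delta(\lambda)\bigr) \int_Q \rho^{1+\frac{2s}{d}},
\]
where $\mathcal{V}_Q := \sum_{i<j} \1_Q(x_i)\1_Q(x_j)|x_i-x_j|^{-2s}$ is the in-cube interaction and $\delta(\lambda)\to 0$. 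For cubes with $\int_Q\rho$ substantially larger than $1$, the local exclusion bound of Lemma \ref{lem:local-ex-int} supplies an interaction contribution of order $\lambda|Q|^{-2s/d}\bigl[(\int_Q\rho)^2 - \int_Q\rho\bigr]_+$, which, for $\lambda$ large, overwhelms any deficit in the Gagliardo--Nirenberg constant and even delivers a far stronger bound. For cubes with $\int_Q\rho\lesssim 1$, the spectral expansion $\gamma_{\Psi_N}=\sum_n \lambda_n |u_n\rangle\langle u_n|$ together with a localized version of the one-body Gagliardo--Nirenberg inequality \eqref{eq:local-uncertainty-Hs-u} (applied with the effective $L^2$-mass $\int_Q\rho$ close to $1$) recovers the sharp constant $C_{\rm GN}(d,s)$ up to a boundary correction tending to zero in the large-cube regime. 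Summing over $\{Q\}$ produces the global bound.

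The principal obstacle is extracting $C_{\rm GN}(d,s)$ \emph{without loss} in the localization step. The local uncertainty inequality \eqref{eq:local-uncertainty} as stated carries a boundary penalty $-|Q|^{-2s/d}\int_Q\rho$, and the weights in the covering lemma are strictly larger than one; a direct concatenation therefore produces a smaller constant. A robust route is to replace sharp cube cutoffs by a smooth IMS-type partition of unity and to absorb the localization error into the interaction term, which becomes arbitrarily cheap for $\lambda\to\infty$. An alternative, and perhaps more transparent, plan is a concentration-compactness argument: if $C_{\rm LT}(s,d,\lambda_k) \le C_{\rm GN}(d,s) - 2\varepsilon$ along some sequence $\lambda_k \to \infty$, extract from quasi-minimizers a profile decomposition of $\rho$ (after suitable translation and rescaling), use the bound $\lambda_k \langle\Psi_{N_k},V_{\rm int}\Psi_{N_k}\rangle = O(1)$ to force $W_k\to 0$ and hence each profile to be supported on an essentially isolated region, and conclude that each profile must satisfy the one-body Gagliardo--Nirenberg inequality with constant $C_{\rm GN}(d,s)$, contradicting the strict inequality.
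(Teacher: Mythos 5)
Your upper bound is correct in spirit, but it is heavier than necessary: since \eqref{eq-LT-potential-s} must in particular hold for $N=1$, where the interaction is absent, the inequality for $N=1$ is exactly the one-body Gagliardo--Nirenberg inequality, and hence $C_{\rm LT}(s,d,\lambda)\le C_{\rm GN}(d,s)$ for every $\lambda>0$ with no further work. The multi-bump trial state gives the same conclusion but is not needed.

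For the lower bound you have correctly diagnosed the central obstruction but have not resolved it, and the workarounds you float are not the route the paper takes. You note that a direct concatenation of Lemma~\ref{lem:covering} (masses in the final cubes differing by a factor up to $2^d$) with the local uncertainty \eqref{eq:local-uncertainty} (boundary penalty $-|Q|^{-2s/d}\int_Q\rho$) cannot yield the sharp constant, and that is precisely the factor-$2^d$ loss the paper attributes to the ``stopping time on cubes'' construction. Also note a concrete mismatch in your choice $\Lambda(\lambda)\to 1^+$: the interaction-based exclusion of Lemma~\ref{lem:local-ex-int} is only nontrivial when $\int_Q\rho>1$, i.e. an effective ``$q=1$,'' which is incompatible with the hypothesis $q\le(1-\eps)\Lambda\,2^{-d}$ of the covering lemma once $\Lambda$ is near $1$. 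The paper's resolution (following \cite{KogNam-20}) is a genuinely new construction: the stopping-time subdivision is applied to \emph{clusters of cubes} rather than to single cubes, so that each final cluster carries essentially at most one particle; a refined version of the local uncertainty principle \eqref{eq:local-uncertainty} is then applied on the clusters to recover $C_{\rm GN}(d,s)$, and the associated localization error is paid for by the interaction energy, which becomes arbitrarily cheap as $\lambda\to\infty$. Your two proposed alternatives --- a smooth IMS partition with the localization error dumped into the interaction, or a concentration-compactness/profile-decomposition argument --- are speculative and not checked; in particular, a profile decomposition of $\rho_{\Psi_N}$ with $N$ unbounded is delicate precisely because the constants must be uniform in $N$, which the paper flags as the main difficulty. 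So the proposal is incomplete at the decisive step, and the key technical idea of the paper (the clustering construction) is absent.
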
	
The heuristic idea behind Theorem \ref{thm:LT-strong} is that in the
strong--coupling limit, each particle is forced to stay away from the others
and the many-body interacting problem reduces to a one-body non-interacting
system.
However, proving this is nontrivial since we have to prove
estimates uniformly in the number of particles.  

The proof of Theorem \ref{thm:LT-strong} in \cite{KogNam-20} is based on a new
construction of covering sub-cubes. 
In Lemma \ref{lem:covering}, the division into sub-cubes follows by a standard
``stopping time argument:''
any cube $Q$ with the mass $\int_Q \rho_\Psi$ bigger than a given quantity
will be divided into $2^d$ sub-cubes.
Consequently, the masses in final sub-cubes may differ up to a factor $2^d$,
leading to a similar factor loss in the Lieb--Thirring constant.
In \cite{KogNam-20}, the stopping time argument is applied to ``clusters of
cubes'' rather than to individual cubes.
At the end, each cluster has essentially at most one particle, allowing us to
recover the constant $C_{\rm GN}(d,s)$ by using  a refined version of the
local uncertainty principle \eqref{eq:local-uncertainty}.
The localization error is compensated by the interaction energy.
This localization argument seems very flexible and may be useful in other
contexts.

\subsection{Further results} \label{sec:3.5} The method represented in Section
\ref{sec:3.1} was originally invented to derive Lieb--Thirring inequalities
for anyons (particles satisfying only some fractional statistics between
bosons and fermions).
See \cite{LunSol-13a,LunSol-13,LunSol-14,LarLun-18,LunSei-18,LunQva-20}
for various results in this direction.   
A similar method was used to prove a Lieb--Thirring inequality for fermionic
particles with point interactions in \cite{FraSei-12}.  

This method is also useful to recover the Hardy--Lieb--Thirring inequality in
\cite{EkhFra-06,FraLieSei-07,Frank-09}, where the kinetic operator is replace
by $(-\Delta)^s-\mathcal{C}_{d,s} |x|^{-2s}$ with $\mathcal{C}_{d,s}$ is the
optimal constant in Hardy's inequality \cite{Herbst-77}.
This requires $2s<d$.
The results in Theorems \ref{thm: LT-int-1}, \ref{thm:LT-strong} were also
extended to the case of Hardy operator; see \cite{LunNamPor-16,KogNam-20}.
The key additional ingredient is the refined Hardy inequality: for all
$s>t>0$ and $\ell>0$, 
$$
(-\Delta)^s - \frac{\mathcal{C}_{s,d}}{|x|^{2s}}
\ge  \ell^{s-t}(-\Delta)^t -C_{d,s,t}\ell^{s} \quad \text{on}~ L^2(\R^d).
$$
This bound was first proved for $s=1/2$, $d=3$ by Solovej, S\o rensen, and
Spitzer  \cite[Lemma 11]{SolSorSpi-10} and then generalized to the full range
$0<s<d/2$ by Frank \cite[Theorem 1.2]{Frank-09}.  

Recently, the fermionic Hardy--Lieb--Thirring inequality has been extended to
include fractional Pauli operators in \cite{BleFou-19}.
It is unclear whether the approach in this section could be adapted to study
this case.  

\medskip

\noindent\textbf{Acknowledgments.}
I would like to thank Rupert Frank, Simon Larson, and Douglas Lundholm for
helpful remarks on a preliminary version of this note.

\end{document}